\providecommand{\tabularnewline}{\\}
\theoremstyle{definition}
\newtheorem{defn}{\protect\definitionname}
\theoremstyle{remark}
\newtheorem{rem}{\protect\remarkname}
\theoremstyle{definition}
 \newtheorem{example}{\protect\examplename}
\theoremstyle{plain}
\newtheorem{prop}{\protect\propositionname}
\theoremstyle{plain}
\newtheorem{thm}{\protect\theoremname}
\theoremstyle{plain}
\newtheorem{lem}{\protect\lemmaname}
\theoremstyle{plain}
\newtheorem{fact}{\protect\factname}
\providecommand{\definitionname}{Definition}
\providecommand{\examplename}{Example}
\providecommand{\factname}{Fact}
\providecommand{\lemmaname}{Lemma}
\providecommand{\propositionname}{Proposition}
\providecommand{\remarkname}{Remark}
\providecommand{\theoremname}{Theorem}
\begin{document}
\title{Observations on Cooperation }
\author{Yuval Heller\thanks{Affiliation: Department of Economics, Bar Ilan University, Israel.
 E-mail: yuval.heller@biu.ac.il.} ~and Erik Mohlin\thanks{Affiliation: Department of Economics, Lund University, Sweden.  E-mail:
erik.mohlin@nek.lu.se. }\thanks{A previous version of this paper was circulated under the title ``Stable
observable behavior.'' We have benefited greatly from discussions
with Vince Crawford, Eddie Dekel, Christoph Kuzmics, Ariel Rubinstein,
Larry Samuelson, Bill Sandholm, Rann Smorodinsky, Rani Spiegler, Balázs
Szentes, Satoru Takahashi, Jörgen Weibull, and Peyton Young. We would
like to express our deep gratitude to seminar/workshop participants
at the University of Amsterdam (CREED), University of Bamberg, Bar
Ilan University, Bielefeld University, University of Cambridge, Hebrew
University of Jerusalem, Helsinki Center for Economic Research, Interdisciplinary
Center Herzliya, Israel Institute of Technology, Lund University,
University of Oxford, University of Pittsburgh, Stockholm School of
Economics, Tel Aviv University, NBER Theory Workshop at Wisconsin-Madison,
KAEA session at the ASSA 2015, the Biological Basis of Preference
conference at Simon Fraser University, and the 6th workshop on stochastic
methods in game theory at Erice, for many useful comments. Danial
Ali Akbari provided excellent research assistance. Yuval Heller is
grateful to the European Research Council for its financial support
(Starting Grant \#677057). Erik Mohlin is grateful to Handelsbankens
forskningsstiftelser (grant \#P2016-0079:1), the Swedish Research
Council (grant \#2015-01751), and the Knut and Alice Wallenberg Foundation
(Wallenberg Academy Fellowship \#2016-0156) for their financial support.
Last but not least, we thank Renana Heller for suggesting the title.}}

\maketitle
Final pre-print of a manuscript published in the \emph{Review of Economic
Studies}, 85(4), 2018, pp. 2253-2282.
\begin{abstract}
We study environments in which agents are randomly matched to play
a Prisoner's Dilemma, and each player observes a few of the partner's
past actions against previous opponents. We depart from the existing
related literature by allowing a small fraction of the population
to be commitment types. The presence of committed agents destabilizes
previously proposed mechanisms for sustaining cooperation. We present
a novel intuitive combination of strategies that sustains cooperation
in various environments. Moreover, we show that under an additional
assumption of stationarity, this combination of strategies is essentially
the \emph{unique }mechanism to support full cooperation, and it is
robust to various perturbations. Finally, we extend the results to
a setup in which agents also observe actions played by past opponents
against the current partner, and we characterize which observation
structure is optimal for sustaining cooperation. 

\textbf{JEL Classification:} C72, C73, D83.\textbf{ Keywords}: Community
enforcement; indirect reciprocity; random matching; Prisoner's Dilemma;
image scoring.
\end{abstract}

\section{Introduction}

Consider the following example of a simple yet fundamental economic
interaction. Alice has to trade with another agent, Bob, whom she
does not know. Both sides have opportunities to cheat, to their own
benefit, at the expense of the other. Alice is unlikely to interact
with Bob again, and thus her ability to retaliate, in case Bob acts
opportunistically, is restricted. The effectiveness of external enforcement
is also limited, e.g., due to incompleteness of contracts, non-verifiability
of information, and court costs. Thus cooperation may be impossible
to achieve. Alice searches for information about Bob's past behavior,
and she obtains anecdotal evidence about Bob's actions in a couple
of past interactions. Alice considers this information when she decides
how to act. Alice also takes into account that her behavior toward
Bob in the current interaction may be observed by her future partners.
Historically, the above-described situation was a challenge to the
establishment of long-distance trade (\citealp{milgrom1990role,greif1993contract}),
and it continues to play an important role in the modern economy,
in both offline (\citealp{bernstein1992opting,dixit2003modes}) and
online interactions (\citealp{resnick2002trust,josang2007survey}).

Several papers have studied the question of how cooperation can be
supported by means of community enforcement. Most of these papers
assume that all agents in the community are rational and, in equilibrium,
best reply to what everyone else is doing. As argued by \citet[p. 578]{ellison1994cooperation},
this assumption may be fairly implausible in large populations. It
seems quite likely that, in a large population, there will be at least
some agents who fail to best respond to what the others are doing,
either because they are boundedly rational, have idiosyncratic preferences,
or because their expectations about other agents' behavior are incorrect.
Motivated by this argument, we allow a few agents in the population
to be committed to behaviors that do not necessarily maximize their
payoffs. It turns out that this seemingly small modification completely
destabilizes existing mechanisms for sustaining cooperation when agents
are randomly matched with new partners in each period. Specifically,
both the contagious equilibria (\citealp{kandori1992social,ellison1994cooperation})
and the ``belief-free'' equilibria (\citealp{takahashi2010community,deb2012cooperation})
fail in the presence of a small fraction of committed agents.\footnote{In contagious equilibria players start by cooperating. If one player
defects at stage $t$, her partner defects at stage $t+1$, infecting
another player who defects at stage $t+2$, and so on. In belief-free
equilibria players are always indifferent between their actions, but
they choose different mixed actions depending on the signal they obtain
about the partner. We discuss the non-robustness of these classes
of equilibria at the end of Section \ref{subsec:Stability-of-Defection}. } 

\emph{Our key results are as follows}. First, we show that always
defecting is the unique perfect equilibrium, regardless of the number
of observed actions, provided that the bonus of defection in the underlying
Prisoner\textquoteright s Dilemma is larger when the partner cooperates
than when the partner defects. Second, in the opposite case, when
the bonus of defection is larger when the partner defects than when
the partner cooperates, we present a novel and essentially unique
combination of strategies that sustains cooperation: all agents cooperate
when they observe no defections and defect when they observe at least
two defections.\footnote{As discussed later, our uniqueness results also rely on an additional
assumption that agents are restricted to choose stationary strategies,
which depend only on the signal about the partner. As shown in Appendix
\ref{sec:Conventional-Repeated-Game}, all other results hold also
in a standard setup without the restriction to stationary strategies. } Some of the agents also defect when observing a single defection.
Importantly, this cooperative behavior is robust to various perturbations,
and it appears consistent with experimental data. Third, we extend
the model to environments in which an agent also obtains information
about the behavior of past opponents against the current partner.
We show that in this setup cooperation can be sustained if and only
if the bonus of defection of a player is less than half the loss she
induces a cooperative partner to suffer. Finally, we characterize
an observation structure that allows cooperation to be supported as
a perfect equilibrium outcome in \emph{all} Prisoner's Dilemma games.
In all observation structures we use the same essentially unique construction
to sustain cooperation.

\paragraph{Overview of the Model}

Agents in an infinite population are randomly matched into pairs to
play the Prisoner's Dilemma game, in which each player decides simultaneously
whether to cooperate or defect (see the payoff matrix in Table \ref{tab:Prisoner-Dilemma-1-1}).
If both players cooperate they obtain a payoff of one, if both defect
they obtain a payoff of zero, and if one of the players defects, the
defector gets $1+g$, while the cooperator gets $-l$, where $g,l>0$
and $g<l+1$. (The latter inequality implies that mutual cooperation
is the efficient outcome that maximizes the sum of payoffs.)
\begin{table}[h]
\caption{\label{tab:Prisoner-Dilemma-1-1}Matrix Payoffs of Prisoner's Dilemma
Games }

\centering{}%
\begin{tabular}{|c|c|c|}
\hline 
 & \emph{c} & \emph{d}\tabularnewline
\hline 
\emph{~~c~~} & \emph{\Large{}$_{\underset{\,}{1}}\,\,\,\,\,\,^{\overset{\,}{1}}$} & \emph{\Large{}$_{\underset{\,}{-l}}\,\,^{\overset{\,}{1+g}}$}\tabularnewline
\hline 
\emph{d} & \emph{\Large{}$_{\underset{\,}{1+g}}\,\,^{\overset{\,}{-l}}$} & \emph{\Large{}$_{0}\,\,\,\,\,\,^{0}$}\tabularnewline
\hline 
\multicolumn{3}{c}{$g,l>0$ ,~ $g<l+1$}\tabularnewline
\end{tabular}
\end{table}

Before playing the game, each agent privately draws a random sample
of $k$ actions that have been played by her partner against other
opponents in the past. The assumption that a small random sample is
taken from the entire history of the partner is intended to reflect
a setting in which the memory of past interactions is long and accurate
but dispersed. This means that the information that reaches an agent
about her partner (through gossip) arrives in a non-deterministic
fashion and may stem from any point in the past.

We require each agent to follow a\emph{ stationary} \emph{strategy},
i.e., a mapping that assigns a mixed action to each signal that the
agent may observe about the current partner. (That is, the action
is not allowed to depend on calendar time or on the agent's own history.)
A \emph{steady state} of the environment is a pair consisting of:
(1) a distribution of strategies with a finite support that describes
the fractions of the population following the different strategies,
and (2) a \emph{signal profile} that describes the distribution of
signals that is observed when an agent is matched with a partner playing
any of the strategies present in the population. The signal profile
is required to be \emph{consistent} with the distribution of strategies
in the sense that a population of agents who follow the distribution
of strategies and observe signals about the partners sampled from
the signal profile will behave in a way that induces the same signal
profile.\footnote{The reason why the consistent signal profile is required to be part
of the description of a steady state, rather than being uniquely determined
by the distribution of strategies, is that our environment, unlike
a standard repeated game, lacks a global starting time that determines
the initial conditions. An example of a strategy that has multiple
consistent signal profiles is as follows. The parameter $k$ is equal
to three, and everyone plays the most frequently observed action in
the sample of the three observed actions. There are three behaviors
that are consistent with this population: one in which everyone cooperates,
one in which everyone defects, and one in which everyone plays (on
average) uniformly. }

Our restriction to stationary strategies and our focus on consistent
steady states allow us to relax the standard assumption that there
is an initial time zero at which an entire community starts to interact.
In various real-life situations, the interactions within the community
have been going on from time immemorial. Consequently the participants
may have only a vague idea of the starting point. Arguably, agents
might therefore be unable to condition their behavior on everything
that has happened since the beginning of the interactions. 

We perturb the environment by introducing $\epsilon$ \emph{committed
agents} who each follow one strategy from an arbitrary finite set
of \emph{commitment strategies}. We assume that at least one of the
commitment strategies is totally mixed, which implies that all signals
(i.e., all sequences of \emph{k} actions) are observed with positive
probability. A \emph{steady state} in a perturbed environment describes
a population in which $1-\epsilon$ of the agents are \emph{normal};\emph{
}i.e.,\emph{ }they play strategies that maximize their long-run payoffs,
while $\epsilon$ of the agents follow commitment strategies. 

We adapt the notions of Nash equilibrium and perfect equilibrium (\citealp{selten1975reexamination})
to our setup. A steady state is a \emph{Nash equilibrium }if no normal
agent can gain in the long run by deviating to a different strategy
(the agents are assumed to be arbitrarily patient). The deviator's
payoff is calculated in the new steady state that emerges following
her deviation. A steady state is a \emph{perfect equilibrium }if it
is the limit of a sequence of Nash equilibria in a converging sequence
of perturbed environments.\footnote{In Appendix \ref{sec:Evolutionary-Stability} we show that all the
equilibria presented in this paper satisfy two additional refinements:
(1) evolutionary stability (\citealp{Maynard-Smith1974theory}) \textendash{}
any small group of agents who jointly deviate are outperformed, and
(2) \emph{robustness} \textendash{} no small perturbation in the distribution
of observed signals can move the population's behavior away from a
situation in which everyone plays the equilibrium outcome. In addition,
most of these equilibria also satisfy the refinement of strict perfection
(\citealp{okada1981stability}) \textendash{} the equilibrium remains
stable with respect to all commitment strategies.}

\paragraph{Summary of Results}

We start with a simple result (Prop. \ref{pro:defection-is-evol-stable})
that shows that defection is a perfect equilibrium outcome for any
number of observed actions. 

We say that a Prisoner's Dilemma game is \emph{offensive }if\emph{
}there is a stronger incentive to defect against a cooperator than
against a defector (i.e., $g>l$); in a \emph{defensive} Prisoner\textquoteright s
Dilemma\emph{ }the opposite holds (i.e., $g<l$). Our first main result
(Theorem \ref{thm:only-defection-is-stable}) shows that always defecting
is the unique perfect equilibrium in any offensive Prisoner's Dilemma
game (i.e., $g>l$) for any number of observed actions. The result
assumes a mild \emph{regularity} condition on the set of commitment
strategies (Def. \ref{def:regularity-1}), namely, that this set is
rich enough that, in any steady state of the perturbed environment,
at least one of the commitment strategies induces agents to defect
with a different probability than that of some of the normal agents.
The intuition is as follows. The mild assumption that not all agents
defect with exactly the same probability implies that the signal that
Alice observes about her partner Bob is not completely uninformative.
In particular, the more often Alice observes Bob to defect, the more
likely Bob will defect against Alice. In offensive games, it is better
to defect against partners who are likely to cooperate than to defect
against partners who are likely to defect. This implies that a deviator
who always defects is more likely to induce normal partners to cooperate.
Consequently, such a deviator will outperform any agent who cooperates
with positive probability. 

Theorem \ref{thm:only-defection-is-stable} may come as a surprise
in light of a number of existing papers that have presented various
equilibrium constructions that support cooperation in any Prisoner's
Dilemma game that is played in a population of randomly matched agents.
Our result demonstrates that, in the presence of a small fraction
of committed agents, the mechanisms that have been proposed to support
cooperation fail, regardless of how these committed agents play (except
in the ``knife-edge'' case of $g=l$; see \citealp{dilme2016helping}
and Remark \ref{enu:The-threshold-case} in Section \ref{subsec:Stability-of-Cooperation}).
Thus, our paper provides an explanation of why experimental evidence
suggests that subjects' behavior corresponds neither to contagious
equilibria (see, e.g., \citealp{duffy2009cooperative}) nor to belief-free
equilibria (see, e.g., \citealp{matsushima2013behavioral}). The empirical
predictions of our model are discussed in Appendix \ref{sec:Empirical-Predictions}.

Our second main result (Theorem \ref{thm:cooperation-defensive-PDs})
shows that cooperation is a perfect equilibrium outcome in any defensive
Prisoner's Dilemma game ($g<l$) when players observe at least two
actions. Moreover, there is an essentially unique distribution of
strategies that support cooperation, according to which: (a) all agents
cooperate when observing no defections, (b) all agents defect when
observing at least 2 defections, (c) the normal agents defect with
an average probability of $0<q<1$ when observing a single defection.
The intuition for the result is as follows. Defection yields a direct
gain that is increasing in the partner's probability of defection
(due to the game being defensive). In addition, defection results
in an indirect loss because it induces future partners to defect when
they observe the current defection. This indirect loss is independent
of the current partner's behavior. One can show that there always
exists a probability $q$ such that the above distribution of strategies
balances the direct gain and the indirect loss of defection, conditional
on the agent observing a single defection. Furthermore, cooperation
is the unique best reply conditional on the agent observing no defections,
and defection is the unique best reply conditional on the agent observing
at least two defections.

Next, we analyze the case of the observation of a single action (i.e.,
$k=1$). Proposition \ref{prop:observing-single-action-1} shows that
cooperation is a perfect equilibrium outcome in a defensive Prisoner's
Dilemma if and only if the bonus of defection is not too large (specifically,
$g\leq1)$. The intuition is that similar arguments used to obtain
the result above imply that there exists a unique mean probability
$q<1$ by which agents defect when observing a defection in any cooperative
perfect equilibrium. This implies that a deviator who always defects
succeeds in getting a payoff of $1+g$ in a fraction $1-q$ of the
interactions, and that such a deviator outperforms the incumbents
if $g$ is too large. 

\paragraph{Observations Based on Action Profiles}

So far we have assumed that each agent observes only the partner's
(Bob's) behavior against other opponents, but that she cannot observe
the behavior of the past opponents against Bob. In Section \ref{sec:General-Observation-Structures}
we relax this assumption. Specifically, we study three observation
structures: the first two seem to be empirically relevant, and the
third one is theoretically important since it allows us to construct
an equilibrium that sustains cooperation in all Prisoner's Dilemma
games.
\begin{enumerate}
\item \emph{Observing conflicts}: Each agent observes, in each of the $k$
sampled interactions of her partner, whether there was mutual cooperation
(i.e., no conflict: both partners are ``happy'') or not (i.e., partners
complain about each other, but it is too costly for an outside observer
to verify who actually defected). Such an observation structure (which
we have not seen described in the existing literature) seems like
a plausible way to capture non-verifiable feedback about the partner's
behavior. 
\item \emph{Observing action profiles: }Each agent observes the full action
profile in each of the sampled interactions.
\item \emph{Observing actions against cooperation: }Each agent observes,
in each of the sampled interactions, what action the partner took
provided that the partner's opponent cooperated. If the partner's
opponent defected then there is no information about what the partner
did.
\end{enumerate}
~~~~It turns out that the stability of cooperation in the first
two observation structures crucially depends on a novel classification
of Prisoner's Dilemma games. We say that a Prisoner's Dilemma game
is \emph{acute} if $g>\frac{l+1}{2}$, and \emph{mild} if $g<\frac{l+1}{2}$.
The threshold between the two categories, namely, $g=\frac{l+1}{2}$,
is characterized by the fact that the gain from a single unilateral
defection is exactly half the loss incurred by the partner who is
the sole cooperator. Consider a setup in which an agent is deterred
from unilaterally defecting because it induces future partners to
unilaterally defect against the agent with some probability. Deterrence
in acute Prisoner\textquoteright s Dilemmas requires this probability
to be more than 50\%, while a probability of below 50\% is enough
to deter deviations in mild PDs. Figure \ref{fig:Classification-of-Prisoner's}
(in Section \ref{subsec:Acute-and-Mild}) illustrates the classification
of Prisoner's Dilemma games.

Our next results (Theorems \ref{thm:stable-cooperation-observing-conflicts}\textendash \ref{thm:stable-cooperation-observing-action-profiles})
show that in both observation structures (conflicts or action profiles,
and any $k\geq2$) cooperation is a perfect equilibrium outcome if
and only if the underlying Prisoner's Dilemma game is mild. Moreover,
cooperation is supported by essentially the same unique behavior as
in Theorem \ref{thm:cooperation-defensive-PDs}. The intuition for
why cooperation cannot be sustained in acute games with observation
of conflicts is as follows. In order to support cooperation agents
should be deterred from defecting against cooperators. As discussed
above, in acute games, such deterrence requires that each such defection
induce future partners to defect with a probability of at least 50\%.
However, this requirement implies that defection is contagious: each
defection by an agent makes it possible that future partners observe
a conflict both when being matched with the defecting agent and when
being matched with the defecting agent's partner. Such future partners
defect with a probability of at least $50\%$ when making such observations.
Thus the fraction of defections grows steadily, until all normal agents
defect with high probability.

The intuition for why cooperation cannot be sustained in acute games
with observation of action profiles is as follows. The fact that deterring
defections in acute games requires future partners to defect with
a probability of at least 50\% when observing a defection implies
that when an agent (Alice) observes her partner (Bob) to defect against
a cooperative opponent, then Bob is more likely to do so because he
is a normal agent who observed his past opponent to defect than because
Bob is a committed agent. This implies that Alice puts a higher probability
on Bob defecting against her if she observes Bob to have defected
against a partner who also defected than she does if she observes
Bob to have defected against an opponent who cooperated. Thus, defecting
is the unique best reply when observing the partner defect against
a defector, but it removes the incentives required to support stable
cooperation.

Finally, we show that the third observation structure, \emph{observing
actions against cooperation}, is optimal in the sense that it sustains
cooperation as a perfect equilibrium outcome for any Prisoner\textquoteright s
Dilemma game (Theorem \ref{thmtertiary-observation}). The intuition
for this result is that not allowing Alice to observe Bob's behavior
against a defector helps to sustain cooperation because it implies
that defecting against a defector does not have any negative indirect
effect (in any steady state) because it is never observed by future
opponents. This encourages agents to defect against partners who are
more likely to defect (regardless of the values of $g$ and $l$).

\paragraph{Conventional Model and Unrestricted Strategies}

In Appendix \ref{sec:Conventional-Repeated-Game}, we relax the assumption
that agents are restricted to choosing only stationary strategies.
We present a conventional model of repeated games with random matching
that differs from the existing literature only by our introducing
a few committed agents. We show that this difference is sufficient
to yield most of our key results.

Specifically, the characterization of the conditions under which cooperation
can be sustained as a perfect equilibrium outcome (as summarized in
Table \ref{tab:Prisoner-Dilemma-1-1} in Section \ref{subsec:Analysis-of-the-mild-acute})
holds also when agents are not restricted to stationary strategies,
and even when agents observe the most recent past actions of the partner.
On the other hand, the relaxation of the stationarity assumption in
Appendix \ref{sec:Conventional-Repeated-Game} weakens the uniqueness
results of the main model in two respects: (1) rather than showing
that defection is the unique equilibrium outcome in offensive games,
we show only that it is impossible to sustain full cooperation in
such games; and (2) while a variant of the simple strategy of the
main model still supports cooperation when the set of strategies is
unrestricted, we are no longer able to show that this strategy is
the unique way to support full cooperation.

\paragraph{Structure}

Section \ref{sec:Model} presents the model. Our solution concept
is described in Section \ref{sec:Solution-Concept}. Section \ref{sec:Main-Results}
studies the observation of actions. Section \ref{sec:General-Observation-Structures}
extends the model to deal with general observation structures. We
discuss the related literature in Section \ref{sec:Related-Literature},
and conclude in Section \ref{subsec:Conclusion-and-Directions}. Appendix
\ref{sec:Conventional-Repeated-Game} adapts our key result to a conventional
model with an unrestricted set of strategies. Appendix \ref{sec:Empirical-Predictions}
discusses our empirical predictions. Appendix \ref{sec:Additional-Notation-and}
presents technical definitions. In Appendix \ref{sec:Evolutionary-Stability}
we present the refinements of strict perfection, evolutionary stability,
and robustness. The formal proofs appear in Appendix \ref{sec:Proofs}.
Appendix \ref{subsec:Cheap-Talk} studies the introduction of cheap
talk to our setup. The appendices are available online in the supplementary
material.

\section{Stationary Model\label{sec:Model}}

\subsection{Environment}

We model an environment in which patient agents in a large population
are randomly matched in each round to play a two-player symmetric
one-shot game. For tractability we assume throughout the paper that
the population is a continuum.\footnote{\label{fn:continuum-population}The results can be adapted to a setup
with a large finite population. We do not formalize a large finite
population, as this adds much complexity to the model without giving
substantial new insights. Most of the existing literature also models
large populations as continua (see, e.g., \citealp{Rubinstein-Wolinsky,weibull1995evolutionary,dixit2003modes,herold2009evolutionary,sakovics2012matters,alger2013homo}).
\citet{kandori1992social} and \citet{ellison1994cooperation} show
that large finite populations differ from infinite populations because
only the former can induce contagious equilibria. However, as noted
by \citet[p. 578]{ellison1994cooperation}, and as discussed in Section
\ref{subsec:Stability-of-Defection}, these contagious equilibria
fail in the presence of a single ``crazy'' agent who always defects.} We further assume that the agents are infinitely lived and do not
discount the future (i.e., they maximize the average per-round long-run
payoff). Alternatively, our model can be interpreted as representing
interactions between finitely lived agents who belong to infinitely
lived dynasties, such that an agent who dies is succeeded by a protégé
who plays the same strategy as the deceased mentor, and each agent
observes $k$ random actions played by the partner's dynasty.

Before playing the game, each agent (she) privately observes $k$
random actions that her partner (he) played against other opponents
in the past. As described in detail below, agents are restricted to
using only stationary strategies, such that each agent's behavior
depends only on the signal about the partner, and not on the agent's
own past play or on time. Thus, if all agents observe signals that
come from a stationary distribution then the agents' behavior will
result in a well-defined aggregate distribution of actions that is
also stationary. We focus on steady states of the population, in which
the distribution of actions, and hence the distribution of signals,
is indeed stationary. In such steady states, the $k$ actions that
an agent observes about her partner are drawn independently from the
partner's stationary distribution of actions. This sampling procedure
may be interpreted as the limit of a process in which each agent randomly
observes $k$ actions that are uniformly sampled from the last $n$
interactions of the partner, as $n\rightarrow\infty$. 

To simplify the notation, we assume that the underlying game has two
actions, though all our concepts are applicable to games with any
finite number of actions. An \emph{environment} is a pair $E=\left(G,k\right)$,
where $G=\left(A=\left\{ c,d\right\} ,\pi\right)$ is a two-player
symmetric normal-form game, and $k\in\mathbb{N}$ is the number of
observed actions. Let $\pi:A\times A\rightarrow\mathbb{R}$ be the
payoff function of the underlying game. We refer to action $c$ (resp.,
$d$) as \emph{cooperation} (resp., \emph{defection}), since we will
focus on the Prisoner's Dilemma in our results. Let $\Delta\left(A\right)$
denote the set of mixed actions (distributions over $A$), and let
$\pi$ be extended to mixed actions in the usual linear way. We use
the letter $a$ (resp., $\alpha$) to denote a typical pure (mixed)
action. With a slight abuse of notation let $a\in A$ also denote
the element in $\Delta\left(A\right)$ that assigns probability 1
to $a$. We adopt this convention for all probability distributions
throughout the paper. 

\subsection{Stationary Strategy\label{subsec:Stationary-Strategy} }

The signal observed about the partner is the number of times he played
each action $a\in A$ in the sample of $k$ observed actions. Let
$M=\left\{ 0,...,k\right\} $ denote the set of feasible signals,
where signal $m\in M$ is interpreted as the number of times that
the partner defected in the sampled $k$ observations.\footnote{We do not allow agents to manipulate the observed signals. In our
companion paper (\citealp{heller-mohlin}) we study a related setup
in which agents are allowed to exert effort in deception by influencing
the signal observed by the opponent.}

Given a distribution of actions $\alpha\in\Delta\left(A\right)$ and
an environment $E=\left(G,k\right)$, let $\nu_{\alpha}\left(m\right)$
be the probability of an agent observing signal $m$ conditional on
being matched with a partner who plays on average the distribution
of actions $\alpha$. That is, $\nu\left(\alpha\right):=\nu_{\alpha}\in\Delta\left(M\right)$
is a binomial signal distribution that describes a sample of $k$
i.i.d. actions, where each action is distributed according to $\alpha$:
\begin{equation}
\forall\left(m\right)\in M,\,\,\,\,\nu_{\alpha}\left(m\right)=\frac{k!\cdot\left(\alpha\left(d\right)\right)^{m}\cdot\left(\alpha\left(c\right)\right)^{\left(k-m\right)}}{m!\cdot\left(k-m\right)!}.\label{eq:multinomial}
\end{equation}

A \emph{stationary strategy} (henceforth, \emph{strategy)} is a mapping
\emph{$s:M\rightarrow\Delta\left(A\right)$} that assigns a mixed
action to each possible signal. Let $s_{m}\in\Delta\left(A\right)$
denote the mixed action assigned by strategy $s$ after observing
signal $m$. That is, for each action $a\in A$, $s_{m}\left(a\right)=s\left(m\right)\left(a\right)$
is the probability that a player who follows strategy $s$ plays action
$a$ after observing signal $m$. We also let $a$ denote the strategy
$s$ that plays action $a$ regardless of the signal, i.e., $s_{m}\left(a\right)=1$
for all $m\in M$. Strategy $s$ is \emph{totally mixed}, if for each
action $a\in A$, and signal $m\in M$ $s_{m}\left(a\right)>0$. Let
$\mathcal{S}$ denote the set of all strategies. Given strategy $s$
and distribution of signals $\nu\in\Delta\left(M\right)$, let $s\left(\nu\right)\in\Delta\left(A\right)$
be the distribution of actions played by an agent who follows strategy
$s$ and observes a signal sampled from $\nu$:
\[
\forall a\in A,\,\,\,\,s\left(\nu\right)\left(a\right)=\sum_{m\in M}\nu\left(m\right)\cdot s_{m}\left(a\right).
\]

\subsection{Signal Profile and Steady State}

Fix an environment and a finite set of strategies $S$. A \emph{signal
profile $\theta:S\rightarrow\Delta\left(M\right)$ is }a function
that assigns a distribution of signals for each strategy in $S$.
We interpret $\theta_{s}\left(m\right)$ as the probability that signal
$m$ is observed when a partner playing strategy $s$ is encountered.
Let $O_{S}$ be the set of all signal profiles defined over $S$.
Given a strategy $\sigma\in\Delta\left(S\right)$ and a signal profile
$\theta\in O_{S}$, let $\theta_{\sigma}\in\Delta\left(M\right)$
be the \emph{average distribution of signals in the population}, i.e.,
$\theta_{\sigma}\left(m\right):=\sum_{s\in S}\sigma\left(s\right)\cdot\theta_{s}\left(m\right)$. 

We say that a signal profile $\theta:S\rightarrow\Delta\left(M\right)$
is \emph{consistent} with distribution of strategies $\sigma\in\Delta\left(S\right)$
if 
\begin{equation}
\forall m\in M,\,\,s\in S,\,\,\,\,\theta_{s}\left(m\right)=\nu\left(s\left(\theta_{\sigma}\right)\right)\left(m\right).\label{eq:transformation-between-signal=00003Dprofiles-2}
\end{equation}

The interpretation of the consistency requirement is that a population
of agents who follow the distribution of strategies $\sigma$ and
observe signals about the partners sampled from the profile $\theta$
have to behave in a way that induces the same profile of signal distributions
$\theta$. Specifically, when Alice, who follows strategy $s$, is
being matched with a random partner whose strategy is sampled according
to $\sigma$, she observes a random signal according to the ``current''
average distribution of signals in the population $\theta_{\sigma}$.
As a result her distribution of actions is $s\left(\theta_{\sigma}\right)$,
and thus her behavior induces the signal distribution $\nu\left(s\left(\theta_{\sigma}\right)\right)$.
Consistency requires that this induced signal distribution coincide
with $\theta_{s}$.

A steady state is a triple consisting of (1) a finite set of strategies
$S$ interpreted as the strategies that are played by the agents in
the population, (2) a distribution $\sigma$ over $S$ interpreted
as a description of the fraction of agents following each strategy,
and (3) a consistent signal profile $\theta:S\rightarrow\Delta\left(M\right)$.
Formally:
\begin{defn}
\label{def:state-unperturbed}A \emph{steady state }(or \emph{state}
for short) of an environment \emph{$\left(G,k\right)$} is a triple
$\left(S,\sigma,\theta\right)$ where $S\subseteq\mathcal{S}$ is
a finite set of strategies, $\sigma\in\Delta\left(S\right)$ is a
distribution with full support over $S$, and $\theta:S\rightarrow\Delta\left(M\right)$
is a consistent signal profile.
\end{defn}
When the set of strategies is a singleton, i.e., $S=\left\{ s\right\} $,
we omit the degenerate distribution assigning a mass of one to $s$,
and we write the steady state as a pair $\left(\left\{ s\right\} ,\theta\right).$
We adopt this convention, of omitting reference to degenerate distributions,
throughout the paper. 

A standard argument shows that any distribution of strategies admits
a consistent signal profile (Lemma \ref{lem:existance_consitent_outcomes-1-1}
in Appendix \ref{sec:Additional-Notation-and}). Some distributions
induce multiple consistent profiles of signal distributions. For example,
suppose that $k=3$, and everyone follows the strategy of playing
the most frequently observed action (i.e., defecting iff $m\geq2$).
In this setting there are three consistent signal profiles: one in
which everyone cooperates, one in which everyone defects, and one
in which everyone plays (on average) uniformly\footnote{In a companion paper (\citealp{heller2017social}) we study in a broader
setup necessary and sufficient conditions for a strategy distribution
admitting a unique consistent signal profile.}.

\subsection{Perturbed Environment}

As discussed in the Introduction, and as argued by \citet[p. 578]{ellison1994cooperation},
it seems implausible that in large populations all agents are rational
and know exactly the strategies played by other agents in the community.
Motivated by this observation, we introduce the notion of a perturbed
environment in which a small fraction of agents in the population
are committed to playing specific strategies, even though these strategies
are not necessarily payoff-maximizing.

A perturbed environment is a tuple consisting of (1) an environment,
(2) a distribution $\lambda$ over a set of commitment strategies
$S^{C}$ that includes a totally mixed strategy, and (3) a number
$\epsilon$ representing the share of agents who are committed to
playing strategies in $S^{C}$ (henceforth, \emph{committed agents}).
The remaining $1-\epsilon$ share of the agents can play any strategy
in $\mathcal{S}$ (henceforth, \emph{normal agents}). Formally:
\begin{defn}
\label{def:A-perturbed-environment}A \emph{perturbed} \emph{environment}
is a tuple $E_{\epsilon}=\left(\left(G,k\right),\left(S^{C},\lambda\right),\epsilon\right)$,
where $G$ is the underlying game, $k\in\mathbb{N}$ is the number
of observed actions, $S^{C}$ is a non-empty finite set of strategies
(called, \emph{commitment strategies}) that includes a totally mixed
strategy, $\lambda\in\Delta\left(S^{C}\right)$ is a distribution
with full support over the commitment strategies, and $\epsilon\geq0$
is the mass of committed agents in the population.

We require $S^{C}$ to include at least one totally mixed strategy
because we want all signals to be observed with positive probability
in a perturbed environment when $\epsilon>0$. (This is analogous
to the requirement in \citealp{selten1975reexamination}, that all
actions be played with positive probability in the perturbations defining
a perfect equilibrium.)

Throughout the paper we look at the limit in which the share of committed
agents, $\epsilon$, converges to zero. This is the only limit taken
in the paper. We use the notation of $O\left(\epsilon\right)$ (resp.,
$O\left(\epsilon^{2}\right)$) to refer to functions that are in the
order of magnitude of $\epsilon$ (resp., $\epsilon^{2}$), i.e.,
$\frac{f\left(\epsilon\right)}{\epsilon}\rightarrow_{\epsilon\rightarrow0}0$
(resp., $\frac{f\left(\epsilon\right)}{\epsilon^{2}}\rightarrow_{\epsilon\rightarrow0}0$).

We refer to $\left(S^{C},\lambda\right)$ as a \emph{distribution
of commitments}. With a slight abuse of notation, we identify an \emph{unperturbed
environment} $\left(\left(G,k\right),\left(S^{C},\lambda\right),\epsilon=0\right)$
with the equivalent environment $\left(G,k\right)$.
\end{defn}
\begin{rem}
To simplify the presentation, the definition of perturbed environment
includes only commitment strategies, and it does not allow ``trembling
hand'' mistakes. As discussed in Remark \ref{enu:General-Noise-Structures:}
in Section \ref{subsec:Stability-of-Cooperation}, the results also
hold in a setup in which agents also tremble, as long as the probability
by which a normal agent trembles is of the same order of magnitude
as the frequency of committed agents.
\end{rem}
One of our main results (Theorem \ref{thm:only-defection-is-stable})
requires an additional mild assumption on the perturbed environment
that rules out the knife-edge case in which all agents (committed
and non-committed alike) behave exactly the same. Specifically, a
set of commitments is regular if for each distribution of actions
$\alpha$, there exists a committed strategy $s$ that does not play
distribution $\alpha$ when observing the signal distribution induced
by $\alpha$. Formally:
\begin{defn}
\label{def:regularity-1}A set of commitment strategies $S^{C}$ is
\emph{regular} if for each distribution of actions $\alpha\in\Delta\left(A\right)$,
there exists a strategy $s\in S^{C}$ such that $s_{\nu\left(\alpha\right)}\neq\alpha$.
\end{defn}
If the set of commitments is regular, then we say that the distribution
$\left(S^{C},\lambda\right)$ and the perturbed environment $\left(\left(G,k\right),\left(S^{C},\lambda\right),\epsilon\right)$
are regular. An example of a regular set of commitments is the set
that includes strategies $s\equiv\alpha_{1}$ and $s'\equiv\alpha_{2}$
that induce agents to play mixed actions $\alpha_{1}\neq\alpha_{2}$
regardless of the observed signal.

\subsection{Steady State in a Perturbed Environment}

Fix a perturbed environment $E_{\epsilon}=\left(\left(G,k\right),\left(S^{C},\lambda\right),\epsilon\right)$
and a finite set of strategies $S^{N}$, interpreted as the strategies
followed by the normal agents in the population. We redefine a \emph{signal
profile $\theta:S^{C}\cup S^{N}\rightarrow\Delta\left(M\right)$ }as
a function that assigns a binomial distribution of signals to each
strategy in $S^{C}\cup S^{N}$.

Given a distribution over strategies of the normal agents $\sigma\in\Delta\left(S^{N}\right)$
and a signal profile $\theta\in O_{S^{C}\cup S^{N}}$, let $\theta_{\left(\left(1-\epsilon\right)\cdot\sigma+\epsilon\cdot\lambda\right)}\in\Delta\left(M\right)$
be the \emph{average distribution of signals in the population}, i.e.,
$\theta_{\left(\left(1-\epsilon\right)\cdot\sigma+\epsilon\cdot\lambda\right)}\left(m\right):=\sum_{s\in S^{C}\cup S^{N}}\left(\left(1-\epsilon\right)\cdot\sigma+\epsilon\cdot\lambda\right)\left(s\right)\cdot\theta_{s}\left(m\right)$.
We adapt the definitions of a consistent signal profile and of a steady
state to perturbed environments. This straightforward adaptation is
presented in detail in Appendix \ref{sec:Additional-Notation-and}.

The following example demonstrates a specific steady state in a specific
perturbed environment. The example is intended to clarify the various
definitions of this section and, in particular, the consistency requirement.
Later, we revisit the same example to explain the essentially unique
perfect equilibrium that supports cooperation.
\begin{example}
\label{exa:revisiting-example}Consider the perturbed environment
\emph{$\left(\left(G,k=2\right),\left(\left\{ s^{u}\equiv0.5\right\} \right),\epsilon\right)$,
}in which each agent observes two of her partner's actions, there
is a single commitment strategy, denoted by $s^{u}$, which is followed
by a fraction $0<\epsilon<<1$ of committed agents, who choose each
action with probability $0.5$ regardless of the observed signal.
Let $\left(S=\left\{ s^{1},s^{2}\right\} ,\sigma=\left(\frac{1}{6},\frac{5}{6}\right),\theta\right)$
be the following steady state. The state includes two normal strategies:
$s^{1}$ and $s^{2}$. The strategy $s^{1}$ defects iff $m\geq1$,
and the strategy $s^{2}$ defects iff $m\geq2$. The distribution
$\sigma$ assigns a mass of $\frac{1}{6}$ to $s^{1}$ and a mass
of $\frac{5}{6}$ to $s^{2}$. The consistent signal profile $\theta$
is defined as follows (neglecting terms of $O\left(\epsilon^{2}\right)$
throughout the example): 
\begin{equation}
\theta_{s^{u}}\left(m\right)=\begin{cases}
25\% & if\,m=0\\
50\% & if\,m=1\\
25\% & if\,m=2,
\end{cases}\,\,\,\,\,\theta_{s^{1}}\left(m\right)=\begin{cases}
1-3.5\cdot\epsilon & if\,m=0\\
3.5\cdot\epsilon & if\,m=1\\
0 & if\,m=2
\end{cases}\,\,\,\,\,\theta_{s^{2}}\left(m\right)=\begin{cases}
1-0.5\cdot\epsilon & if\,m=0\\
0.5\cdot\epsilon & if\,m=1\\
0 & if\,m=2.
\end{cases}\label{eq:eta-star-example}
\end{equation}
To confirm the consistency of $\theta$, we have first to calculate
the average distribution of signals in the population:

\[
\theta_{\left(\left(1-\epsilon\right)\cdot\sigma+\epsilon\cdot\lambda\right)}\left(m\right)=\begin{cases}
1-1.75\cdot\epsilon & if\,m=0\\
1.5\cdot\epsilon & if\,m=1\\
0.25\cdot\epsilon & if\,m=2.
\end{cases}
\]
Using $\theta_{\left(\left(1-\epsilon\right)\cdot\sigma+\epsilon\cdot\lambda\right)}$,
we confirm the consistency of $\theta_{s^{1}}$ and $\theta_{s^{2}}$
(the consistency of $\theta_{s^{u}}$is immediate). We do so by calculating
distribution of actions played by a player following strategy $s_{i}$
who observes the distribution of actions of a random partner:

\[
\begin{array}{ccc}
s^{1}\left(\theta_{\left(\left(1-\epsilon\right)\cdot\sigma+\epsilon\cdot\lambda\right)}\right)\left(c\right) & = & 1-1.75\cdot\epsilon\\
s^{1}\left(\theta_{\left(\left(1-\epsilon\right)\cdot\sigma+\epsilon\cdot\lambda\right)}\right)\left(d\right) & = & 1.75\cdot\epsilon
\end{array}\,\,\,\,\,\,\,\,\,\begin{array}{ccc}
s^{2}\left(\theta_{\left(\left(1-\epsilon\right)\cdot\sigma+\epsilon\cdot\lambda\right)}\right)\left(c\right) & = & 1-0.25\cdot\epsilon,\\
s^{2}\left(\theta_{\left(\left(1-\epsilon\right)\cdot\sigma+\epsilon\cdot\lambda\right)}\right)\left(d\right) & = & 0.25\cdot\epsilon.
\end{array}
\]

Note that $s^{1}\left(\theta_{\left(\left(1-\epsilon\right)\cdot\sigma+\epsilon\cdot\lambda\right)}\right)\left(d\right)=1-\theta_{\left(\left(1-\epsilon\right)\cdot\sigma+\epsilon\cdot\lambda\right)}\left(2\cdot c\right)$
and $s^{2}\left(\theta_{\left(\left(1-\epsilon\right)\cdot\sigma+\epsilon\cdot\lambda\right)}\right)\left(d\right)=\theta_{\left(\left(1-\epsilon\right)\cdot\sigma+\epsilon\cdot\lambda\right)}\left(2\cdot d\right)$.
The final step in showing that $\theta$ is a consistent profile is
the observation that each $\theta_{s^{i}}$ coincides with the binomial
distribution that is induced by $s^{i}\left(\theta_{\left(\left(1-\epsilon\right)\cdot\sigma+\epsilon\cdot\lambda\right)}\right)$.
\end{example}

\subsection{Discussion of the Model\label{subsec:model-Discussion}}

Our model differs from most of the existing literature on community
enforcement in three key dimensions (see, e.g., \citealp{kandori1992social,ellison1994cooperation,dixit2003modes,deb2012cooperation,deb2012community}).
In what follows we discuss these three key differences, and their
implications on our results.
\begin{enumerate}
\item \emph{The presence of a few committed agents}. If one removes the
commitment types from our setup, then one can show (by using belief-free
equilibria, as in \citealp{takahashi2010community}) that: (1) it
is always possible to support full cooperation as an equilibrium outcome,
and (2) there are various strategies that sustain full cooperation.
The results of this paper show that the introduction of a few committed
agents, regardless of how they behave, implies very different results:
(1) defection is the \emph{unique }equilibrium payoff in offensive\emph{
}Prisoner's Dilemmas (Theorem \ref{thm:only-defection-is-stable}),
and (2) there is an essentially unique strategy combination that supports
a cooperative equilibrium in defensive Prisoner's Dilemmas. The intuition
is that the presence of committed agents implies that observation
of past actions must have some influence on the likely behavior of
the partner in the current match (more detailed discussions of this
issue follow Theorem \ref{thm:only-defection-is-stable} and Remark
10).
\item \emph{Restriction to Stationary Strategies}. In our model we restrict
agents to using stationary strategies that condition only on the number
of times they observed each of the partner's actions being played
in past interactions. We allow agents to condition their play neither
on the order in which the observed actions were played in the past,
nor on the agent's own history of play, nor on calendar time. The
assumption simplifies the presentation of the model and results. In
addition, the assumption allows us to achieve uniqueness results that
might not hold without stationarity (as discussed in Section \ref{subsec:Discussion-of-the-results-repeated-games}). 
\item \emph{Not having a ``global time zero.'' }Most of the existing literature
represents interactions within a community as a repeated game that
has a ``global time zero,'' in which the first ever interaction
takes place. In many real-life situations, the interactions within
a community began a long time ago and have continued, via overlapping
generations, to the present day. It seems implausible that today's
agents condition their behavior on what happened in the remote past
(or on calendar time). For example, trade interactions have been been
taking place from time immemorial. It seems unreasonable to assume
that Alice's behavior today is conditioned on what transpired in some
long-forgotten time $t=0$, when, say, two hunter-gatherers were involved
in the first ever trade. We suggest that, even though real-world interactions
obviously begin at some definite date, a good way of modeling what
the interacting agents think about the situation may be to get rid
of global time zero and focus on strategies that do not condition
on what happened in the remote past. The lack of a global time zero
is the reason why, unlike in repeated games, a distribution of strategies
does not uniquely determine the behavior and the payoffs of the agent,
so that one must explicitly add the consistent signal profile $\theta$
as part of the description of the state of the population. \\
It is possible to interpret a steady state $\left(S,\sigma,\theta\right)$
as a kind of initial condition for society, in which agents already
have a long-existing past. That is, we begin our analysis of community
interaction at a point in time when agents have for a long time followed
the strategy distribution $\left(S,\sigma\right)$ yielding the consistent
signal profile $\theta$. We then ask whether any patient agent has
a profitable deviation from her strategy. If not, then the steady
state $\left(S,\sigma,\theta\right)$ is likely to persist. This approach
stands in contrast to the standard approach that studies whether or
not agents have a profitable deviation at a time $t>>1$ following
a long history that started with the first ever interaction at $t=0$.
\end{enumerate}
\emph{In Appendix \ref{sec:Conventional-Repeated-Game} we present
a conventional repeated game  model that differs from the existing
literature in only one key aspect: the presence of a few committed
agents.} In particular, this alternative model features standard calendar
time, and agents discount the future, observe the most recent past
actions of the partner, and are not limited to choosing only stationary
strategies. We show that most of our results hold also in this setup.
We feel that this alternative model, while being closer to the existing
literature than the main model, suffers from added technical complexity
that may hinder the model from being insightful and accessible. 

\section{Solution Concept\label{sec:Solution-Concept}}

\subsection{Long-Run Payoff}

In this subsection we define the long-run average (per-round) payoff
of a patient agent who follows a stationary strategy $s$, given a
steady state $\left(S^{N},\sigma,\theta\right)$ of a perturbed environment
$\left(\left(G,k\right),\left(S^{C},\lambda\right),\epsilon\right)$.
The same definition, when taking $\epsilon=0$, holds for an unperturbed
environment. 

We begin by extending the definition of a consistent signal profile
$\theta$ to non-incumbent strategies. For each non-incumbent strategy
$\hat{s}\in\mathcal{S}\backslash\left(S^{N}\cup S^{C}\right)$, define
$\theta\left(\hat{s}\right)=\theta_{\hat{s}}$ as the distribution
of signals induced by a deviating agent who follows strategy $\hat{s}$
and observes the distribution of signals induced by a random partner
in the population (sampled according to $\left(1-\epsilon\right)\cdot\sigma\left(s'\right)+\epsilon\cdot\lambda\left(s'\right)$).
That is, for each strategy $\hat{s}\in\mathcal{S}\backslash\left(S\cup S^{C}\right)$,
and each signal $m\in M$, we define
\[
\theta_{\hat{s}}\left(m\right)=\left(\nu\left(\hat{s}\left(\theta_{\left(\left(1-\epsilon\right)\cdot\sigma+\epsilon\cdot\lambda\right)}\right)\right)\right)\left(m\right).
\]
 We define the long-run payoff of an agent who follows an arbitrary
strategy $s\in\mathcal{S}$ as:
\begin{equation}
\pi_{s}\left(S^{N},\sigma,\theta\right)=\sum_{s'\in S^{N}\cup S^{C}}\left(\left(1-\epsilon\right)\cdot\sigma\left(s'\right)+\epsilon\cdot\lambda\left(s'\right)\right)\cdot\left(\sum_{\left(a,a'\right)\in A\times A}s_{\theta\left(s'\right)}\left(a\right)\cdot s'_{\theta\left(s\right)}\left(a'\right)\cdot\pi\left(a,a'\right)\right).\label{eq:long-run payoff}
\end{equation}
Eq. (\ref{eq:long-run payoff}) is straightforward. The inner (right-hand)
sum (i.e., $\sum_{\left(a,a'\right)\in A\times A}s_{\theta\left(s'\right)}\left(a\right)\cdot s'_{\theta\left(s\right)}\left(a'\right)\cdot\pi\left(a,a'\right)$)
calculates the expected payoff of Alice who follows strategy $s$
conditional on being matched with a partner who follows strategy $s'$.
The outer sum weighs these conditional expected payoffs according
to the frequency of each incumbent strategy $s'$ (i.e., $\left(\left(1-\epsilon\right)\cdot\sigma\left(s'\right)+\epsilon\cdot\lambda\left(s'\right)\right)$),
which yields the expected payoff of Alice against a random partner
in the population.

Let $\pi\left(S,\sigma,\theta\right)$ be the average payoff of the
\emph{normal} agents in the population: 
\[
\pi\left(S^{N},\sigma,\theta\right)=\sum_{s\in S^{N}}\sigma\left(s\right)\cdot\pi_{s}\left(S^{N},\sigma,\theta\right).
\]

\subsection{Nash and Perfect Equilibrium}

A steady state is a Nash equilibrium if no agent can obtain a higher
payoff by a unilateral deviation. Formally:
\begin{defn}
\label{def:Nash-equilibrium}The steady state $\left(S^{N},\sigma,\theta\right)$
of perturbed environment $\left(\left(G,k\right),\left(S^{C},\lambda\right),\epsilon\right)$
is a \emph{Nash equilibrium} if for each strategy $s\in\mathcal{S}$,
it is the case that $\pi_{s}\left(S^{N},\sigma,\theta\right)\leq\pi\left(S^{N},\sigma,\theta\right)$.
\end{defn}
Note that the $1-\epsilon$ normal agents in such a Nash equilibrium
must obtain the same maximal payoff. That is, each normal strategy
$s\in S^{N}$ satisfies $\pi_{s}\left(S^{N},\sigma,\theta\right)=\pi\left(S^{N},\sigma,\theta\right)\geq\pi_{s'}\left(S^{N},\sigma,\theta\right)$
for each strategy $s'\in\mathcal{S}$. However, the $\epsilon$ committed
agents may obtain lower payoffs.

A steady state is a (regular) perfect equilibrium if it is the limit
of Nash equilibria of (regular) perturbed environments when the frequency
of the committed agents converges to zero. Formally (where the standard
definitions of convergence of strategies, distributions and states
is presented in Appendix \ref{sec:Additional-Notation-and}):
\begin{defn}
A steady state $\left(S^{*},\sigma^{*},\theta^{*}\right)$ of the
environment $\left(G,k\right)$ is a \emph{(regular) perfect equilibrium
}if there exist a (regular) distribution of commitments $\left(S^{C},\lambda\right)$
and converging sequences $\left(S_{n}^{N},\sigma_{n},\theta_{n}\right)_{n}\rightarrow_{n\rightarrow\infty}\left(S^{*},\sigma^{*},\theta^{*}\right)$
and $\left(\epsilon_{n}>0\right)_{n}\rightarrow_{n\rightarrow\infty}0$,
such that for each $n$, the state $\left(S_{n}^{N},\sigma_{n},\theta_{n}\right)$
is a Nash equilibrium of the perturbed environment $\left(\left(G,k\right),\left(S^{C},\lambda\right),\epsilon_{n}\right)$.
In this case, we say that $\left(S^{*},\sigma^{*},\theta^{*}\right)$
is a (regular)\emph{ }perfect equilibrium with respect to distribution
of commitments\emph{ }$\left(S^{C},\lambda\right)$. If $\theta^{*}\equiv a$
, we say that action $a\in A$ is a (\emph{regular}) \emph{perfect
equilibrium action}.
\end{defn}
By standard arguments, any perfect equilibrium is a Nash equilibrium
of the unperturbed environment. In Appendix \ref{subsec:Implementing-a-Trembling-Hand}
we show that any symmetric (perfect) Nash equilibrium of the underlying
game corresponds to a (perfect) Nash equilibrium of the environment
in which all normal agents ignore the observed signal.

\subsection{Stronger Refinements of Perfect Equilibrium\label{subsec:Stronger-Refinements-of}}

In Appendix \ref{sec:Evolutionary-Stability} we present three refinements
of perfect equilibrium: strict perfection, evolutionary stability,
and robustness. The first refinement (strict perfection) is satisfied
by the equilibria constructed in Proposition \ref{pro:defection-is-evol-stable},
Theorem \ref{thm:cooperation-defensive-PDs}, and Theorem \ref{thm:stable-cooperation-observing-conflicts}.
The remaining refinements (evolutionary stability and robustness)
are satisfied by all the equilibria constructed in the paper.

The notion of perfect equilibrium might be considered too weak because
it may crucially depend on a specific set of commitment strategies.
The refinement of \emph{strict perfection} ($\grave{\textrm{a}}$
la \citealp{okada1981stability}) requires the equilibrium outcome
to be sustained regardless of which commitment strategies are present
in the population.

The notion of perfect equilibrium considers only deviations by a single
agent (who has mass zero in the infinite population). The refinement
of an \emph{evolutionarily stable strategy} ($\grave{\textrm{a}}$
la \citealp{smith1973lhe}) requires stability against a group of
agents with a small positive mass who jointly deviate.

The outcome of a perfect equilibrium may be non-robust in the sense
that small perturbations of the distribution of observed signals may
induce a change of behavior that moves the population away from the
consistent signal profile. We address this issue by introducing a
refinement that we call\emph{ robustness}, which requires that if
we slightly perturb the distribution of observed signals, then the
agents still play the same equilibrium outcome with a probability
very close to one (in the spirit of the notion of Lyapunov stability). 

\section{Prisoner's Dilemma and Observation of Actions\label{sec:Main-Results}}

\subsection{The Prisoner\textquoteright s Dilemma\label{subsec:The-Prisoner=002019s-Dilemma}}

Our results focus on environments in which the underlying game is
the Prisoner's Dilemma (denoted by $G_{PD}$), which is described
in Table \ref{tab:Prisoner-Dilemma-detailed}. The class of Prisoner's
Dilemma games is fully described by two positive parameters $g$ and
$l$. The two actions are denoted $c$ and $d$ , representing cooperation
and defection, respectively. When both players cooperate they both
get a high payoff (normalized to one), and when they both defect they
both get a low payoff (normalized to zero). When a single player defects
he obtains a payoff of $1+g$ (i.e., an additional payoff of $g$)
while his opponent gets $-l$. 
\begin{center}
\begin{table}[h]
\caption{\label{tab:Prisoner-Dilemma-detailed}Matrix Payoffs of Prisoner's
Dilemma Games }

\centering{}%
\begin{tabular}{|c|c|c|}
\hline 
 & \emph{c} & \emph{d}\tabularnewline
\hline 
\emph{~~c~~} & \emph{\Large{}$_{\underset{\,}{1}}\,\,\,\,\,\,^{\overset{\,}{1}}$} & \emph{\Large{}$_{\underset{\,}{-l}}\,\,^{\overset{\,}{1+g}}$}\tabularnewline
\hline 
\emph{d} & \emph{\Large{}$_{\underset{\,}{1+g}}\,\,^{\overset{\,}{-l}}$} & \emph{\Large{}$_{0}\,\,\,\,\,\,^{0}$}\tabularnewline
\hline 
\multicolumn{3}{c}{Prisoner\textquoteright s Dilemma }\tabularnewline
\multicolumn{3}{c}{$G_{PD}$: $g,l>0$ ,~ $g<l+1$}\tabularnewline
\end{tabular}~~~~~~~%
\begin{tabular}{|c|c|c|}
\hline 
 & \emph{c} & \emph{d}\tabularnewline
\hline 
\emph{~~c~~} & \emph{\Large{}$_{\underset{\,}{1}}\,\,\,\,\,\,^{\overset{\,}{1}}$} & \emph{\Large{}$_{\underset{\,}{-3}}\,\,^{\overset{\,}{2}}$}\tabularnewline
\hline 
\emph{d} & \emph{\Large{}$_{\underset{\,}{2}}\,\,^{\overset{\,}{-3}}$} & \emph{\Large{}$_{0}\,\,\,\,\,\,^{0}$}\tabularnewline
\hline 
\multicolumn{3}{c}{Ex. 1: Defensive PD}\tabularnewline
\multicolumn{3}{c}{$G_{D}$: $1=g<l=3$}\tabularnewline
\end{tabular}~~~~~~~%
\begin{tabular}{|c|c|c|}
\hline 
 & \emph{c} & \emph{d}\tabularnewline
\hline 
\emph{~~c~~} & \emph{\Large{}$_{\underset{\,}{1}}\,\,\,\,\,\,^{\overset{\,}{1}}$} & \emph{\Large{}$_{\underset{\,}{-1.7}}\,\,^{\overset{\,}{3.3}}$}\tabularnewline
\hline 
\emph{d} & \emph{\Large{}$_{\underset{\,}{3.3}}\,\,^{\overset{\,}{-1.7}}$} & \emph{\Large{}$_{0}\,\,\,\,\,\,^{0}$}\tabularnewline
\hline 
\multicolumn{3}{c}{Ex. 2: Offensive PD}\tabularnewline
\multicolumn{3}{c}{$G_{O}$: $2.3=g>l=1.7$}\tabularnewline
\end{tabular}
\end{table}
\par\end{center}

Following \citet{dixit2003modes} we classify Prisoner's Dilemma games
into two kinds: offensive and defensive.\emph{}\footnote{\citet{takahashi2010community} calls offensive (defensive) Prisoner's
Dilemmas submodular (supermodular). }\emph{ }In an\emph{ offensive} Prisoner's Dilemma there is a stronger
incentive to defect against a cooperator than against a defector (i.e.,
$g>l$); in a \emph{defensive} PD the opposite holds (i.e., $l>g$).
If cooperating is interpreted as exerting high effort, then the defensive
PD exhibits strategic complementarity; increasing one's effort from
low to high is less costly if the opponent exerts high effort. 

\subsection{Stability of Defection\label{subsec:Stability-of-Defection}}

We begin by showing that defection is a regular perfect equilibrium
action in any Prisoner's Dilemma game and for any $k$. Formally:
\begin{prop}
\label{pro:defection-is-evol-stable}Let $E=\left(G_{PD},k\right)$
be an environment\emph{. }Defection is a regular perfect equilibrium
action.
\end{prop}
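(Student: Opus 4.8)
The plan is to exhibit, for some regular distribution of commitments, a sequence of perturbed environments with $\epsilon_{n}\to0$ together with Nash equilibria that converge to the state in which everyone defects. Fix a regular finite set $S^{C}=\{\alpha_{1},\alpha_{2}\}$ consisting of two distinct constant strategies with $\alpha_{1}\neq\alpha_{2}$ and $\alpha_{1}$ totally mixed (this is regular, and contains a totally mixed strategy), together with any full-support $\lambda$. For each small $\epsilon>0$ I take the candidate state to be $\left(\{d\},\theta_{\epsilon}\right)$, in which every normal agent uses the constant strategy $d$; existence of a consistent $\theta_{\epsilon}$ is guaranteed by Lemma \ref{lem:existance_consitent_outcomes-1-1}. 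Because the constant strategy $d$ ignores the signal, consistency forces $\theta_{d}=\nu\left(d\right)$ to be the point mass on $m=k$ \emph{exactly}, while the population average $\theta_{\left(\left(1-\epsilon\right)\cdot d+\epsilon\cdot\lambda\right)}\to\nu\left(d\right)$ as $\epsilon\to0$. Hence the candidate states converge to the consistent unperturbed state whose outcome is defection, and everything reduces to checking that $d$ is a best reply in each perturbed environment.

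To verify the Nash property I would compare an arbitrary deviation $\hat{s}\in\mathcal{S}$ with $d$ and split the payoff difference into a \emph{direct} and an \emph{indirect} effect, by adding and subtracting the payoff obtained from defecting against the partners' responses to $\hat{s}$. The direct effect, partner by partner, equals $\hat{s}_{\theta\left(s'\right)}\left(c\right)\cdot\left[\pi\left(c,\beta'\right)-\pi\left(d,\beta'\right)\right]$, where $\beta'$ is the partner's (possibly updated) action distribution; since $\pi\left(c,\beta'\right)-\pi\left(d,\beta'\right)=-g\beta'\left(c\right)-l\beta'\left(d\right)\leq-\min\{g,l\}$, defecting is optimal against any fixed distribution, and the total direct effect is at most $-\min\{g,l\}\cdot p$, where $p$ is the deviator's overall probability of cooperation against the population. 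The indirect effect captures how the deviator's altered signal changes the partners' responses; it vanishes for the normal partners, who play $d$ irrespective of the signal, so it is carried entirely by the mass $\epsilon$ of committed agents. Using that both $\nu$ and each commitment strategy are Lipschitz, and that the change in the deviator's signal is itself proportional to $p$ (by linearity of $s\left(\cdot\right)$ in the signal distribution, the population-wide cooperation rate is exactly this same $p$), the indirect effect is bounded by $\epsilon\cdot C\cdot p$ for a constant $C$ depending only on the game and on $\left(S^{C},\lambda\right)$.

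Combining the two bounds gives $\pi_{\hat{s}}-\pi_{d}\leq-p\cdot\left(\min\{g,l\}-\epsilon C\right)$, so for every $\epsilon<\min\{g,l\}/C$ the difference is nonpositive \emph{uniformly} over all deviations $\hat{s}$, with equality only when $p=0$, i.e. only for strategies that never cooperate. Thus $\left(\{d\},\theta_{\epsilon}\right)$ is a Nash equilibrium of every sufficiently small regular perturbed environment, and passing to the limit shows that $d$ is a regular perfect equilibrium action.

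The step I expect to be most delicate is precisely this uniformity over all $\hat{s}$ at a fixed $\epsilon$: a naive bound that lumps all committed-agent interactions into a single ``$\epsilon\times$ payoff range'' term is useless against deviations that cooperate with only a tiny probability, since then the direct gain from defecting is equally tiny. The resolution is structural rather than computational. The cost of cooperating and the benefit of the resulting improved signal are both proportional to the \emph{same} quantity $p$, but the cost is of order one because it is borne against the $1-\epsilon$ normal partners who defect regardless, whereas the benefit is only of order $\epsilon$ because the signal influences only the committed partners. Hence the cost dominates for small $\epsilon$ no matter how small $p$ is, which is exactly what makes defection a best reply against every cooperating deviation simultaneously.
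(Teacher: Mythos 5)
Your proposal is correct and follows essentially the same route as the paper's proof: a direct loss of order $\min\{g,l\}$ per unit of cooperation probability versus an indirect benefit (carried only by the committed agents) of order $\epsilon\cdot k\cdot\left(l+1\right)$ per unit of the same cooperation probability, so that defection is a strict best reply uniformly over deviations once $\epsilon<\frac{1}{k\left(l+1\right)}$. The only difference is that the paper runs the argument for an \emph{arbitrary} distribution of commitments, thereby establishing the stronger property of strict perfection, whereas you fix one particular regular distribution, which suffices for the proposition as stated.
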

The intuition is straightforward. Consider any distribution of commitment
strategies. Consider the steady state in which all the normal incumbents
defect regardless of the observed signal. It is immediate that this
strategy is the unique best reply to itself. This implies that if
the share of committed agents is sufficiently small, then always defecting
is also the unique best reply in the slightly perturbed environment.

Our first main result shows that defection is the \emph{unique} regular
perfect equilibrium in offensive games. 
\begin{thm}
\label{thm:only-defection-is-stable}Let $E=\left(G_{PD},k\right)$
be an environment, where $G$ is an offensive Prisoner's Dilemma (i.e.,
$g>l$). If $\left(S^{*},\sigma^{*},\theta^{*}\right)$ is a regular
perfect equilibrium, then $S^{*}=\left\{ d\right\} $ and $\theta^{*}=k$.
\end{thm}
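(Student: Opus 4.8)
The plan is to fix a regular perturbed environment with small $\epsilon>0$, take an arbitrary Nash equilibrium $\left(S^{N},\sigma,\theta\right)$ of it, and show that every normal best reply must defect with probability tending to one after \emph{every} signal as $\epsilon\to0$; passing to the limit along the sequence that defines the perfect equilibrium then forces the binomial to degenerate to the point mass on $m=k$, i.e. $\theta^{*}=k$, and $S^{*}=\left\{d\right\}$. Because a Nash equilibrium in particular rules out the deviation to ``always defect,'' it is enough to compare the long-run payoff of an incumbent normal strategy with the payoff obtained by defecting more often, so the whole argument reduces to a single revealed-preference inequality.

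First I would rewrite the stage payoff: if an agent defects with probability $a$ and her partner defects with probability $b$, her expected payoff is $1-b\left(1+l\right)+a\left[g-b\left(g-l\right)\right]$. Substituting this into the long-run payoff~(\ref{eq:long-run payoff}) splits $\pi_{s}$ into an \emph{indirect} (reputation) term $\sum_{s'}w_{s'}\left[1-b_{s'}\left(1+l\right)\right]$, which depends on the agent only through her own signal $\theta_{s}$ via the partners' responses $b_{s'}=s'\left(\theta_{s}\right)\left(d\right)$, and a \emph{direct} term $\sum_{m}\rho\left(m\right)\,s_{m}\left(d\right)\,C\left(m\right)$. Here $w_{s'}=\left(1-\epsilon\right)\sigma\left(s'\right)+\epsilon\lambda\left(s'\right)$, the quantity $\rho\left(m\right)=\sum_{s'}w_{s'}\theta_{s'}\left(m\right)$ is the population frequency of signal $m$, and $C\left(m\right)=g-\left(g-l\right)B\left(m\right)$, with $B\left(m\right)$ the expected probability that the partner defects against the agent conditional on the agent observing $m$ defections. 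The offensive hypothesis $g>l$ enters here decisively: since $B\left(m\right)\in\left[0,1\right]$ we get $C\left(m\right)\in\left[l,g\right]$, so the direct gain from defecting is \emph{strictly positive after every signal}, and it is largest precisely when the partner is most likely to cooperate.

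The crux is an informativeness lemma: $B\left(m\right)$ is (weakly) increasing in $m$. By consistency~(\ref{eq:transformation-between-signal=00003Dprofiles-2}) each type $s'$ emits the binomial signal $\nu_{p_{s'}}$ with average defection probability $p_{s'}=s'\left(\rho\right)\left(d\right)$, so the posterior over partner types given an observed $m$ weights $s'$ by $w_{s'}\nu_{p_{s'}}\left(m\right)$; the binomial family has the monotone-likelihood-ratio property, hence a larger $m$ shifts this posterior toward types with larger $p_{s'}$, and — since such types also defect (weakly) more against the agent — $B\left(m\right)$ inherits the monotonicity. This is exactly where regularity (Def.~\ref{def:regularity-1}) is indispensable: it rules out the knife-edge in which all committed and normal agents defect with a common probability, which would make the signal uninformative and $B$ flat; with regularity the inference genuinely moves with $m$.

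Finally I would close by contradiction. Suppose the limiting state retained positive cooperation; then reputations would stay informative, so that defecting more would raise each $b_{s'}$ and incur an indirect cost of order $\tfrac{d}{dp}s'\left(\nu_{p}\right)\left(d\right)$. The key point is that in an offensive game this extra partner defection bites hardest on the high signals, where $C\left(m\right)$ is smallest, while the direct gain remains bounded below by $l>0$ on every signal; aggregating, the total direct gain strictly exceeds the total indirect cost, so some normal agent strictly prefers to defect more — contradicting the Nash property. Hence normal agents must defect with probability approaching one after every signal, consistency then forces every normal $\theta_{s}$ and the population signal $\rho$ to converge to the point mass on $m=k$, and the limiting support collapses to $\left\{d\right\}$. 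I expect the balancing in this last step to be the main obstacle: one must control the reputation feedback loop — the deviator's own signal $\theta_{s}$ re-enters the partners' responses $b_{s'}$ — and show that the strict inequality $g>l$ makes the direct gain dominate the indirect reputational cost \emph{uniformly} as $\epsilon\to0$, with regularity serving precisely to exclude the degenerate uninformative configuration in which this domination could break down.
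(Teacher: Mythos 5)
Your setup---the split of the long-run payoff into a direct term $\sum_{m}\rho\left(m\right)s_{m}\left(d\right)C\left(m\right)$ with $C\left(m\right)=g-\left(g-l\right)B\left(m\right)$ and a reputation term that depends on the agent only through her own signal distribution, together with the MLR/regularity argument for informativeness---is essentially the machinery the paper's proof uses. But your closing step has a genuine gap. You conclude that because $C\left(m\right)\geq l>0$ after every signal, ``the total direct gain strictly exceeds the total indirect cost.'' This cannot be right as stated: in a \emph{defensive} game the direct gain is likewise bounded below by $\min\left(g,l\right)=g>0$ after every signal, yet Theorem \ref{thm:cooperation-defensive-PDs} shows cooperation is sustainable there. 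The marginal indirect cost of a defection is of order $k\cdot\left(l+1\right)$ (each observed defection can flip $k$ future partners from cooperation to defection) and can easily exceed $g$ or $l$; positivity of the direct gain alone never defeats the reputation channel. Offensiveness must be used in a more structural way, and your argument stops just short of extracting it.

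The missing idea is the following. The indirect cost of defecting after signal $m$ is $\rho\left(m\right)$ times a constant $K$ that does not depend on $m$ (future partners respond only to the agent's average defection probability), while the direct gain is $\rho\left(m\right)C\left(m\right)$ with $C\left(m\right)$ decreasing in $m$ when $g>l$. Hence every normal best reply must be ``defector-favoring'': up to a threshold it defects exactly against the partners most likely to cooperate. The paper then shows that all normal strategies must share the same average defection probability, and that---by regularity---a partner observed to defect in all $k$ samples is weakly the most likely to defect. Combining these, a deviator who always defects is met with \emph{weakly more} cooperation than any incumbent (defector-favoring incumbents punish her least), so she gains strictly on the direct term and weakly on the indirect term; this inverted punishment structure, not a uniform domination of direct over indirect, is what contradicts Nash. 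A secondary point: your informativeness lemma identifies ``higher average defection probability $p_{s'}$'' with ``defects more against this particular agent,'' which holds only once one knows the agent's own signal distribution is close to the population average---i.e., it already presupposes the equal-defection-rate step; as written the lemma is not yet established for an arbitrary incumbent.
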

\begin{proof}[Sketch of Proof]
The payoff of a strategy can be divided into two components: (1)
a \emph{direct }component: defecting yields additional $g$ points
if the partner cooperates and additional $l$ points if the partner
defects, and (2) an \emph{indirect} component: the strategy's average
probability of defection determines the distribution of signals observed
by the partners, and thereby determines the partner's probability
of defecting. For each fixed average probability of defection $q$
the fact that the Prisoner's Dilemma is offensive implies that the
optimal strategy among all those who defect with an average probability
of $q$ is to defect, with the maximal probability, against the partners
who are most likely to cooperate. This implies that all agents who
follow incumbent strategies are more likely to defect against partners
who are more likely to cooperate. As a result, mutants who always
defect outperform incumbents because they both have a strictly higher
direct payoff (since defection is a dominant action) and a weakly
higher indirect payoff (since incumbents are less likely to defect
against them). 
\end{proof}

\paragraph{Discussion of Theorem \ref{thm:only-defection-is-stable}}

The proof of Theorem \ref{thm:only-defection-is-stable} relies on
the assumption that agents are limited to choosing only stationary
strategies. The stationarity assumption implies that a partner who
has been observed to defect more in the past is more likely to defect
in the current match. However, this may no longer be true in a non-stationary
environment. In Appendix \ref{sec:Conventional-Repeated-Game} we
analyze the classic setup of repeated games, in which agents can choose
non-stationary strategies and observe the opponent's recent actions.
In that setup we are able to prove a weaker version of Theorem \ref{thm:only-defection-is-stable}
(namely, Theorem \ref{thmLrepeated-game}) which states that \emph{full}
cooperation cannot be supported as a perfect equilibrium outcome in
offensive Prisoner's Dilemmas (i.e., cooperation is not a perfect
equilibrium action in offensive games). 

Several papers in the existing literature present various mechanisms
to support cooperation in any Prisoner's Dilemma game. \citet[Theorem 1]{kandori1992social}
and \citet{ellison1994cooperation} show that in large finite populations
cooperation can be supported by contagious equilibria even when an
agent does not observe any signal about her partner (i.e., $k=0$).
In these equilibria each agent starts the game by cooperating, but
she starts defecting forever as soon as any partner has defected against
her. As pointed out by \citet[p. 578]{ellison1994cooperation}, if
we consider a large population in which at least one ``crazy'' agent
defects with positive probability in all rounds regardless of the
observed signal, then \citeauthor{kandori1992social}'s and \citeauthor{ellison1994cooperation}'s
equilibria fail because agents assign high probability to the event
that the contagion process has already begun, even after having experienced
a long period during which no partner defected against them. Recently,
\citet{dilme2016helping} presented a novel ``tit-for-tat''-like
contagious equilibrium that is robust to the presence of committed
agents, but only for the borderline case of $g=l$ (as discussed in
Remark \ref{enu:The-threshold-case} below). 

\citet{sugden1986economics} and \citet[Theorem 2]{kandori1992social}
show that cooperation can be a perfect equilibrium in a setup in which
each player observes a binary signal about his partner, either a ``good
label'' or a ``bad label.'' All players start with a good label.
This label becomes bad if a player defects against a ``good'' partner.
The equilibrium strategy that supports full cooperation in this setup
is to cooperate against good partners and defect against bad partners.
Theorems \ref{thm:only-defection-is-stable} and \ref{thmLrepeated-game}
reveal that the presence of a small fraction of committed agents does
not allow the population to maintain such a simple binary reputation
under an observation structure in which players observe an arbitrary
number of past actions taken by their partners. The theorem shows
this indirectly, because if it were possible to derive binary reputations
from this information structure, then it should have been possible
to support cooperation as a perfect equilibrium action. Moreover,
Theorem \ref{thm:stable-cooperation-observing-action-profiles} shows
that cooperation is not a perfect equilibrium action in acute games
when players observe action profiles. This suggests that the presence
of a few committed agents does not allow us to maintain the seemingly
simple binary reputation mechanisms of \citet{sugden1986economics}
and \citet{kandori1992social}, even under observation structures
in which each agent observes the whole action profile of many of her
opponent's past interactions. 

The mild restriction to a regular perfect equilibrium is necessary
for Theorem \ref{thm:only-defection-is-stable} to go through. Example
\ref{exa:non-regular-perfect-equilbirium-partial-ccoperation} in
Appendix \ref{sec:Example-non-regular-perfect-partial-cooperation}
demonstrates the existence of a non-regular perfect equilibrium of
an offensive PD, in which players cooperate with positive probability.
This non-robust equilibrium is similar to the ``belief-free'' sequential
equilibria that support cooperation in offensive Prisoner's Dilemma
games in \citet{takahashi2010community}, which have the property
that players are always indifferent between their actions, but they
choose different mixed actions depending on the signal they obtain
about the partner. 

\subsection{Stability of Cooperation in Defensive Prisoner\textquoteright s Dilemmas\label{subsec:Stability-of-Cooperation} }

Our next result shows that if players observe at least two actions,
then cooperation is a regular perfect equilibrium action in any defensive
Prisoner's Dilemma. Moreover, it shows that there is essentially a
unique combination of strategies that supports full cooperation in
the Prisoner's Dilemma game, according to which: (a) all agents cooperate
when observing no defections, (b) all agents defect when observing
at least 2 defections, (3) sometimes (but not always) agents defect
when observing a single defection. 
\begin{thm}
\label{thm:cooperation-defensive-PDs}Let $E=\left(G_{PD},k\right)$
be an environment with observations of actions, where $G_{PD}$ is
a defensive Prisoner's Dilemma ($g<l$), and $k\geq2$.

\begin{enumerate}
\item If $\left(S^{*},\sigma^{*},\theta^{*}\equiv0\right)$ is a perfect
equilibrium then: (a) for each $s\in S^{*}$, $s_{0}\left(c\right)=1$
and $s_{m}\left(d\right)=1$ for each $m\geq2$; and (b) there exist
$s,s'\in S^{*}$ such that $s_{1}\left(d\right)<1$ and $s'_{1}\left(d\right)>0$. 
\item Cooperation is a regular perfect equilibrium action. 
\end{enumerate}
\end{thm}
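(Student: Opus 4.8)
The plan is to reduce both parts to a single-crossing property in the signal variable $m$, driven by the defensive inequality $g<l$ together with the statistical informativeness of the sample. First I would decompose the long-run payoff in Eq.~(\ref{eq:long-run payoff}) of any normal agent who follows strategy $s$ into a \emph{direct} term (the payoff in the current match) and an \emph{indirect} term (the effect of the agent's own record on how future partners treat her). The crucial observation is that the indirect term depends on $s$ only through the agent's population-average defection rate $\bar q=\sum_{m}\theta_{\sigma}(m)\,s_{m}(d)$, since every defection enters her record identically regardless of which signal triggered it. Differentiating the payoff with respect to $s_m(d)$ then yields a quantity proportional to $\theta_{\sigma}(m)\,[D_m-T]$, where $D_m$ is the direct gain from defecting conditional on observing signal $m$, and $T=C'(\bar q)$ is a threshold that is \emph{common to all signals}. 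Writing $p_m$ for the probability that the partner cooperates against the agent conditional on signal $m$, one computes $D_m=p_m\,g+(1-p_m)\,l=l-(l-g)p_m$, and $T\approx k\,q_1(1+l)$, where $q_1$ is the average probability that normal agents defect upon observing a single defection (each extra defection by the agent raises the chance a future partner samples exactly one defection by a factor $k$, and such a partner retaliates with probability $q_1$, costing $1+l$).

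The key lemma is that $p_m$ is strictly decreasing in $m$: observing more defections in the sample makes the partner strictly more likely to defect in the current match. I would prove this from the monotone-likelihood-ratio structure of the binomial signal $\nu$ in Eq.~(\ref{eq:multinomial}) — a larger $m$ shifts the posterior over partner types toward types that defect more often, since in a cooperative state the clean signal is generated almost surely by normal cooperators while higher signals are disproportionately generated by committed or defecting types. Because the game is defensive ($l-g>0$), this makes $D_m=l-(l-g)p_m$ strictly \emph{increasing} in $m$, so the net incentive $D_m-T$ crosses zero at most once: every best-replying normal agent cooperates at low signals, defects at high signals, and may randomize at a single intermediate signal.

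To obtain part 1, I would work in the perturbed environment for small $\epsilon$, pass to the limit, and rule out the two boundary values of $q_1$. If $q_1=0$, then being observed to defect once triggers no retaliation, so $T=C'(\bar q)=O(\bar q)\to 0$; since $D_0\to g>0$, defecting at signal $0$ becomes strictly profitable, forcing $\bar q\to 1$ and contradicting $\theta^{*}\equiv0$. If $q_1=1$, then $T\approx k(1+l)\geq 2(1+l)>l\geq D_m$ for every $m$ (this is where $k\geq 2$ enters), so cooperation is strictly optimal at every signal and $q_1=0$, again a contradiction. Hence $q_1\in(0,1)$, which is part (b); indifference at signal $1$ then gives $D_1=T$, while strict monotonicity yields $D_0<D_1=T<D_m$ for all $m\geq 2$, i.e. $s_0(c)=1$ and $s_m(d)=1$ for $m\geq 2$, which is part (a).

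For part 2 I would run the construction in reverse. Fix a regular distribution of commitments containing a totally mixed strategy, and posit that the normal population plays the two strategies ``defect iff $m\geq1$'' and ``defect iff $m\geq2$'' in proportions yielding average defection probability $q$ at signal $1$, exactly as in Example~\ref{exa:revisiting-example}. For each small $\epsilon$ a consistent signal profile exists by Lemma~\ref{lem:existance_consitent_outcomes-1-1}, and the indifference equation $D_1(q)=T(q)$ — whose sides are continuous and satisfy $D_1>T$ at $q=0$ and $D_1<T$ at $q=1$ by the boundary estimates above — has a solution $q_\epsilon\in(0,1)$ by the intermediate value theorem. I would then verify that the resulting state is a genuine Nash equilibrium: single crossing makes the posited strategy a best reply against all stationary deviations, and the one delicate global deviation, ``always defect,'' is unprofitable because such a deviator generates signal $m=k\geq 2$ and is met with defection with probability one, earning about $0<1$. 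Letting $\epsilon\to 0$ drives $\bar q\to 0$ and hence $\theta_\epsilon\to\theta^{*}\equiv0$, so cooperation is a regular perfect equilibrium action. The hard part will be twofold: establishing the monotonicity of $p_m$ and the direct/indirect separation uniformly to leading order in $\epsilon$ (controlling the $O(\epsilon^{2})$ remainders), and, in part 2, verifying \emph{global} optimality of the construction together with the existence of the joint fixed point of the consistency and indifference conditions.
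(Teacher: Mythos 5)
Your architecture is essentially the paper's: the direct/indirect decomposition in which the indirect cost $T\approx k\,q_1(1+l)$ depends on the agent's strategy only through her average defection rate, the monotonicity of the partner's conditional defection probability in $m$ (each type's signal is near-binomial around $s_0(d)$, so the binomial MLR property applies), and the resulting single crossing that pins the switch at $m=1$. One sub-step is argued differently: to show $s_1(d)<1$ for some $s$, the paper uses a consistency (``contagion'') count --- if all normal agents defect at $m\ge1$ then $\Pr(d)=k\cdot\Pr(d)+O(\epsilon^{2})$, impossible for $k\ge2$ --- whereas you use the incentive bound $T=k(1+l)\ge2(1+l)>l\ge D_m$. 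Your route is valid, since either way the assumed Nash property of the perturbed states is contradicted; just note that the linearization $T\approx kq_1(1+l)$ is what $\theta_n\to0$ buys you, and that in the $q_1=0$ case the correct conclusion is simply that defecting at $m=0$ is a profitable deviation (your aside that it ``forces $\bar q\to1$'' is neither needed nor quite right).

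The genuine gap is in the best-reply verification of Part 2. Conditional on observing $m=1$, the payoff of an individual who defects there with probability $q'$ is \emph{not} linear in $q'$: her own record feeds back into how partners treat her, and the second-order term has the sign of $\chi-\mu$, where $\chi$ is the partner's defection probability conditional on \emph{both} sides observing a single defection and $\mu$ is the same quantity conditional on the partner observing no defections. If $\chi<\mu$ the payoff is strictly concave in $q'$; since $s^1$ and $s^2$ must earn equal payoffs, every interior $q'$ then earns strictly more, and your heterogeneous state $\left(\left\{ s^1,s^2\right\},\left(q,1-q\right)\right)$ is \emph{not} a Nash equilibrium. The paper resolves this by also carrying the homogeneous candidate in which every normal agent plays $s^{q}$ (defect with probability $q$ at $m=1$) and showing that for every commitment distribution at least one of the two constructions is an equilibrium. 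Your assertion that ``single crossing makes the posited strategy a best reply against all stationary deviations'' silently assumes linearity at $m=1$ and so does not close this case. Relatedly, deviations that defect with probability $\alpha$ at $m=0$ are not disposed of by the marginal analysis alone (the deviator's record moves with $\alpha$); the paper needs the explicit computation showing $\pi(\alpha)$ is decreasing whenever $q>g/\left(k\left(l+1\right)\right)$. You flag ``global optimality'' as the hard part, but these are the two specific places where that flag must be cashed in.
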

\begin{proof}[Sketch of Proof]
 Suppose that $\left(S^{*},\sigma^{*},\theta^{*}\equiv0\right)$
is a perfect equilibrium. The fact that the equilibrium induces full
cooperation, in the limit when the mass of commitment strategies converges
to zero, implies that all normal agents must cooperate when they observe
no defections, i.e., $s_{0}\left(c\right)=1$ for each $s\in S^{*}$. 

Next we show that there is a normal strategy that induces the agent
to defect with positive probability when observing a single defection,
i.e., $s_{1}\left(d\right)>0$ for some $s\in S^{*}$. Assume to the
contrary that $s_{1}\left(c\right)=1$ for each $s\in S^{*}$. If
an agent (Alice) deviates and defects with small probability $\epsilon<<1$
when observing no defections, then she outperforms the incumbents.
On the one hand, the fact that she occasionally defects when observing
$m=0$ gives her a direct gain of at least $\epsilon\cdot g$. On
the other hand, the probability that a partner observes her defecting
twice or more is $O\left(\epsilon^{2}\right)$; therefore her indirect
loss from these additional $\epsilon$ defections is at most $O\left(\epsilon^{2}\right)\cdot\left(1+l\right)$,
and therefore for a sufficiently small $\epsilon>0$, Alice strictly
outperforms the incumbents.

The fact that $s_{1}\left(d\right)>0$ for some $s\in S^{*}$ implies
that defection is a best reply conditional on an agent observing $m=1$.
The direct gain from defecting is strictly increasing in the probability
that the partner defects (because the game is defensive), while the
indirect influence of defection on the behavior of future partners
is independent of the partner's play. This implies that defection
must be the unique best reply when an agent observes $m\geq2$ , since
such an observation implies a higher probability that the partner
is going to defect relative to the observation of a single defection.
This establishes that $s_{m}\left(d\right)=1$ for all $m\geq2$ and
all $s\in S^{*}$.

In order to demonstrate that there is a strategy $s$ such that $s_{1}\left(d\right)<1$,
assume to the contrary that $s_{1}\left(d\right)=1$ for each $s\in S^{*}$.
Suppose that the average probability of defection in the population
is $0<\Pr\left(d\right)$. Since there is full cooperation in the
limit we have $\Pr\left(d\right)=O\left(\epsilon\right)$. This implies
that a random partner is observed to defect at least once with a probability
of $k\cdot\Pr\left(d\right)+O\left(\epsilon^{2}\right)$. This in
turn induces the defection of a fraction $k\cdot\Pr\left(d\right)+O\left(\epsilon^{2}\right)$
of the normal agents (under the assumption that $s_{1}\left(d\right)=1$).
Since the normal agents constitute a fraction $1-O\left(\epsilon\right)$
of the population we must have $\Pr\left(d\right)=k\cdot\Pr\left(d\right)+O\left(\epsilon^{2}\right)$,
which leads to a contradiction for any $k\geq2$. Thus, if $s_{1}\left(d\right)=1$,
then defections are ``contagious,'' and so there is no steady state
in which only a fraction $O\left(\epsilon\right)$ of the population
defects. This completes the sketch of the proof of part 1.

To prove part 2 of the theorem, let $s^{1}$ and $s^{2}$ be the strategies
that defect iff $m\geq1$ and $m\geq2$, respectively. Consider the
state $\left(\left\{ s^{1},s^{2}\right\} ,\left(q^{*},1-q^{*}\right),\theta^{*}\equiv0\right)$.
The direct gain from defecting (relative to cooperating) when observing
a single defection is
\[
\Pr\left(m=1\right)\cdot\left(\left(l\cdot\Pr\left(d|m=1\right)\right)+g\cdot\Pr\left(c|m=1\right)\right),
\]
where $\Pr\left(d|m=1\right)$ ($\Pr\left(c|m=1\right)$) is the probability
that a random partner is going to defect (cooperate) conditional on
the agent observing $m=1$, and $\Pr\left(m=1\right)$ is the average
probability of observing signal $m=1$. The indirect loss from defection,
relative to cooperation, conditional on the agent observing a single
defection, is
\[
q^{*}\cdot\left(k\cdot\Pr\left(m=1\right)\right)\cdot\left(l+1\right)+O\left(\left(\Pr\left(m=1\right)\right)^{2}\right).
\]
To see this, note that a random partner defects with an average probability
of $q$ if he observes a single defection (which occurs with probability
$k\cdot\Pr\left(m=1\right)$ when the partner makes $k$ i.i.d. observations,
each of which has a probability of $\Pr\left(m=1\right)$ of being
a defection), and each defection induces a loss of $l+1$ to the agent
(who obtains $-l$ instead of 1). The fact that some normal agents
cooperate and others defect when observing a single defection implies
that in an equilibrium both actions have to be best replies conditional
on the agent observing $m=1$. This implies that the indirect loss
from defecting is exactly equal to the direct gain (up to $O\left(\left(\Pr\left(m=1\right)\right)^{2}\right)$),
i.e., 
\[
\Pr\left(m=1\right)\cdot\left(\left(l\cdot\Pr\left(d|m=1\right)\right)+g\cdot\Pr\left(c|m=1\right)\right)=q^{*}\cdot\left(k\cdot\Pr\left(m=1\right)\right)\cdot\left(l+1\right)
\]
\begin{equation}
\Rightarrow q^{*}=\frac{\left(l\cdot\Pr\left(d|m=1\right)\right)+g\cdot\Pr\left(c|m=1\right)}{k\cdot\left(l+1\right)}.\label{eq:q-indifference-equation}
\end{equation}
The probability $\Pr\left(d|m=1\right)$ depends on the distribution
of commitments. Yet, one can show that for every distribution of commitment
strategies $\left(S^{C},\lambda\right)$, there is a unique value
of $q^{*}\in\left(0,\frac{1}{k}\right)$ that solves Eq. (\ref{eq:q-indifference-equation})
and that, given this $q^{*}$, both $s^{1}$ and $s^{2}$ (and only
these strategies) are best replies. This means that the steady state
$\left(\left\{ s^{1},s^{2}\right\} ,\left(q^{*},1-q^{*}\right),\theta^{*}\equiv0\right)$
is a perfect equilibrium. 
\end{proof}

\paragraph{Discussion of Theorem \ref{thm:cooperation-defensive-PDs} }

We comment on a few issues related to Theorem \ref{thm:cooperation-defensive-PDs}.
\begin{enumerate}
\item In the formal proof of Theorem \ref{thm:cooperation-defensive-PDs}
we show that cooperation satisfies the stronger refinements of strict
perfection, evolutionary stability, and robustness (see Section \ref{subsec:Stronger-Refinements-of}
and Appendix \ref{sec:Evolutionary-Stability}).
\item Each distribution of commitment strategies induces a unique frequency
$q^{*}\in\left(0,\frac{1}{k}\right)$ of $s^{1}$-agents, which yields
a perfect equilibrium. One may wonder whether a population starting
from a different share $q_{0}\neq q^{*}$ of $s^{1}$-agents is likely
to converge to the equilibrium frequency $q^{*}$. It is possible
to show that the answer is affirmative. Specifically, given any initial
low frequency $q_{0}\in\left(0,q^{*}\right)$, the $s^{1}$-agents
achieve a higher payoff than the $s^{2}$-agents and, given any initial
high frequency $q_{0}\in\left(q^{*},\frac{1}{k}\right)$, the $s^{1}$-agents
achieve a lower payoff than the $s^{2}$-agents. Thus, under any smooth
monotonic dynamic process in which a more successful strategy gradually
becomes more frequent, the share of $s^{1}$-agents will shift from
any initial value in the interval $q_{0}\in\left(0,\frac{1}{k}\right)$
to the exact value of $q^{*}$ that induces a perfect equilibrium.
\item As discussed in the formal proof in Appendix \ref{subsec:Proof-of-Theorem-defensive},
some distributions of commitment strategies may induce a slightly
different perfect equilibrium, in which the population is homogeneous,
and each agent in the population defects with probability $q^{*}\left(\mu\right)$
when observing a single defection (contrary to the heterogeneous deterministic
behavior described above). 
\item \emph{Random number of observed action}s. \label{enu:random}Consider
a \emph{random environment} $\left(G_{PD},p\right)$, where $p\in\Delta\left(\mathbb{N}\right)$
is a distribution with a finite support, and each agent privately
observes $k$ actions of the partner with probability $p\left(k\right)$.
Theorem \ref{thm:cooperation-defensive-PDs} (and, similarly, Theorems
\ref{thm:stable-cooperation-observing-conflicts}\textendash \ref{thmtertiary-observation})
can be extended to this setup for any random environment in which
the probability of observing at least two interactions is sufficiently
high. The perfect equilibrium has to be adapted as follows. As in
the main model, all normal agents cooperate (defect) when observing
no (at least two) defections. In addition, there will be a value $\bar{k}\in supp\left(p\right)$
and a probability $q\in\left[0,1\right]$ (which depend on the distribution
of commitment strategies), such that all normal agents cooperate (defect)
when observing a single defection out of $k>\bar{k}$ ($k<\bar{k}$),
and a fraction $q$ of the normal agents defect when observing a single
defection out of $\bar{k}$ observations.
\item \emph{Cheap talk. }In Appendix \ref{subsec:Cheap-Talk} we discuss
the influence on Theorems \ref{thm:only-defection-is-stable}\textendash \ref{thm:cooperation-defensive-PDs}
of the introduction of pre-play (slightly costly) cheap-talk communication.
In this setup one can show that:
\begin{enumerate}
\item Offensive games: No stable state exists. Both defection and cooperation
are only ``quasi-stable''{}'' the population state occasionally
changes between theses two states, based on the occurrence of rare
random experimentations. The argument is adapted from \citet{wiseman2001cooperation}.
\item Defensive games (and $k\geq2$): The introduction of cheap talk destabilizes
all inefficient equilibria, leaving cooperation as the unique stable
outcome. The argument is adapted from \citet{robson1990efficiency}.
\end{enumerate}
\item \emph{\label{enu:General-Noise-Structures:}General Noise Structures}:
In the model described above we deal with perturbed environments that
include a single kind of noise, namely, committed agents who follow
commitment strategies. It is possible to extend our results to include
additional sources of noise: specifically, observation noise and/or
trembles. We redefine a perturbed environment as a tuple $E_{\epsilon,\delta}=\left(\left(G,k\right),\left(S^{C},\lambda\right),\alpha,\epsilon,\delta\right)$,
where $\left(G,k\right),\left(S^{C},\lambda\right),\epsilon$ are
defined as in the main model, $0<\delta<<1$ is the probability of
error in each observed action of a player, and $\alpha\in\Delta\left(A\right)$
is a totally mixed distribution according to which the observed error
is sampled from in the event of an observation error. Alternatively,
these errors can also be interpreted as actions played by mistake
by the partner due to trembling hands. One can show that\emph{ all}
of our results can be adapted to this setup in a relatively straightforward
way. In particular, our results hold also in environments in which
most of the noise is due to observation errors, provided that there
is a small positive share of committed agents (possibly much smaller
than the probability of an observation error).
\item \label{enu:The-threshold-case}\emph{The borderline case between defensiveness
and offensiveness}: $g=l$. Such a Prisoner's Dilemma can be interpreted
as a game in which each of the players simultaneously decides whether
to sacrifice a personal payoff of $g$ in order to induce a gain of
$1+g$ to her partner. One can show that cooperation is also a perfect
equilibrium action in this setup, and that it can be supported by
the same kind of perfect equilibrium as described above. However,
in this case the uniqueness result (part 1 of Theorem \ref{thm:cooperation-defensive-PDs})
is no longer true. The reason for this is that when $g=l$ an agent
has the same incentive to defect regardless of the signal she observes
about the partner (because the direct bonus of defection is equal
to $g=l$ regardless of the partner's behavior). This implies that
cooperation can be supported by a large variety of strategies (including
belief-free-like strategies as in \citealp{takahashi2010community,dilme2016helping}).
We note that none of these strategies satisfy the refinement of evolutionary
stability (Appendix \ref{sec:Evolutionary-Stability}). One can adapt
the proof of Theorem \ref{thm:only-defection-is-stable} to show that
defection is the unique evolutionarily stable outcome when $g=l$.
\end{enumerate}
The following example demonstrates the existence of a perfect equilibrium
that supports cooperation when the unique commitment strategy is to
play each action uniformly.
\begin{example}[Example \ref{exa:revisiting-example} revisited: illustration of the
perfect equilibrium that supports cooperation]
 Consider the perturbed environment $\left(G_{D},2,\left\{ s^{u}\equiv0.5\right\} ,\epsilon\right)$,
where $G_{D}$ is the defensive Prisoner's Dilemma game with the parameters
$g=1$ and $l=3$ (as presented in Table \ref{tab:Prisoner-Dilemma-detailed}
above). Consider the steady state $\left(\left\{ s^{1},s^{2}\right\} ,\left(\frac{1}{6},\frac{5}{6}\right),\theta^{*}\right)$,
where $\theta^{*}$ is defined as in (\ref{eq:eta-star-example})
in Example \ref{exa:revisiting-example} above. A straightforward
calculation shows that the average probability in which a normal agent
observes $m=1$ when being matched with a random partner is
\[
\Pr\left(m=1\right)=\epsilon\cdot0.5+3.5\cdot\epsilon\cdot\frac{1}{6}+0.5\cdot\epsilon\cdot\frac{5}{6}+O\left(\epsilon^{2}\right)=1.5\cdot\epsilon+O\left(\epsilon^{2}\right).
\]
The probability that the partner is a committed agent conditional
on observing a single defection is: 
\[
\Pr\left(s^{u}|m=1\right)=\frac{\epsilon\cdot0.5}{1.5\cdot\epsilon}=\frac{1}{3}\,\,\Rightarrow\,\,\Pr\left(d|m=1\right)=\frac{1}{3}\cdot0.5=\frac{1}{6},
\]
which yields the conditional probability that the partner of a normal
agent will defect. Next we calculate the direct gain from defecting
conditional on the agent observing a single defection ($m=1$): 
\[
\Pr\left(m=1\right)\cdot\left(\left(l\cdot\Pr\left(d|m=1\right)\right)+g\cdot\Pr\left(c|m=1\right)\right)=1.5\cdot\epsilon\cdot\left(3\cdot\frac{1}{6}+1\cdot\frac{5}{6}\right)+O\left(\epsilon^{2}\right)=2\cdot\epsilon+O\left(\epsilon^{2}\right).
\]
The indirect loss from defecting conditional on the agent observing
a single defection is:
\[
q\cdot\left(k\cdot\Pr\left(m=1\right)\right)\cdot\left(l+1\right)+O\left(\epsilon^{2}\right)=q\cdot2\cdot1.5\cdot\epsilon\cdot\left(3+1\right)=12\cdot q\cdot\epsilon+O\left(\epsilon^{2}\right).
\]
When taking $q=\frac{1}{6}$ the indirect loss from defecting is exactly
equal to the direct gain (up to $O\left(\epsilon^{2}\right)$).
\end{example}

\paragraph{Stability of Cooperation when Observing a Single Action}

We conclude this section by showing that in defensive Prisoner's Dilemmas
with $k=1$, cooperation is a regular perfect equilibrium action iff
$g<1$.
\begin{prop}
\label{Pro-stability-of-cooperation-when-k=00003D1}Let $E=\left(G_{PD},1\right)$
be an environment where $G_{PD}$ is a defensive Prisoner's Dilemma
($g<l$). Cooperation is a (regular) perfect equilibrium action iff
$g<1$.
\end{prop}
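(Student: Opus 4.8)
The plan is to specialize the machinery behind Theorem~\ref{thm:cooperation-defensive-PDs} to $k=1$, where the only signals are $m=0$ (the sampled action was $c$) and $m=1$ (it was $d$), so a strategy is a pair $(s_0,s_1)\in\Delta(A)^2$. First I would characterize any candidate cooperative perfect equilibrium $(S^*,\sigma^*,\theta^*\equiv0)$. Exactly as in Theorem~\ref{thm:cooperation-defensive-PDs}, full cooperation in the limit forces $s_0(c)=1$ for every $s\in S^*$; the contagion argument (now with $k=1$) rules out $s_1(d)=1$ for all agents, while the deviation that defects slightly after observing $m=0$ rules out $s_1(d)=0$ for all agents. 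Hence there is a well-defined limiting average defection probability after a single observed defection, $q:=\lim\sum_{s}\sigma^*(s)\,s_1(d)\in(0,1)$, and both $c$ and $d$ are best replies after $m=1$. Writing $d_1:=\lim\Pr(d\mid m=1)$ for the limiting probability that the current partner defects against an agent who observed one defection, the indifference condition (the $k=1$ instance of \eqref{eq:q-indifference-equation}) reads $q(l+1)=g+(l-g)\,d_1$. The key comparison is with the most aggressive deviation: an agent who always defects is observed to defect by \emph{every} partner (since $k=1$), so normal partners defect against her with probability $q$ and she earns $(1-q)(1+g)$ in the limit, versus the incumbent payoff $1$. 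Thus cooperation survives this deviation iff $(1-q)(1+g)\le1$, i.e. iff $q\ge g/(1+g)$.

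The crux is then to evaluate $d_1$ and to bound $q$. For $k=1$ the probability of observing $m=1$ against a random partner equals that partner's average defection rate, so the population defection rate obeys the self-consistency $\Pr(d)=\epsilon\,\bar d_C+(1-\epsilon)\,q\,\Pr(d)+O(\epsilon^2)$, where $\bar d_C$ is the average defection rate of the committed agents. This yields $\Pr(d)=\epsilon\,\bar d_C/(1-q)+O(\epsilon^2)$ and, by Bayes' rule, $\Pr(\text{committed}\mid m=1)\to 1-q$, \emph{independently} of the details of the commitments. Since a normal partner defects against a (cooperating) agent only with probability $O(\epsilon)$, all of $d_1$ comes from committed partners, so $d_1=(1-q)\,\beta$ for some $\beta\in[0,1]$, where $\beta$ is the probability that a committed partner who was observed defecting also defects against the agent. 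Feeding $d_1\le 1-q$ into the indifference condition gives $q(l+1)\le g+(l-g)(1-q)$, hence $q\le q_{\max}:=\frac{l}{2l+1-g}<1$, with equality attained when $\beta=1$, i.e. for always-defect commitments.

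I would finish by combining the two bounds. For the only-if direction, a direct computation gives $q_{\max}\ge g/(1+g)\iff(1-g)(l-g)\ge0$, which in a defensive PD ($l>g$) holds iff $g\le1$; so when $g>1$ every cooperative equilibrium has $q\le q_{\max}<g/(1+g)$, always-defecting is strictly profitable, and no cooperative perfect equilibrium exists. For the if direction, when $g<1$ I take the commitment set $\{s^{AD},s^{u}\}$ consisting of the always-defect strategy together with a totally mixed strategy $s^{u}$ of vanishing relative weight (which makes the set regular, as required, and drives $\beta\to1$, hence $q\to q_{\max}$). Then $(1-q_{\max})(1+g)<1$ strictly, so always-defecting is unprofitable; indifference after $m=1$ holds by choosing the normal mixture of $s^1$ (defect iff $m\ge1$) and $s^2$ (always cooperate) so that its average $s_1(d)$ equals the unique $q$ solving indifference; and cooperation is the \emph{strict} best reply after $m=0$ because the indirect cost of one extra defection, $q(l+1)=g+(l-g)d_1>g$, strictly exceeds its direct gain $g$ there.

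The hard part is the limiting Bayesian bookkeeping of the second paragraph: establishing $\Pr(\text{committed}\mid m=1)\to1-q$ and $d_1=(1-q)\beta$ rigorously from the consistency conditions, and checking that $\beta$ can indeed be pushed to $1$ so that $q_{\max}$ is attainable. Two smaller points need care. First, I must rule out deviations other than always-defecting; this follows because the deviator's payoff as a function of her defection probability $x_0$ after $m=0$ is $1+[g-q(l+1)]x_0+q(l-g)x_0^2$, whose quadratic coefficient $q(l-g)$ is positive, so the payoff is convex and maximized at $x_0=1$, making always-defect the only binding deviation. Second, the knife-edge $g=1$ gives $q_{\max}=g/(1+g)=\tfrac12$, so the leading-order comparison is a tie; there one must expand payoffs to order $\epsilon$ to confirm that cooperation fails, which is why the statement is the strict inequality $g<1$.
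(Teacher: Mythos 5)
Your proposal is correct and follows essentially the same route as the paper, which derives this proposition from the more general characterization in Proposition \ref{prop:Let--be-single-action-complex-result}: your $\beta$ is exactly the paper's $\beta_{\left(S^{C},\lambda\right)}$, and your Bayesian step $\Pr\left(\text{committed}\mid m=1\right)\rightarrow1-q$, the resulting $q=\bigl(g+\left(l-g\right)\beta\bigr)/\bigl(l+1+\left(l-g\right)\beta\bigr)$, the always-defect comparison $\left(1+g\right)\left(1-q\right)\leq1$, and the reduction to $\left(1-g\right)\left(l-g\right)\geq0$ all appear in Appendix \ref{subsec:Proof-of-Proposition}. The one small correction is the knife-edge $g=1$: no order-$\epsilon$ expansion is needed there, because any admissible commitment distribution must contain a totally mixed strategy, which forces $\beta<1$ strictly and hence $q<q_{\max}=g/\left(1+g\right)$ at leading order, so always-defecting is strictly profitable.
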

\begin{proof}[Sketch of Proof]
Similar arguments to those presented in part 1 of Theorem \ref{thm:cooperation-defensive-PDs}
imply that any distribution of commitment strategies induces a unique
average probability $q$ by which normal agents defect when observing
$m=1$, in any cooperative perfect equilibrium. This implies that
a deviator who always defects gets a payoff of $1+g$ in a fraction
$1-q$ of the interactions. One can show that such a deviator outperforms
the incumbents if\footnote{In environments with $k\geq2$, a deviator who always defects gets
a payoff of zero, regardless of the value of $q$ (because all agents
observe $m=k$ when being matched with such a deviator).} $g>1$ (whereas, if $g<1$, there are distributions of commitment
for which $1-q$ is sufficiently low such that the deviator is outperformed).
\end{proof}
Proposition \ref{Pro-stability-of-cooperation-when-k=00003D1} is
immediately implied by Proposition \ref{prop:Let--be-single-action-complex-result}
(Appendix \ref{subsec:Stability-of-Cooperation-single-action}), which
characterizes which distributions of commitments support cooperation
as a perfect equilibrium outcome in a defensive Prisoner's Dilemma
when $k=1$.

\section{General Observation Structures\label{sec:General-Observation-Structures}}

In this section we extend our analysis to general observation structures
in which the signal about the partner may also depend on the behavior
of other opponents against the partner.

\subsection{Definitions}

An \emph{observation structure }is a tuple\emph{ $\Theta=\left(k,B,o\right)$,
}where $k\in\mathbb{N}$ is the number of observed interactions, $B=\left\{ b_{1},..,b_{\left|B\right|}\right\} $
is a finite set of \emph{observations} that can be made in each interaction,
and the mapping $o:A\times A\rightarrow B$ describes the observed
signal as a function the action profile played in the interaction
(where the first action is the one played by the current partner,
and the second action is the one played by his opponent). Note that
observing actions (which was analyzed in the previous section) is
equivalent to having $B=A$ and $o\left(a,a'\right)=a$. 

In the results of this section we focus on three observation structures:
\begin{enumerate}
\item \emph{Observation of action profiles: }$B=A^{2}$ and $o\left(a,a'\right)=\left(a,a'\right).$
In this observation structure, each agent observes, in each sampled
interaction of her partner, both the action played by her partner
and the action played by her partner's opponent.
\item \emph{Observation of conflicts}: observing whether or not there was
mutual cooperation. That is, $B=\left\{ C,D\right\} $, $o\left(c,c\right)=C$,
and $o\left(a,a'\right)=D$ for any $\left(a,a'\right)\neq\left(c,c\right)$.
Such an observation structure (which we have not seen in the existing
literature) seems like a plausible way to capture non-verifiable feedback
about the partner's behavior. The agent can observe, in each sampled
past interaction of the partner, whether both partners were ``happy''
(i.e., mutual cooperation) or whether the partners complained about
each other (i.e., there was a conflict, at least one of the players
defected, and it is too costly for an outside observer to verify who
actually defected). 
\item Observation of actions against cooperation: $B=\left\{ CC,DC,*D\right\} $
and $o\left(c,c\right)=CC$, $o\left(d,c\right)=DC$, and $o\left(c,d\right)=o\left(d,d\right)=*D$.
That is, each agent (Alice) observes a ternary signal about each sampled
interaction of her partner (Bob): either both players cooperated,
or Bob unilaterally defected, or Bob's partner defected (and in this
latter case Alice cannot observe Bob's action). We analyze this observation
structure because it turns out to be an ``optimal'' observation
structure that allows cooperation to be supported as a perfect equilibrium
action in any Prisoner\textquoteright s Dilemma.
\end{enumerate}
In each of these cases, we let the mapping $o$\textbf{ }and the set
of signals\textbf{ }$B$ be implied by the context, and identify the
observation structure $\Theta$ with the number of observed interactions
$k$. 

In what follows we present the definitions of the main model (Sections
\ref{sec:Model} and \ref{sec:Solution-Concept}) that have to be
changed to deal with the general observation structure. Before playing
the game, each player independently samples $k$ independent interactions
of her partner. Let $M$ denote the set of feasible signals:
\[
M=\left\{ m\in\mathbb{N}^{\left|B\right|}\left|\sum_{i}m_{i}=k\right.\right\} ,
\]
where $m_{i}$ is interpreted as the number of times that observation
$b_{i}$ has been observed in the sample. When agents observe conflicts,
we simplify the notation by letting $M=\left\{ 1,...,k\right\} $,
and interpreting $m\in\left\{ 1,...,k\right\} $ as the number of
observed conflicts.

The definitions of a strategy and a perturbed environment remain the
same. Given a distribution of action profiles $\psi\in\Delta\left(A\times A\right)$,
let $\nu_{\psi}=\nu\left(\psi\right)\in\Delta\left(M\right)$ be the
multinomial distribution of signals that is induced by the distribution
of action profiles $\psi$, i.e., 
\[
\nu_{\psi}\left(m_{1},...,m_{\left|B\right|}\right)=\frac{k!}{m_{1}!\cdot...\cdot m_{\left|B\right|}!}\cdot\prod_{i=1}^{\left|B\right|}\left(\sum_{\left\{ \left(a,a'\right)\in A\times A|o\left(a,a'\right)=b_{i}\right\} }\psi\left(a,a'\right)\right)^{m_{i}}.
\]

The definition of a steady state is adapted as follows.
\begin{defn}[Adaptation of Def. \ref{def:state-1}]
\label{def:state-1}A \emph{steady state }(or \emph{state}) of a
perturbed environment \emph{$\left(\left(G,k\right),\left(S^{C},\lambda\right),\epsilon\right)$}
is a triple $\left(S^{N},\sigma,\theta\right)$, where $S^{N}\subseteq\mathcal{S}$
is a finite set of strategies, $\sigma\in\Delta\left(S^{N}\right)$
is a distribution, and $\theta:\left(S^{N}\cup S^{C}\right)\rightarrow\Delta\left(M\right)$
is a profile of signal distributions that satisfies for each signal
$m$ and each strategy $s$ the consistency requirement (\ref{eq:consistency-1})
below. Let $\psi_{s}\in\Delta\left(A\times A\right)$ be the (possibly
correlated) distribution of action profiles that is played when an
agent with strategy $s\in S^{N}\cup S^{C}$ is matched with a random
partner (given $\sigma$ and $\theta$); i.e., for each $\left(a,a'\right)\in A\times A$,
where $a$ is interpreted as the action of the agent with strategy
$s$, and $a'$ is interpreted as the action of her partner, let 
\begin{equation}
\psi_{s}\left(a,a'\right)=\sum_{s'\in S^{N}\cup S^{C}}\left(\left(1-\epsilon\right)\cdot\sigma\left(s'\right)+\epsilon\cdot\lambda\left(s'\right)\right)\cdot s\left(\theta_{s'}\right)\left(a\right)\cdot s'\left(\theta_{s}\right)\left(a'\right).\label{eq:psi-s}
\end{equation}
The \emph{consistency requirement} that the mapping $\theta$ has
to satisfy is 
\begin{equation}
\forall m\in M,\,\,s\in S^{N}\cup S^{C},\,\,\,\,\theta_{s}\left(m\right)=\nu\left(\psi_{s}\right)\left(m\right).\label{eq:consistency-1}
\end{equation}
\end{defn}
The definition of the long-run payoff of an incumbent agent remains
unchanged. We now adapt the definition of the payoff of an agent (Alice)
who deviates and plays a non-incumbent strategy. Unlike in the basic
model, in this extension there might be multiple consistent outcomes
following Alice's deviation, as demonstrated in Example \ref{exa:multiple-consistent-outcomes}.
\begin{example}
\label{exa:multiple-consistent-outcomes}Consider an unperturbed environment
$\left(G_{PD},3\right)$ with an observation of $k=3$ action profiles.
Consider a homogeneous incumbent population in which all agents play
the following strategy: $s^{*}\left(m\right)=d$ if $m\,\textrm{includes at least 2 interactions with\,}\left(d,d\right),$
and $s^{*}\left(m\right)=c$ otherwise. Consider the state $\left(\left\{ s^{*}\right\} ,\theta^{*}=0\right)$
in which everyone cooperates. Consider a deviator (Alice) who follows
the strategy of always defecting. Then there exist three consistent
post-deviation steady states (in all of which the incumbents continue
to cooperate among themselves): (1) all the incumbents defect against
Alice, (2) all the incumbents cooperate against Alice, and (3) all
the incumbents defect against Alice with a probability of 50\%.
\end{example}
Formally, we define a consistent distribution of signals for a deviator
as follows.
\begin{defn}
Given steady state $\left(S^{N},\sigma,\theta\right)$ and non-incumbent
strategy $\hat{s}\in\mathcal{S}\backslash\left(S^{N}\cup S^{C}\right)$,
we say that a distribution of signals $\theta_{\hat{s}}\in\Delta\left(M\right)$
is \emph{consistent} if 
\[
\forall m\in M,\,\,\,\,\theta_{\hat{s}}\left(m\right)=\nu\left(\psi_{\hat{s}}\right)\left(m\right),
\]
 where $\psi_{s}\in\Delta\left(A\times A\right)$ is defined as in
(\ref{eq:psi-s}) above. Let $\Theta_{\hat{s}}\subseteq\Delta\left(M\right)$
be the set of all \emph{consistent signal distributions} of strategy
$\hat{s}$.
\end{defn}
Given steady state $\left(S,\sigma,\theta\right)$, non-incumbent
strategy $\hat{s}\in\mathcal{S}\backslash\left(S^{N}\cup S^{C}\right)$,
and consistent signal distribution $\theta\left(s\right)\equiv\theta_{\hat{s}}\in\Delta\left(M\right)$,
let $\pi_{\hat{s}}\left(S,\sigma,\theta|\theta_{\hat{s}}\right)$
denote the deviator's (long-run) payoff given that in the post-deviation
steady state the deviator's distribution of signals is $\theta_{\hat{s}}$.
Formally: 
\[
\pi_{\hat{s}}\left(S,\sigma,\theta|\theta_{\hat{s}}\right)=\sum_{s'\in S^{N}\cup S^{C}}\left(\left(1-\epsilon\right)\cdot\sigma\left(s'\right)+\epsilon\cdot\lambda\left(s'\right)\right)\cdot\left(\sum_{\left(a,a'\right)\in A\times A}\hat{s}_{\theta\left(s'\right)}\left(a\right)\cdot s'_{\theta\left(\hat{s}\right)}\left(a'\right)\cdot\pi\left(a,a'\right)\right).
\]

Let $\pi_{\hat{s}}\left(S,\sigma,\theta\right)$ be the maximal (long-run)
payoff for a deviator who follows strategy $\hat{s}$ in a post-deviation
steady state: 
\begin{equation}
\pi_{\hat{s}}\left(S,\sigma,\theta\right):=_{\theta_{\hat{s}}\in\Theta_{\hat{s}}}\max\pi_{\hat{s}}\left(S,\sigma,\theta|\theta_{\hat{s}}\right).\label{eq:deviator-payoff-extension}
\end{equation}

\begin{rem}
Our results remain the same if one replaces the maximum function in
(\ref{eq:deviator-payoff-extension}) with a minimum function.
\end{rem}

\subsection{Acute and Mild Prisoner's Dilemma\label{subsec:Acute-and-Mild} }

In this subsection we present a novel classification of Prisoner's
Dilemma games that plays an important role in the results of this
section. Recall that the parameter $g$ of a Prisoner's Dilemma game
may take any value in the interval $\left[0,l+1\right]$ (if $g>l+1$,
then mutual cooperation is no longer the efficient outcome that maximizes
the sum of payoffs). We say that a Prisoner's Dilemma game is \emph{acute}
if $g$ is in the upper half of this interval (i.e., if $g>\frac{l+1}{2}$),
and \emph{mild} if it's in the lower half (i.e., if $g<\frac{l+1}{2}$).
The threshold, $g=\frac{l+1}{2}$, is characterized by the fact that
the gain from a single unilateral defection is exactly half the loss
incurred by the partner who is the sole cooperator. Hence, unilateral
defection is \emph{mildly tempting} in mild games and \emph{acutely
tempting} in acute games. An interpretation of this threshold comes
from a setup (which will be important for our results) in which an
agent is deterred from unilaterally defecting because it induces future
partners to unilaterally defect against the agent with some probability.
Deterrence in acute games requires this probability of being punished
to be more than 50\%, while a probability of below 50\% is enough
for mild games. Figure \ref{fig:Classification-of-Prisoner's} illustrates
the classification of games into offensive/defensive and mild/acute.

\begin{figure}[h]
\caption{\label{fig:Classification-of-Prisoner's}Classification of Prisoner's
Dilemma Games}

\centering{}\includegraphics[scale=0.07]{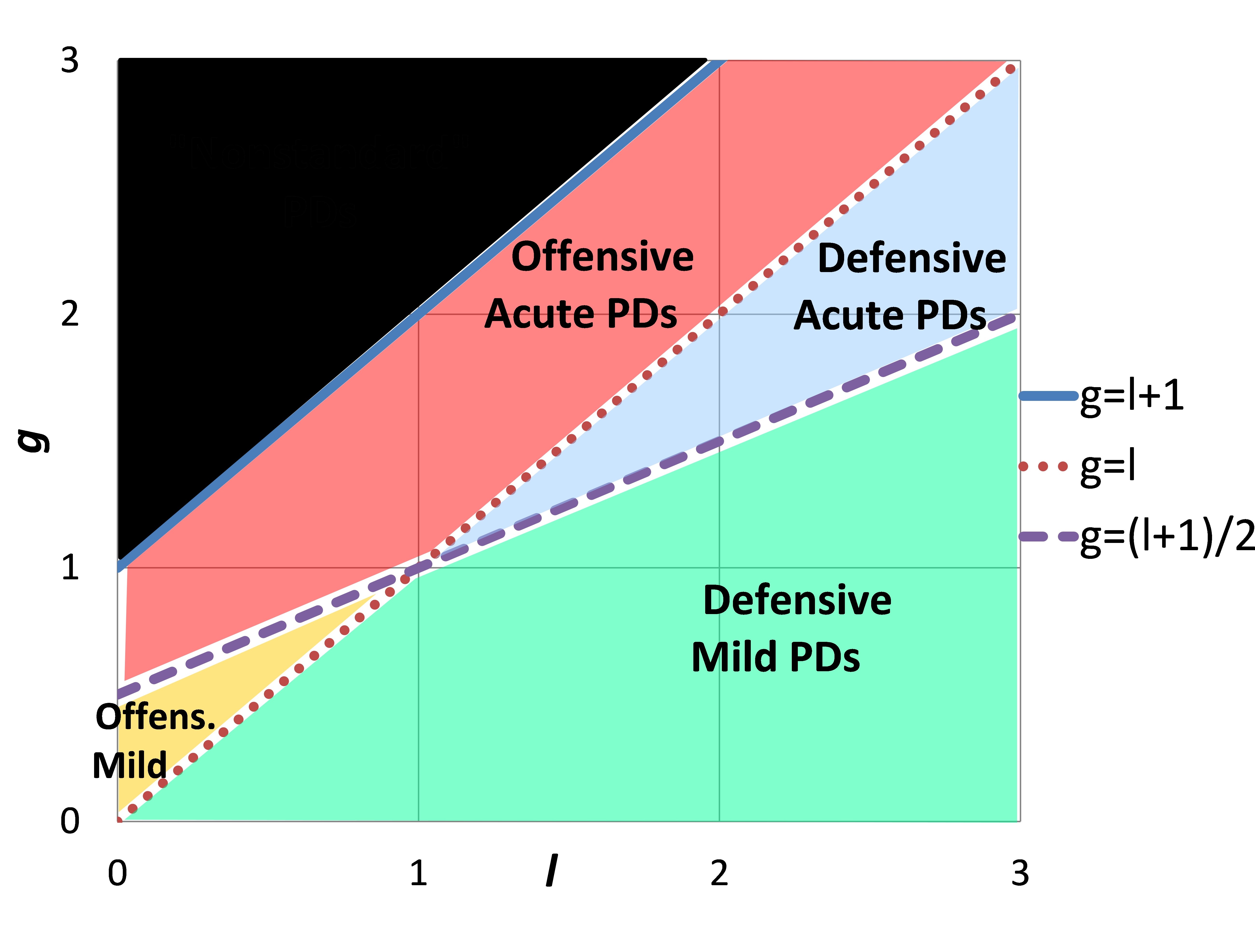}
\end{figure}

\begin{example}
Table \ref{tab:Prisoner-Dilemma-acute-mild} demonstrates the payoffs
of specific acute ($G_{A}$) and mild ($G_{M}$ ) Prisoner's Dilemma
games. In both examples $g=l$, i.e., the Prisoner's Dilemma game
is ``linear.'' This means that it can be described as a ``helping
game'' in which agents have to decide simultaneously whether to give
up a payoff of $g$ in order to create a benefit of $1+g$ for the
partner. In the acute game ($G_{A}$) on the left, $g=3$ and the
loss of a helping player amounts to more than half of of the benefit
to the partner who receives the help ($\frac{3}{3+1}=\frac{3}{4}>\frac{1}{2}$),
while in the mild game ($G_{M}$) on the right, $g=0.2$ and the loss
of the helping player is less than half of the benefit to the partner
who receives the help ($\frac{0.2}{0.2+1}=\frac{1}{6}<\frac{1}{2}$).
\end{example}
\begin{center}
\begin{table}[h]
\caption{\label{tab:Prisoner-Dilemma-acute-mild}Matrix Payoffs of Acute and
Mild Prisoner's Dilemma Games }

\centering{}%
\begin{tabular}{|c|c|c|}
\hline 
 & \emph{c} & \emph{d}\tabularnewline
\hline 
\emph{~~c~~} & \emph{\Large{}$_{\underset{\,}{1}}\,\,\,\,\,\,^{\overset{\,}{1}}$} & \emph{\Large{}$_{\underset{\,}{-l}}\,\,^{\overset{\,}{1+g}}$}\tabularnewline
\hline 
\emph{d} & \emph{\Large{}$_{\underset{\,}{1+g}}\,\,^{\overset{\,}{-l}}$} & \emph{\Large{}$_{0}\,\,\,\,\,\,^{0}$}\tabularnewline
\hline 
\multicolumn{3}{c}{General Prisoner\textquoteright s Dilemma }\tabularnewline
\multicolumn{3}{c}{$G_{PD}$: $g,l>0$ ,~ $g<l+1$}\tabularnewline
\end{tabular}~~~~~~~%
\begin{tabular}{|c|c|c|}
\hline 
 & \emph{c} & \emph{d}\tabularnewline
\hline 
\emph{~~c~~} & \emph{\Large{}$_{\underset{\,}{1}}\,\,\,\,\,\,^{\overset{\,}{1}}$} & \emph{\Large{}$_{\underset{\,}{-3}}\,\,^{\overset{\,}{4}}$}\tabularnewline
\hline 
\emph{d} & \emph{\Large{}$_{\underset{\,}{4}}\,\,^{\overset{\,}{-3}}$} & \emph{\Large{}$_{0}\,\,\,\,\,\,^{0}$}\tabularnewline
\hline 
\multicolumn{3}{c}{Ex. 3: Acute Prisoner\textquoteright s Dilemma}\tabularnewline
\multicolumn{3}{c}{$G_{A}$: $g=l=3>\frac{l+1}{2}=2$}\tabularnewline
\end{tabular}~~~~~~~%
\begin{tabular}{|c|c|c|}
\hline 
 & \emph{c} & \emph{d}\tabularnewline
\hline 
\emph{~~c~~} & \emph{\Large{}$_{\underset{\,}{1}}\,\,\,\,\,\,^{\overset{\,}{1}}$} & \emph{\Large{}$_{\underset{\,}{-0.2}}\,\,^{\overset{\,}{1.2}}$}\tabularnewline
\hline 
\emph{d} & \emph{\Large{}$_{\underset{\,}{1.2}}\,\,^{\overset{\,}{-0.2}}$} & \emph{\Large{}$_{0}\,\,\,\,\,\,^{0}$}\tabularnewline
\hline 
\multicolumn{3}{c}{Ex. 4: Mild Prisoner\textquoteright s Dilemma}\tabularnewline
\multicolumn{3}{c}{$G_{M}$: $g=l=0.2<\frac{l+1}{2}=0.6$}\tabularnewline
\end{tabular}
\end{table}
\par\end{center}

\subsection{Analysis of the Stability of Cooperation\label{subsec:Analysis-of-the-mild-acute}}

We first note that Proposition \ref{pro:defection-is-evol-stable}
is valid also in this extended setup, with minor adaptations to the
proof. Thus, always defecting is a perfect equilibrium regardless
of the observation structure. Next we analyze the stability of cooperation
in each of the three interesting observation structures.

The following two results show that under either \textbf{observation
of conflicts} or \textbf{observation of action profiles}, cooperation
is a perfect equilibrium iff the Prisoner's Dilemma is mild. Moreover,
in mild Prisoner's Dilemma games there is essentially a unique strategy
distribution that supports cooperation (which is analogous to the
essentially unique strategy distribution in Theorem \ref{thm:cooperation-defensive-PDs}).
Formally:
\begin{thm}
\label{thm:stable-cooperation-observing-conflicts}Let $E=\left(G_{PD},k\right)$
be an environment with observation of conflicts with $k\geq2$. 

\begin{enumerate}
\item If $G_{PD}$ is a mild PD ($g<\frac{l+1}{2}$), then:

\begin{enumerate}
\item If $\left(S^{*},\sigma^{*},\theta^{*}\equiv0\right)$ is a perfect
equilibrium then (1) for each $s\in S^{*}$, $s_{0}\left(c\right)=1$
and $s_{m}\left(d\right)=1$ for each $m\geq2$, and (2) there exist
$s,s'\in S^{*}$ such that $s_{1}\left(d\right)<1$ and $s'_{1}\left(d\right)>0$. 
\item Cooperation is a regular perfect equilibrium action.
\end{enumerate}
\item If $G_{PD}$ is an acute PD ($g>\frac{l+1}{2}$), then cooperation
is not a perfect equilibrium action.
\end{enumerate}
\end{thm}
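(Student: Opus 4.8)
The plan is to follow the proof of Theorem~\ref{thm:cooperation-defensive-PDs}, isolating the one feature that distinguishes the conflict structure: a unilateral defection stamps a conflict on \emph{both} players' records. This produces two effects whose interplay yields the mild/acute threshold. First, at the population level the per-interaction conflict rate is \emph{twice} the per-player defection rate (a conflict arises whenever \emph{either} participant defects), so in any steady state near full cooperation the chance of observing a single conflict about a random partner is $\Pr(m{=}1)\approx 2kD$, where $D=O(\epsilon)$ is the average defection rate. Second, when a single agent contemplates a deviation, only the conflicts on \emph{her own} record feed back into her payoff---the conflict she stamps on her victim punishes the victim, not her. The clash between these two appearances of ``$2$'' is what turns the defensive/offensive threshold $g=l$ of the basic model into the mild/acute threshold $g=\tfrac{l+1}{2}$.

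For Part~1(a) I would argue as in Theorem~\ref{thm:cooperation-defensive-PDs}. (i) Full cooperation in the limit forces $s_0(c)=1$: an agent defecting with positive probability at $m{=}0$ (whose probability tends to one) would sustain a positive defection rate and hence conflicts, contradicting $\theta^{*}\equiv 0$. (ii) Writing $q$ for the average probability that normal agents defect at $m{=}1$, if $q=0$ then defecting at $m{=}0$ with small probability $\rho$ earns $\rho g$ while the extra conflicts are either seen as $m{=}1$ (harmless, as $q=0$) or as $m\ge 2$ (probability $O(\rho^{2})$), a profitable deviation; so $q>0$. (iii) Consistency yields $D(1-2kq)=\epsilon\,\bar d_C+O(\epsilon^{2})$ with $\bar d_C>0$, hence $q<\tfrac{1}{2k}$, so not all agents defect at $m{=}1$. (iv) Both actions optimal at $m{=}1$ gives indifference; writing the net value of defecting at a signal with partner-defection probability $p$ as $(1-p)\bigl(g-kq(1+l)\bigr)+lp$, indifference at $m{=}1$ forces the constant $G:=g-kq(1+l)$ to be negative. (v) The net value $G+p(l-G)$ is then strictly increasing in $p$, so, using that $\Pr(d\,|\,m)$ increases in $m$, defection is the strict best reply for all $m\ge 2$ and cooperation the strict best reply at $m{=}0$.

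For Part~1(b) I would take the two-strategy population $\{s^{1},s^{2}\}$ (defect iff $m\ge 1$, resp.\ $m\ge 2$) with a regular totally mixed commitment and solve the $m{=}1$ indifference for the mixing weight, obtaining $q^{*}=\bigl(g\Pr(c\,|\,m{=}1)+l\Pr(d\,|\,m{=}1)\bigr)/\bigl(k(1+l)(1-\Pr(d\,|\,m{=}1))\bigr)$. Choosing the commitment so that the partner seldom defects conditional on $m{=}1$ sends $\Pr(d\,|\,m{=}1)\to 0$ and $q^{*}\to \tfrac{g}{k(1+l)}$, which meets the steady-state requirement $q^{*}<\tfrac{1}{2k}$ exactly when $g<\tfrac{l+1}{2}$. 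For Part~2 (acute) I would prove non-existence uniformly over commitments: in any fully cooperative equilibrium cooperation must be a best reply at $m{=}0$, and deterring the $m{=}0$ defection (direct gain $g$ against an almost-certain cooperator, indirect loss $kq(1+l)$ routed through one's own record) requires $q\ge \tfrac{g}{k(1+l)}$; but step~(iii) requires $q<\tfrac{1}{2k}$. These are incompatible precisely when $\tfrac{g}{1+l}>\tfrac12$, i.e.\ in acute games, so no cooperative perfect equilibrium exists.

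The step I expect to be the main obstacle is the monotonicity claim that $\Pr(d\,|\,m)$---the probability the partner defects against me given that I observe $m$ conflicts about him---increases in $m$. In the conflict structure this is less transparent than in the basic model, since an observed conflict may be caused by the partner's \emph{opponent} rather than by the partner himself. I would resolve it by a monotone-likelihood-ratio argument: a commitment type that defects more often against a clean record also generates conflicts at a higher rate, so a larger observed conflict count shifts the posterior toward higher-defection types and raises $\Pr(d\,|\,m)$. The second delicate point, where sign and factor errors are easy, is to keep the two roles of the doubling apart---the factor $2k$ in the consistency equation of step~(iii) versus the lone $k$ in the deterrence inequality of Part~2---since it is exactly their ratio that produces the $\tfrac12$ in the threshold.
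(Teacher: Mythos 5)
Your proposal is correct and follows essentially the same route as the paper's proof: the same $\{s^{1},s^{2}\}$ construction, the same indifference condition $q=\bigl(\mu l+(1-\mu)g\bigr)/\bigl((1-\mu)\,k\,(l+1)\bigr)$ with the $(1-\mu)$ factor reflecting that defecting against a defector creates no new conflict, and the same clash between the consistency requirement $2kq<1$ and the deterrence bound $q\ge g/\bigl(k(l+1)\bigr)$ that produces the threshold $g=\tfrac{l+1}{2}$. The only substantive difference is at the step you flag as the main obstacle: the paper does not need a monotone-likelihood-ratio argument, since in the $\epsilon\to0$ limit a normal agent's record shows two or more conflicts only with probability $O(\epsilon^{2})$, so conditional on $m\ge2$ the partner is committed with probability $1-O(\epsilon)$, and the comparison with $m=1$ (where the posterior on committed types is bounded away from one) follows directly.
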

\begin{proof}[Sketch of proof]
 The argument for part 1(a) is analogous to Theorem \ref{thm:cooperation-defensive-PDs}.
In what follows we sketch the proofs of part 1(b) and part 2. Fix
a distribution of commitments, and a commitment level $\epsilon\in\left(0,1\right)$.
Let $m$ denote the number of observed conflicts and define $s^{1}$
and $s^{2}$ as before, but with the new meaning of $m$. Consider
the following candidate for a perfect equilibrium $\left(\left\{ s^{1},s^{2}\right\} ,\left(q,1-q\right),\theta^{*}\equiv0\right)$.
Here, the probability $q$ will be determined such that both actions
are best replies when an agent observes a single conflict. That is,
the direct benefit from her defecting when observing $m=1$ (the LHS
of the equation below) must balance the indirect loss due to inducing
future partners who observe these conflicts to defect (the RHS, neglecting
terms of $O\left(\epsilon\right)$). The RHS is calculated by noting
that defection induces an additional conflict only if the current
partner has cooperated and that, on expectation, each such additional
conflict is observed by $k$ future partners, each of whom defects
with an average probability of $q$). Recall that $\Pr\left(d|m=1\right)$
($\Pr\left(c|m=1\right)$) is the probability that a random partner
is going to defect (cooperate) conditional on the agent observing
$m=1$.

\[
\Pr\left(m=1\right)\cdot\left(\left(l\cdot\Pr\left(d|m=1\right)\right)+g\cdot\Pr\left(c|m=1\right)\right)=\Pr\left(m=1\right)\cdot k\cdot q\cdot\Pr\left(c|m=1\right)\cdot\left(l+1\right)
\]
\begin{equation}
\Leftrightarrow q\cdot k=\frac{\left(l\cdot\Pr\left(d|m=1\right)\right)+g\cdot\Pr\left(c|m=1\right)}{\Pr\left(c|m=1\right)\cdot\left(l+1\right)}.\label{eq:q-k-conflict}
\end{equation}
One can see that the RHS is increasing in $\Pr\left(d|m=1\right)$.
The minimal bound on the value of $q$ is obtained when $\Pr\left(d|m=1\right)=0$.
In this case $q\cdot k=\frac{g}{l+1}$. 

Suppose that the game is acute. In this case $q\cdot k>0.5$. Suppose
that the average probability of defection in the population is $\Pr\left(d\right)$.
Since there is full cooperation in the limit we have $\Pr\left(d\right)=O\left(\epsilon\right)$.
This implies that a fraction $2\cdot\Pr\left(d\right)+O\left(\epsilon^{2}\right)$
of the population is involved in conflicts. This in turn induces the
defection of a fraction $2\cdot\Pr\left(d\right)\cdot k\cdot q+O\left(\epsilon^{2}\right)$
of the normal agents (because a normal agent defects with probability
$q$ upon observing at least one conflict in the $k$ sampled interactions).
Since the normal agents constitute a fraction $1-O\left(\epsilon\right)$
of the population we must have $\Pr\left(d\right)=2\cdot\Pr\left(d\right)\cdot k\cdot q+O\left(\epsilon^{2}\right)$.
However, in an acute game, $2\cdot k\cdot q>1$ leads to the contradiction
that $\Pr\left(d\right)<\Pr\left(d\right)$. Thus, if $2\cdot k\cdot q>1$,
then defections are contagious, and so there is no steady state in
which only a fraction $O\left(\epsilon\right)$ of the population
defects.

Suppose that the game is mild. One can show that $\Pr\left(d|m=1\right)$
is decreasing in $q$, and that it converges to zero when $k\cdot q\nearrow0.5$.
(The reason is that when $k\cdot q$ is close to 0.5 each defection
by a committed agent induces many defections by normal agents and,
conditional on observing $m=1$, the partner is likely to be normal
and to cooperate when being matched with a normal agent.) It follows
that the RHS of Eq. (\ref{eq:q-k-conflict}) is decreasing in $q$
and approaches the value $\frac{g}{l+1}$ when $k\cdot q\nearrow0.5$.
Since the game is mild, $\frac{g}{l+1}<0.5$. Hence there is some
$q\cdot k<0.5$ that solves Eq. (\ref{eq:q-k-conflict}), and in which
the normal agents defect with a low probability of ($O\left(\epsilon\right)$). 
\end{proof}

\begin{thm}
\label{thm:stable-cooperation-observing-action-profiles}Let $E=\left(G_{PD},k\right)$
be an environment with observation of action profiles and $k\geq2$. 
\begin{enumerate}
\item If $G_{PD}$ is a mild PD ($g<\frac{l+1}{2}$), then cooperation is
a regular perfect equilibrium action.
\item If $G_{PD}$ is an acute PD ($g>\frac{l+1}{2}$), then cooperation
is not a perfect equilibrium action.
\end{enumerate}
\end{thm}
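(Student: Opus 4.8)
The plan is to run both parts in close parallel with Theorems \ref{thm:cooperation-defensive-PDs} and \ref{thm:stable-cooperation-observing-conflicts}, using as the relevant statistic the number $m$ of sampled interactions in which the partner was observed to \emph{defect against a cooperator}, i.e.\ those with observation $(d,c)$. For part~1 I would take the same two-strategy family as before, $s^{1}$ defecting iff $m\geq 1$ and $s^{2}$ defecting iff $m\geq 2$, with both strategies cooperating after every other observation (in particular after $(c,d)$ and $(d,d)$). Mixing them with weights $(q,1-q)$ against $\theta^{*}\equiv 0$, I would pin down $q$ by the indifference condition at a single observed $(d,c)$. Because a unilateral defection of the agent is recorded as $(d,c)$ only when the current partner happens to cooperate, the direct-gain/indirect-loss balance reproduces exactly equation (\ref{eq:q-k-conflict}); hence the smallest attainable value of $kq$ is again $g/(l+1)$, reached as $\Pr(d\mid m=1)\to 0$, and in a mild game this is strictly below $1/2$.

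The genuinely new verification, relative to Theorem \ref{thm:cooperation-defensive-PDs}, is that cooperation survives as a best reply at the two \emph{new} signals $(c,d)$ and $(d,d)$. Substituting the $(d,c)$-indifference to eliminate $kq(l+1)$, the net gain from defecting after a single $(d,d)$ collapses to a positive multiple of $\Pr(d\mid (d,d))-\Pr(d\mid (d,c))$, so everything turns on the comparison of these two conditional probabilities; I would show that in a mild game the consistent profile can be arranged so that this difference is nonpositive, making the rule incentive compatible. It then remains to check the standard items: that no deviation is profitable, which here requires handling the multiplicity of consistent post-deviation profiles in (\ref{eq:deviator-payoff-extension}) and verifying in particular that the always-defecting deviator cannot secure a profile in which a positive fraction of incumbents cooperate with her; that the induced defection frequency stays $O(\epsilon)$; and that the construction is a regular perfect equilibrium satisfying the refinements, exactly as before. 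Defection is a perfect equilibrium action by the action-profile version of Proposition \ref{pro:defection-is-evol-stable}.

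For part~2 I would argue by contradiction. If $(S^{*},\sigma^{*},\theta^{*}\equiv 0)$ were a cooperative perfect equilibrium, the arguments of Theorem \ref{thm:cooperation-defensive-PDs}(1) would force all normal agents to cooperate after the all-$(c,c)$ signal and would force the punishment intensity needed to deter a unilateral defection to satisfy $kq\geq g/(l+1)$, which in an acute game exceeds $1/2$. I would then establish the inference emphasized in the Introduction: with punishment this strong, a single observed $(d,c)$ is predominantly generated by a \emph{normal} agent who is himself punishing (and who will therefore cooperate against the clean deviator), whereas a single observed $(d,d)$ is comparatively more indicative of a committed, and hence defecting, partner. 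Formally this is the inequality $\Pr(d\mid (d,d))>\Pr(d\mid (d,c))$, and by the reduction of the previous paragraph it makes defection the \emph{unique} best reply after $(d,d)$. Consequently every normal agent must defect upon observing a single $(d,d)$, and I would argue that this reinstates a contagion of defections: once $(d,d)$ is a punished event, the self-reinforcing spread familiar from the acute conflicts case of Theorem \ref{thm:stable-cooperation-observing-conflicts} prevents the defection frequency from remaining $O(\epsilon)$, contradicting $\theta^{*}\equiv 0$.

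I expect the crux, and the main obstacle, to be the leading-order computation, via the consistency (fixed-point) equations of Definition \ref{def:state-1}, of the conditional probabilities $\Pr(d\mid (d,c))$ and $\Pr(d\mid (d,d))$, and the proof that the sign of their difference flips exactly at $g=\tfrac{l+1}{2}$. This demands tracking, signal by signal and to first order in $\epsilon$, the relative contributions of committed agents and of punishing normal agents to the $(d,c)$ and $(d,d)$ tallies; since the normal contribution scales with the punishment intensity $kq$, it is precisely this dependence that transmits the mild/acute threshold into the best-reply comparison at $(d,d)$ and thereby into the existence or non-existence of a cooperative steady state. A secondary difficulty, already visible in Example \ref{exa:multiple-consistent-outcomes}, is that the deviator's payoff in (\ref{eq:deviator-payoff-extension}) is a selection over several consistent post-deviation profiles, so each deviation check in part~1 must be carried out against the \emph{most} favorable such profile.
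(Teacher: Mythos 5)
Your overall architecture is the paper's: a two-threshold strategy pair mixed to make both actions best replies at a single observed unilateral defection, the indifference condition reproducing Eq.~(\ref{eq:q-k-conflict}) so that $kq\to g/(l+1)$, and the mild/acute threshold entering through the Bayesian comparison of who generates $(d,c)$ versus $(d,d)$ records (normal ``punishers'' versus committed agents), with the multiplier $\tfrac{kq}{1-kq}$ crossing $1$ exactly at $kq=1/2$. Your part~2 matches the paper's argument in substance (the paper closes the contagion step via a dichotomy on the defection probability after observing $(c,d)$, but the idea is the same). However, your part~1 construction has a genuine flaw: you have $s^{1},s^{2}$ count only $(d,c)$ observations and \emph{cooperate after every other observation, in particular after $(c,d)$ and $(d,d)$}. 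The paper's strategies instead defect whenever $d(m)\geq2$, i.e.\ whenever at least two of the $k$ sampled interactions are anything other than mutual cooperation, and this is not cosmetic. First, at a signal with two or more non-$(c,c)$ outcomes the partner is committed with probability $1-O(\epsilon)$, and since a $(d,d)$ record carries no indirect cost, defection there yields a strict net gain of order $\mu l>0$; so your prescribed cooperation at such signals is not a best reply and the state is not a Nash equilibrium of the perturbed environment. Second, and worse, your rule leaves the always-defecting deviator unpunished in exactly the consistent post-deviation profile that Eq.~(\ref{eq:deviator-payoff-extension}) lets her select: if incumbents defect against her, all her records are $(d,d)$, she shows $m=0$ under your count, and incumbents then cooperate \textemdash{} so the only consistent profiles are interior. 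For example with $k=2$, $g=2$, $l=5$ (mild and defensive), $kq=g/(l+1)=1/3$ gives the unique consistent defection probability against her as $p\approx0.42$, so she earns $(1-p)(1+g)\approx1.74>1$ and strictly outperforms the incumbents. The paper's $d(m)\geq2$ trigger is what closes both holes.

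A second, smaller gap: you never choose the commitment distribution, yet the key inequality $\Pr(d\mid(d,d))\leq\Pr(d\mid(d,c))$ is not something that can simply ``be arranged'' for an arbitrary regular perturbation. The paper takes committed agents who defect with a small probability $\alpha\ll1/k$ (and perturbs to $\{s^{\alpha-\delta},s^{\alpha+\delta}\}$ for regularity); only then is a mutual defection almost surely a committed--normal encounter, giving $\Pr(\text{committed}\mid(d,d))=\tfrac12+O(\epsilon)$ against $\Pr(\text{committed}\mid(d,c))>\tfrac12$ in the mild case. Indeed, Appendix~\ref{subsec:Strictly-Perfect} notes that cooperation is \emph{not} strictly perfect under observation of action profiles precisely because high-defection commitment types break this comparison. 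So the existence claim in part~1 requires exhibiting this specific perturbation, not just any regular one. With these two repairs \textemdash{} the $d(m)\geq2$ punishment trigger and the low-$\alpha$ commitment distribution \textemdash{} your argument becomes the paper's proof.
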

\begin{proof}[Sketch of proof]
Using arguments that are familiar from above one can show that in
any perfect equilibrium that supports cooperation, normal agents have
to defect with an average probability of $q\in\left(0,1\right)$ when
observing a single unilateral defection (and $k-1$ mutual cooperations),
and defect with a smaller probability when observing a single mutual
defection (since this is necessary in order for a normal agent to
have better incentives to cooperate against a partner who is more
likely to cooperate). The value of $q$ is determined by Eq. (\ref{eq:q-k-conflict})
above, implying that both actions are best replies conditional on
an agent observing the partner to be the sole defector once, and to
be involved in mutual cooperation in the remaining $k-1$ observed
action profiles. Let $\epsilon$ be the share of committed agents,
and let $\varphi$ be the average probability that a committed agent
unilaterally defects. In order to simplify the sketch of the proof,
we will focus on the case in which the committed agents defect with
a small probability when observing the partner to have been involved
only in mutual cooperations, which implies, in particular, that $\varphi<<1$
(the formal proof in the Appendix does not make this simplifying assumption).
The unilateral defections of the committed agents induce a fraction
$\epsilon\cdot\varphi\cdot k\cdot q+O\left(\epsilon^{2}\right)+O\left(\varphi^{2}\right)$
of the normal agents to defect when being matched against committed
agents (because a normal agent defects with probability $q$ upon
observing a single unilateral defection in the $k$ sampled interactions).
These unilateral defections of normal agents against committed agents
induce a further $\left(\epsilon\cdot\varphi\cdot k\cdot q\right)\cdot k\cdot q+O\left(\epsilon^{2}\right)$
defections of normal agents against other normal agents. Repeating
this argument we come to the conclusion that the average probability
of a normal agent being the sole defector is (neglecting terms of
$O\left(\epsilon^{2}\right)$ and $O\left(\varphi^{2}\right)$): 
\[
\epsilon\cdot\varphi\cdot k\cdot q\cdot\left(1+k\cdot q+\left(k\cdot q\right)^{2}+...\right)=\epsilon\cdot\varphi\cdot\frac{k\cdot q}{1-k\cdot q.}
\]
As discussed above, in acute games, the value of $k\cdot q$ must
be larger than $0.5$, which implies that $\frac{k\cdot q}{1-k\cdot q}>1$.
This implies that conditional on an agent observing the partner to
be the sole defector once, the posterior probability that the partner
is normal is: 
\[
\frac{\epsilon\cdot\varphi\cdot\frac{k\cdot q}{1-k\cdot q}}{\epsilon\cdot\varphi+\epsilon\cdot\varphi\cdot\frac{k\cdot q}{1-k\cdot q}}=\frac{\frac{k\cdot q}{1-k\cdot q}}{1+\frac{k\cdot q}{1-k\cdot q}}>0.5.
\]
Thus, normal agents are more likely to unilaterally defect than committed
agents. One can show that when there is a mutual defection, it is
most likely that at least one of the agents involved is committed.
This implies that the partner is more likely to defect when he is
observed to be involved in mutual defection relative to being observed
to be the sole defector. This implies that defection is the unique
best reply when observing a single mutual defection, and this contradicts
the assumption that normal agents cooperate with positive probability
when observing a single mutual defection. When the game is mild, a
construction similar to the previous proofs supports cooperation as
a perfect equilibrium. 
\end{proof}
Our last result studies the observation of actions against cooperation,
and it shows that cooperation is a perfect equilibrium action in any
underlying Prisoner's Dilemma. Formally:
\begin{thm}
\label{thmtertiary-observation}Let $E=\left(G_{PD},k\right)$ be
an environment with observation of actions against cooperation and
$k\geq2$. Then cooperation is a regular perfect equilibrium action.
\end{thm}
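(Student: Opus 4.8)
The plan is to reproduce the constructive part (part~1(b)/part~2) of Theorems~\ref{thm:cooperation-defensive-PDs} and~\ref{thm:stable-cooperation-observing-conflicts}, while exploiting the one structural feature that distinguishes this observation structure: a unilateral defection is recorded (as the signal $DC$) only in the \emph{defector's} history, whereas the victim's interaction is recorded as the uninformative signal $*D=o\left(c,d\right)$. First I would fix an arbitrary regular distribution of commitments $\left(S^{C},\lambda\right)$ and a small $\epsilon>0$, let $n$ denote the number of $DC$ observations in the sample of $k$ interactions, and define the two incumbent strategies $s^{1}$ (defect iff $n\geq1$) and $s^{2}$ (defect iff $n\geq2$), treating $CC$ and $*D$ alike as non-incriminating. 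I would then propose the state $\left(\left\{ s^{1},s^{2}\right\} ,\left(q,1-q\right),\theta^{*}\right)$ in which everyone cooperates in the limit, so $\theta^{*}$ puts all its mass on the all-$CC$ signal, and pin down $q$ by requiring that an agent observing a single $DC$ be indifferent between cooperating and defecting.

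The indifference condition is structurally identical to Eq.~(\ref{eq:q-k-conflict}): the direct gain $l\cdot\Pr\left(d|n=1\right)+g\cdot\Pr\left(c|n=1\right)$ must equal the indirect loss $k\cdot q\cdot\Pr\left(c|n=1\right)\cdot\left(l+1\right)$, where $\Pr\left(c|n=1\right)$ enters the indirect term because Alice's defection damages her own reputation (produces a fresh $DC$) only when her current partner cooperates, since defecting against a defector yields $*D$ and is therefore costless. This gives $k\cdot q=\frac{l\cdot\Pr\left(d|n=1\right)+g\cdot\Pr\left(c|n=1\right)}{\Pr\left(c|n=1\right)\cdot\left(l+1\right)}$, whose minimal value, attained as $\Pr\left(d|n=1\right)\to0$, is $\frac{g}{l+1}$. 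The decisive departure from the conflicts and action-profile settings is the contagion threshold. Writing $p$ for the average probability that a sampled interaction shows $DC$ and $\varphi$ for the average rate at which committed agents unilaterally defect, the steady-state fixed point is $p=\epsilon\cdot\varphi+k\cdot q\cdot p+O\left(\epsilon^{2}\right)$: a committed unilateral defection contributes directly, and each $DC$ induces a fraction $k\cdot q$ of normal observers to defect unilaterally, but now only \emph{one} record (the defector's) is marked, so the multiplier is $k\cdot q$ rather than the $2kq$ of the conflicts case. Hence a consistent steady state with $p=O\left(\epsilon\right)$ exists precisely when $k\cdot q<1$.

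The punchline, which is exactly what makes cooperation supportable for \emph{every} Prisoner's Dilemma, is that $\frac{g}{l+1}<1$ holds automatically because $g<l+1$ is built into the definition of the game. I would then close the existence argument as in the mild case of Theorem~\ref{thm:stable-cooperation-observing-conflicts}: $\Pr\left(d|n=1\right)$ is decreasing in $q$ and tends to $0$ as $k\cdot q\nearrow1$ (near the threshold each committed defection triggers many normal defections, so a single-$DC$ partner is almost surely a normal agent who will cooperate against a clean incumbent), so the right-hand side of the indifference equation falls to $\frac{g}{l+1}<1$, and by continuity there is a solution $q^{*}$ with $k\cdot q^{*}<1$. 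For this $q^{*}$ the fixed point delivers $p=O\left(\epsilon\right)$; moreover the net incentive to defect, $l\cdot\Pr\left(d|n\right)+\left[g-kq^{*}\left(l+1\right)\right]\cdot\Pr\left(c|n\right)$, is strictly increasing in the posterior $\Pr\left(d|n\right)$, so it is negative at $n=0$ (cooperation strictly optimal) and positive at $n\geq2$ (defection strictly optimal). Thus $\left(\left\{ s^{1},s^{2}\right\} ,\left(q^{*},1-q^{*}\right),\theta^{*}\right)$ is a Nash equilibrium of each perturbed environment and converges to full cooperation as $\epsilon\to0$.

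The step I expect to be the main obstacle is the rigorous handling of the signals containing the uninformative observation $*D$. Because the committed strategies are arbitrary and all signals occur with positive probability, I must verify that the proposed strategies are best replies at \emph{every} signal, including those in which Alice observes her partner's opponents defecting; in particular I must rule out that the retaliation channel (a partner carrying a past $DC$ draws defections, which then appear as $*D$ in his record) makes $*D$ so incriminating that defection becomes strictly optimal there and unravels the construction. I expect to control this by bounding all such posteriors to first order in $\epsilon$, showing that conditional on any non-incriminating signal the partner still cooperates with probability $1-O\left(\epsilon\right)$, so that the single-$DC$ indifference is the only binding incentive and the remaining best-reply checks reduce to sign conditions valid for all $g,l>0$.
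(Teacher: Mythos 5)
Your high-level architecture matches the paper's: the same indifference equation as in the conflicts case, the observation that the contagion multiplier drops to $k\cdot q$ because only the defector's record is marked, and the punchline that the resulting lower bound $\nicefrac{g}{l+1}<1$ holds for every Prisoner's Dilemma. But there are two genuine gaps. First, you fix an \emph{arbitrary} regular distribution of commitments and try to carry the construction through for it. That is provably too much: the paper shows (Appendix \ref{subsec:Strictly-Perfect}) that under this observation structure cooperation is \emph{not} a perfect equilibrium action with respect to commitment distributions in which committed agents defect with high probability, because then normal agents defect back with probability one and the posterior ordering across signals that your construction needs collapses. The theorem only requires \emph{one} regular distribution, and the paper's proof accordingly chooses committed agents who defect with a small probability $\alpha\ll\nicefrac{1}{k}$ (the pair $\left\{ s^{\alpha-\delta},s^{\alpha+\delta}\right\} $); your proof must do the same.

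Second, and fatally for the construction as written, your strategies treat $*D$ as fully non-incriminating, so $s^{1}$ and $s^{2}$ both cooperate after observing, say, two $*D$'s and $k-2$ mutual cooperations. With the (necessary) low-$\alpha$ commitment distribution, a normal agent accumulates two $*D$'s only with probability $O\left(\epsilon^{2}\right)$, whereas a committed agent does so with probability of order $\alpha^{2}$ independent of $\epsilon$; hence conditional on that signal the partner is committed with probability converging to one and defects with probability close to $\alpha$, which strictly exceeds $\mu=\Pr\left(d|\text{one }DC\right)\approx\alpha\cdot\Pr\left(s^{\alpha}|DC\right)$. By the very monotonicity you establish (the net gain from defecting is increasing in the posterior that the partner defects, since $k\cdot q\cdot\left(l+1\right)>g$), defection is then the \emph{unique} best reply at that signal, so your proposed state is not a Nash equilibrium of the perturbed environments. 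Your suggested patch\textemdash that at every signal with no $DC$ the partner cooperates with probability $1-O\left(\epsilon\right)$\textemdash is false for multi-$*D$ signals, and would in any case be insufficient: what matters is the comparison with $\mu$, which is itself only $O\left(\alpha\right)$. The paper's strategies instead defect whenever at most $k-2$ of the sampled interactions show mutual cooperation (so repeated $*D$'s are incriminating in aggregate), and cooperate after a \emph{single} $*D$ because there the posterior on defection lies strictly below $\mu$ (mutual defections being much rarer than unilateral defections against the partner). With those corrected strategies and the specific commitment distribution, the rest of your argument goes through essentially as in the paper's proofs of Theorems \ref{thm:stable-cooperation-observing-conflicts} and \ref{thm:stable-cooperation-observing-action-profiles}.
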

The intuition behind the proof is as follows. Not allowing Alice to
observe Bob's behavior when his past opponent has defected helps to
sustain cooperation because it implies that defecting against a defector
does not have any negative indirect effect (in any steady state) because
it is never observed by future opponents. This encourages agents to
defect against partners who are more likely to defect, and allows
cooperation to be sustained regardless of the values of $g$ and $l$.

Table \ref{tab:Stability-of-Cooperation} summarizes our analysis
and shows the characterization of the conditions under which cooperation
can be sustained as a perfect equilibrium outcome in environments
in which agents observe at least 2 actions.

\begin{table}[h]
\caption{\label{tab:Stability-of-Cooperation}Summary of Key Results: When
Is Cooperation a Perfect Equilibrium Outcome?}
\begin{tabular}{|c|c|>{\centering}p{2.2cm}|>{\centering}p{2.2cm}|>{\centering}m{2.2cm}|>{\centering}p{2.4cm}|}
\hline 
\multirow{2}{*}{Category of PD} & \multirow{2}{*}{Parameters} & \multicolumn{4}{c|}{Observation Structure (any $k\geq2$)}\tabularnewline
\cline{3-6} \cline{4-6} \cline{5-6} \cline{6-6} 
 &  & Actions & Conflicts & Action profiles & Actions against cooperation\tabularnewline
\hline 
Mild \& Defensive & $\overset{\overset{\overset{\,}{\,}}{\,}}{\underset{}{g}}\underset{\,}{<}\min\left(l,\frac{l+1}{2}\right)$ & \textbf{\textcolor{green}{Y }} & \multirow{2}{2.2cm}{\textbf{\textcolor{green}{~~~~~$\begin{array}{c}
\\
\mathbf{Y}
\end{array}$ }}} & \multirow{2}{2.2cm}{\textbf{\textcolor{green}{~~~~~~$\begin{array}{c}
\\
\mathbf{Y}
\end{array}$ }}} & \multirow{4}{2.4cm}{\textbf{\textcolor{green}{~~~~~~$\begin{array}{c}
\begin{array}{c}
\\
\\
\\
\end{array}\\
\mathbf{Y}
\end{array}$ }}}\tabularnewline
\cline{1-3} \cline{2-3} \cline{3-3} 
Mild \& Offensive & $l<\overset{\overset{\overset{\,}{\,}}{\,}}{\underset{}{g}}<\frac{l+1}{2}$ & \textbf{\textcolor{red}{N }} &  &  & \tabularnewline
\cline{1-5} \cline{2-5} \cline{3-5} \cline{4-5} \cline{5-5} 
Acute \& Defensive & $\frac{l+1}{2}<\overset{\overset{\overset{\,}{\,}}{\,}}{\underset{}{g}}<l$  & \textbf{\textcolor{green}{Y }} & \multirow{2}{2.2cm}{\textbf{\textcolor{green}{~~~~~$\begin{array}{c}
\\
\mathbf{{\color{red}N}}
\end{array}$}}} & \multirow{2}{2.2cm}{\centering{}\textbf{\textcolor{green}{$\begin{array}{c}
\\
\mathbf{{\color{red}N}}
\end{array}$}}} & \tabularnewline
\cline{1-3} \cline{2-3} \cline{3-3} 
Acute \& Offensive & $\max\left(l,\frac{l+1}{2}\right)<\overset{\overset{\overset{\,}{\,}}{\,}}{\underset{}{g}}$ & \textbf{\textcolor{red}{N}}\textbf{ } &  &  & \tabularnewline
\hline 
\end{tabular}
\end{table}

\section{Related Literature\label{sec:Related-Literature}}

In what follows we discuss related literature that was not discussed
above. Related experimental literature is discussed in Appendix \ref{sec:Empirical-Predictions}.

\paragraph{Models with Rare Committed Types}

Various papers have shown that when a patient long-run agent (she)
plays a repeated game against partners who can observe her entire
history of play, and there is a small probability of the agent being
a commitment type, then the agent can guarantee herself a high payoff
in any equilibrium by mimicking an irrational type committed to Stackelberg-leader
behavior (e.g., \citealp{kreps1982rational,fudenberg1989reputation,celetani1996maintaining};
see \citealp{mailath2006repeated}, for a textbook analysis and survey).
When both sides of the game are equally patient, and, possibly, both
sides have a small probability of being a commitment type, then the
specific details about the set of feasible commitment types, the underlying
game, and the discount factor are important in determining whether
an agent can guarantee a high Stackelberg-leader payoff or whether
a folk theorem result holds and the set of equilibrium payoffs is
the same as in the case of complete information (see, e.g., \citealp{cripps1995reputation,chan2000non,cripps2005reputation,horner2009belief,Atakan2011,pkeski2014repeated}).

One contribution of our paper is to demonstrate that the introduction
of a small probability that an agent is committed may have qualitatively
different implications in repeated games with random matching.\footnote{We are aware of only one paper that introduces rare commitment types
to repeated games with random matching. \citet{dilme2016helping}
constructs cooperative ``tit-for-tat''-like equilibria that are
robust to the presence of committed agents, in the borderline case
in which $g=l$ is the underlying Prisoner's Dilemma (see the discussion
of the case of $g=l$ in Remark \ref{enu:The-threshold-case} in Section
\ref{subsec:Stability-of-Cooperation}). \citet{ghosh1996cooperation}
study a somewhat related setup (which is further discussed below)
in which the presence of a non-negligible share of agents who are
committed to always defecting allows cooperation to be sustained among
the normal agents in voluntarily separable interactions.} In defensive games, the presence of a few committed agents in the
population implies that there is a unique stationary strategy to sustain
full cooperation. In offensive games with observation of actions,
the presence of committed agents implies that the low payoff of zero
(of mutual defection) is the unique equilibrium payoff in the stationary
model (and it rules out the highest symmetric payoff of 1 in the conventional
model). \footnote{\citet{ely2008reputation} show a related result in a setup in which
a long-run player faces a sequence of short-run players. They show
that if the participation of the short-run players is optional, and
if every action of the long-run player that makes the short-run players
want to participate can be interpreted as a signal that the long-run
player is ``bad,'' then reputation uniquely chooses a low equilibrium
payoff to the long-run player. } 

\paragraph{Image Scoring}

In an influential paper, \citet{nowak1998evolution} present the mechanism
of \emph{image scoring} to support cooperation when agents from a
large community are randomly matched and each agent observes the partner's
past actions. In their setup, each agent observes the last $k$ past
actions of the partner, and she defects if and only if the partner
has defected at least $m$ times in the last $k$ observed actions.
A couple of papers have raised concerns about the stability of cooperation
under image-scoring mechanisms. Specifically, \citet{leimar2001evolution}
demonstrate in simulations that cooperation is unstable, and \citet{panchanathan2003tale}
analytically study the case in which each agent observes the last
action.\footnote{See \citet{berger2016stability} who study observation of $k$ actions,
but restrict agents to play only image-scoring-like strategies. } Our paper makes two key contributions to this literature. First,
we introduce a novel variant of image scoring that is essentially
the unique stationary way to support cooperation as a perfect equilibrium
outcome when agents observe actions. Second, we show that the classification
of Prisoner's Dilemma games into offensive and defensive games is
critical to the stability of cooperation when agents observe actions
(and image scoring fails in offensive Prisoner's Dilemma games).

\paragraph{Structured Populations and Voluntarily Separable Interactions}

A few papers have studied the scope of cooperation in the case where
players do not have any information about their current partner but
the matching of agents is not uniformly random. That is, the population
is assumed to have some structure such that some agents are more likely
to be matched to some partners than to other partners. \citet{van2012direct}
and \citet{alger2013homo} show that it is possible to sustain cooperation
with no information about the partner's behavior if matching is sufficiently
assortative, i.e., if cooperators are more likely to interact with
other cooperators. \citet{ghosh1996cooperation} and \citet{fujiwara2009voluntarily,fujiwara2017long}
show how to sustain cooperation in a related setup in which matching
is random, but each pair of matched agents may unanimously agree to
keep interacting without being rematched to other agents.\footnote{See also \citet{herold2012carrot} who studies a ``haystack'' model
in which individuals interact within separate groups.} Our paper shows that letting players observe the partner's behavior
in two interactions is sufficient to sustain cooperation without assuming
assortativity or repeated interactions with the same partner.

\paragraph{Models without Calendar Time.}

The present paper differs from most of the literature on community
enforcement by having a model without a global time zero. To the best
of our knowledge, \citet{rosenthal1979sequences} is the first paper
to present the notion of a steady-state Nash equilibrium in environments
in which each player observes the partner's last action, and apply
it to the study of the Prisoner's Dilemma. \citeauthor{rosenthal1979sequences}
focuses only on pure steady states (in which everyone uses the same
pure strategy), and concludes that defection is the unique pure stationary
Nash equilibrium action except in a few knife-edge cases. The methodology
is further developed in \citet{okuno1995social}. Other papers following
a related approach include \citet{Rubinstein-Wolinsky}, who study
bargaining, and \citet{Phelan_Skrzypacz_2006_private} who study repeated
games with private monitoring. Our methodological contribution to
the previous literature is that (1) we allow each agent to observe
the behavior of the partner in several past interactions with other
opponents, and (2) we combine the steady-state analysis with the presence
of a few committed agents and present a novel notion of a perfect
equilibrium to analyze this setup. 

\section{Conclusion\label{subsec:Conclusion-and-Directions}}

In many situations people engage in short-term interactions where
they are tempted to behave opportunistically but there is a possibility
that future partners will obtain some information about their behavior
today. We propose a new modeling approach based on the premises that
(1) an equilibrium has to be robust to the presence of a few committed
agents, and (2) the community has been interacting from time immemorial
(though this latter assumption is relaxed in Appendix \ref{sec:Conventional-Repeated-Game}). 

We develop a novel methodology that allows for a tractable analysis
of these seemingly complicated environments. We apply this methodology
to the study of Prisoner\textquoteright s Dilemma games, and we obtain
sharp testable predictions for the equilibrium outcomes, and the exact
conditions under which cooperation can be sustained as an equilibrium
outcome. Finally, we show that whenever cooperation is sustainable,
there is a unique (and novel) way to support it that has a few appealing
properties: (1) agents behave in an intuitive and simple way, and
(2) the equilibrium is robust, e.g., to deviations by a group of agents,
or to the presence of any kind of committed agents. We believe that
our modeling approach will be helpful in understanding various interactions
in future research. 

\bibliographystyle{econometrica}
\bibliography{SOB}

\appendix
\newpage{}

\clearpage \pagenumbering{arabic}

\section*{Online Appendices}

\section{Conventional Repeated Game Model\label{sec:Conventional-Repeated-Game}}

The main model of the paper relies on various simplifying assumptions,
and some unconventional modeling choices that distinguish it from
the existing literature: (1) the interactions within the community
do not have an explicit starting point, (2) agents live forever and
do not discount the future, (3) agents are only allowed to follow
stationary strategies, (4) agents (privately) observe the partner's
actions sampled from the entire infinite history of play of the partner.
In this Appendix we present a conventional repeated game model that
relaxes all of these assumptions. It differs from most of the existing
literature in only one respect: the presence of a small fraction of
committed agents in the population. We show that this difference is
sufficient to yield most of our key results. For brevity, we focus
only on the observations of actions. The adaptation of the results
on general observation structures is analogous. 

\subsection{Adaptations to the Model}

\paragraph{Environment as a Repeated Game}

We consider an infinite population (a continuum of mass one) interacting
in discrete time $t=0,1,2,3,...$. We redefine an\emph{ environment}
to be a triple $\left(G,k,\delta\right)$, where $G=\left(A,\pi\right)$
is the underlying symmetric game, $k\in\mathbb{N}$ is the number
of recent actions of an agent that are observed by her partner, and
$\delta\in\left(0,1\right)$ is the discount factor of the agents.
In each period the agents are randomly matched into pairs and, before
playing, each agent observes the most recent $\min\left(k,t\right)$
actions of her partner; i.e., an agent observes all past actions in
the early rounds when $t\leq k$, and she observes only the last $k$
rounds in later rounds when $t>k$. Let $M=\cup_{i\leq k}A^{i}$ denote
the set of all possible signals. 
\begin{rem}
Our results can be adapted to a more general setup in which each agent
observes $k$ actions randomly sampled from the partner's last $n\geq k$
actions. The case of $n>>k$ is the one closest to the main model.
We choose to focus on the opposite case of $n=k$ (i.e., observation
of the last $k$ actions) in order to demonstrate the robustness of
our results in the setup that is the ``furthest'' from the main
model. 
\end{rem}
A (private) history of an agent at round $\hat{t}$ is a tuple $h_{\hat{t}}=\left(\left(m_{t},a_{t},b_{t}\right)_{0\leq t<\hat{t}},m_{\hat{t}}\right)$,
where $m_{t}\in M$ is the signal observed by the agent about her
opponent at round $t$, $a_{t}\in A$ is the action played by the
agent at round $t$, and $b_{t}\in A$ is the action played by the
past partner at round $t$. Finally, $m_{\hat{t}}$ is the signal
the agent has observed about her current partner. Let $H_{\hat{t}}$
denote the set of all possible histories at round $\hat{t}$, and
let $H=\cup_{T\in\mathbb{N}}H_{T}$ denote the set of all (finite)
histories.

A\emph{ strategy} is a mapping $s:H\rightarrow\Delta\left(A\right)$
assigning a mixed action to each (private) history. We redefine $S$
to denote the set of all such strategies. Note, that unlike in the
main model we do not impose any restrictions on the set of feasible
strategies. In particular, we allow agents to follow non-stationary
strategies. A strategy is \emph{uniformly totally mixed} if there
exist $\gamma>0$ such that for each history $h_{\hat{t}}\in H$ and
each action $a\in A$ it is the case that $s_{h_{\hat{t}}}\left(a\right)>\gamma$.

\paragraph{Perturbed Environment and Population State}

A perturbed environment is a tuple consisting of: (1) an environment,
(2) a distribution $\lambda$ over a set of commitment strategies
$S^{C}$ that includes a uniformly totally mixed strategy, and (3)
a number $\epsilon$ representing how many agents are committed to
playing strategies in $S^{C}$ (\emph{committed agents}). The remaining
$1-\epsilon$ agents can play any strategy in $S^{N}$ (\emph{normal
agents}). Formally:
\begin{defn}
\label{def:A-perturbed-environment-1-1}A \emph{perturbed} \emph{environment}
is a tuple $E_{\epsilon}=\left(\left(G,k,\delta\right),\left(S^{C},\lambda\right),\epsilon\right)$,
where $\left(G,k,\delta\right)$ is an environment, $S^{C}$ is a
non-empty finite set of strategies (called \emph{commitment strategies})
that includes a uniformly totally mixed strategy, $\lambda\in\Delta\left(S^{C}\right)$
is a distribution with full support over the commitment strategies,
and $\epsilon\geq0$ is the mass of committed agents in the population.

A \emph{population state} is defined as a pair $\left(S^{N},\sigma\right)$,
where $S^{N}$ is the finite set of normal strategies in the population,
and $\sigma\in\Delta\left(S^{N}\right)$ is the distribution describing
the frequency of each normal strategy in the population of normal
agents. By standard arguments, a population state $\left(S^{N},\sigma\right)$
and a perturbed environment $E_{\epsilon}$ jointly induce a unique
sequence of distributions over the set of histories. Formally, there
exists a unique profile $\left(\mu_{s,t}\right)_{s\in S,t\in\mathbb{N}}$,
where each $\mu_{s,t}\in\Delta\left(H_{T}\right)$ is a distribution
over the histories of length $t$, such that $\mu_{s,t}\left(h_{t}\right)$
is the probability that an agent who follows strategy $s$ reaches
history $h_{t}\in H_{t}$ in round $t$.
\end{defn}

\paragraph{Expected Payoff and Equilibria }

In what follows we define the (ex-ante) expected payoff of an agent
who follows strategy $s$ and has discount factor $\delta$ , given
a population state $\left(S^{N},\sigma\right)$ of a perturbed environment
\emph{$E_{\epsilon}=\left(\left(G,k,\delta\right),\left(S^{C},\lambda\right),\epsilon\right)$}.
When $s\in S^{N}\cup S^{C}$ is an incumbent strategy, we define the
payoff as follows:
\begin{eqnarray}
\pi_{s}\left(S^{N},\sigma,E_{\epsilon}\right) & = & \left(1-\delta\right)\cdot\sum_{t\geq1}\delta^{t-1}\cdot\sum_{h_{t}=\left(\left(m_{t},a_{t},b_{t}\right)_{0\leq t<\hat{t}},m_{\hat{t}}\right)\in H^{t}}\mu_{s,t}\left(h_{t}\right)\cdot\pi\left(a_{t-1},b_{t-1}\right).\label{eq:pi_repeated_Game}
\end{eqnarray}

As in the stationary model, let $\pi\left(S^{N},\sigma,E_{\epsilon}\right)=\sum_{s\in S^{N}}\sigma\left(s\right)\cdot\pi_{s}\left(S^{N},\sigma,E_{\epsilon}\right)$
denote the mean payoff of the normal agents in the population. 

Next consider an agent (Alice) who deviates and plays a new strategy
$\hat{s}\in\mathcal{S}\backslash S^{N}$. Alice's strategy determines
her behavior against the incumbents. This determines the distribution
of signals that are observed by the partners when being matched with
Alice, and thus it determines the incumbents' play against Alice,
and this uniquely determines the sequence of distributions over the
set of histories of Alice. Formally, there exists a unique profile
$\left(\mu_{\hat{s},t}\right)_{t\in\mathbb{N}}$, where each $\mu_{\hat{s},t}\in\Delta\left(H_{T}\right)$
is a distribution over the histories of length $t$, such that $\mu_{\hat{s},t}\left(h_{t}\right)$
is the probability that Alice who follows strategy $\hat{s}$ reaches
history $h_{t}\in H_{t}$ in round $t$. We define Alice's payoff
$\pi_{\hat{s}}\left(S^{N},\sigma,E_{\epsilon}\right)$ in the same
way as in Eq. (\ref{eq:pi_repeated_Game}), with $\mu_{\hat{s},t}\left(h_{t}\right)$
replacing $\mu_{s,t}\left(h_{t}\right)$.

The definition of Nash equilibrium is standard:
\begin{defn}
A population state $\left(S^{N},\sigma\right)$ of the perturbed environment
$\left(\left(G,k,\delta\right),\left(S^{C},\lambda\right),\epsilon\right)$
is a \emph{Nash equilibrium} if for each strategy $s\in\mathcal{S}$,
it is the case that $\pi_{s}\left(S^{N},\sigma,E_{\epsilon}\right)\leq\pi\left(S^{N},\sigma,E_{\epsilon}\right)$.
\end{defn}
\begin{defn}
Fix an environment $\left(G,k,\delta\right)$. A sequence of strategies
$\left(s_{n}\right)_{n}$ converges to strategy $s$ (denoted by $\left(s_{n}\right)_{n}\rightarrow_{n\rightarrow\infty}s$)
if for each round $t\in\mathbb{N}$, each history $h_{t}\in H_{t}$,
and each action $a$, the sequence of probabilities $s\left(h_{t}\right)\left(a\right)$
converges to $s\left(h_{t}\right)\left(a\right).$ A sequence of population
states $\left(S^{N}{}_{n},\sigma_{n}\right)_{n}$ converges to a population
state $\left(S^{N},\sigma^{*}\right)$ if for each strategy $s\in supp\left(\sigma^{*}\right),$
there exists a sequence of sets of strategies $\left(S_{n}^{N}\right)_{n}$
such that: (1) $\sum_{s_{n}\in S_{n}^{N}}\sigma_{n}\left(s_{n}\right)\rightarrow\sigma^{*}\left(s\right)$,
and (2) for each sequence of elements of those sets (i.e., for each
sequence of strategies $\left(s_{n}\right)_{n}$ such that $s_{n}\in S_{n}^{N}$
for each $n$), $s_{n}\rightarrow_{n\rightarrow\infty}s$ .
\end{defn}
A perfect equilibrium is defined as the limit of a converging sequence
of Nash equilibria of a converging sequence of perturbed environments.
Formally:
\begin{defn}
A population state $\left(S^{N},\sigma^{*}\right)$ of the environment
$\left(G,k,\delta\right)$ is a\emph{ perfect equilibrium }if there
exist a distribution of commitments $\left(S^{C},\lambda\right)$
and converging sequences $\left(S_{n}^{N},\sigma_{n}\right)_{n}\rightarrow_{n\rightarrow\infty}\left(S^{N},\sigma^{*}\right)$
and $\left(\epsilon_{n}>0\right)_{n}\rightarrow_{n\rightarrow\infty}0$,
such that for each $n$, the state $\left(S_{n}^{N},\sigma_{n}\right)$
is a Nash equilibrium of the perturbed environment $\left(\left(G,k,\delta\right),\left(S^{C},\lambda\right),\epsilon_{n}\right)$.
If the underlying game is the Prisoner's Dilemma, we say that the
\emph{perfect equilibrium induces full cooperation} if $\lim_{n\rightarrow\infty}\pi\left(S_{n}^{N},\sigma,E_{\epsilon_{n}}\right)=\pi\left(c,c\right)$.

We say that cooperation is a perfect equilibrium outcome if there
exists a perfect equilibrium that induces full cooperation.
\end{defn}

\subsection{Adaptation of Main Results}

The following result adapts the main results of Section \ref{sec:Main-Results}.
Specifically, it shows that full cooperation is a perfect equilibrium
outcome iff the underlying Prisoner's Dilemma game is (weakly) defensive.
Moreover, we construct a perfect equilibrium that sustains full cooperation
and has the same qualitative properties as the strategy presented
in the stationary model. The intuition for the result is similar to
the intuition described in connection with the results of the stationary
model.
\begin{thm}
\label{thmLrepeated-game}Let $\left(G_{PD},k,\delta\right)$ be an
environment. 
\begin{enumerate}
\item Cooperation is not a perfect equilibrium outcome if $g>l$.
\item Cooperation is a perfect equilibrium outcome if $k\geq2$, $g\leq l$,
and $\delta>\frac{l}{l+1}$. Moreover, cooperation is sustained by
a strategy in which each normal agent (1) always cooperates if she
observes the partner always cooperating, (2) always defects if she
observes the partner defecting at least twice, and (3) sometimes defects
if she observes the partner defecting once.
\end{enumerate}
\end{thm}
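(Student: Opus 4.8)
The plan is to treat the two parts separately, in each case importing the logic of the stationary results (Theorems \ref{thm:only-defection-is-stable} and \ref{thm:cooperation-defensive-PDs}) and adapting it to the discounted, finite-memory, non-stationary environment via the one-shot-deviation principle, which is valid here because each agent has mass zero and therefore faces a fixed discounted dynamic program against the population.

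For part 1 (impossibility when $g>l$) I would argue by contradiction. Suppose full cooperation is a perfect equilibrium outcome and take a sequence of Nash equilibria of perturbed environments with $\epsilon_{n}\to 0$ whose normal-agent payoff tends to $\pi\left(c,c\right)=1$. Then in the limit almost all normal agents observe an all-$c$ partner and must cooperate against such a partner, for otherwise the average payoff would stay bounded away from $1$. Deterrence of the one-shot deviation ``defect once when observing all-$c$'' requires the discounted future cost $C$ of a single observed defection to satisfy $C\ge g-o\left(1\right)$, since the immediate gain against a cooperating partner is $g$. Because the observation structure records only the partner's own actions, this cost can arise only through future partners defecting against the deviator after observing her defection, and since she reverts to equilibrium play no other channel is affected. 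But for a normal future partner, defecting upon observing a defection yields a direct gain of only $\Pr\left(c\mid \text{obs }d\right)g+\Pr\left(d\mid \text{obs }d\right)l=g-\Pr\left(d\mid \text{obs }d\right)\left(g-l\right)$ while costing her the same $C\ge g-o\left(1\right)$ in her own future, so her net gain from punishing is at most $-\Pr\left(d\mid \text{obs }d\right)\left(g-l\right)+o\left(1\right)$. The totally mixed commitment type forces $\Pr\left(d\mid \text{obs }d\right)$ to be bounded away from zero, because an observed defector is a committed agent with non-vanishing conditional probability (both committed and normal observed-defections are of order $\epsilon$). Hence for $g>l$ punishing is strictly suboptimal for small $\epsilon_{n}$, so no normal agent punishes, the deviation triggers no punishment, $C=o\left(1\right)<g$, and the deviation is profitable — a contradiction. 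This is exactly the offensive-game obstruction of Theorem \ref{thm:only-defection-is-stable}, rephrased as the incompatibility of deterrence with incentive-compatible punishment.

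For part 2 I would construct, for each small $\epsilon$, the analog of the equilibrium of Theorem \ref{thm:cooperation-defensive-PDs}: a homogeneous normal population following the strategy $s^{*}$ that cooperates when the last $\min\left(k,t\right)$ observed actions are all $c$ (treating an empty start as ``no defection''), defects when they contain at least two $d$'s, and defects with probability $q$ when they contain exactly one $d$. I would fix $q$ by the indifference condition at a single observed defection, the discounted counterpart of Eq. \eqref{eq:q-indifference-equation}: the direct gain $\Pr\left(c\mid m{=}1\right)g+\Pr\left(d\mid m{=}1\right)l$ must equal the indirect loss, which is the discounted punishment incurred because the extra defection is observed for the next $k$ rounds and induces each such partner to defect with probability $q$, at a cost of $1+l$ per realized punishment. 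This yields $q$ equal to $\Pr\left(c\mid m{=}1\right)g+\Pr\left(d\mid m{=}1\right)l$ divided by $\left(1+l\right)\sum_{j=1}^{k}\delta^{j}$, which converges to the stationary value $q^{*}\in\left(0,1/k\right)$ as $\delta\to 1$. Using that the game is weakly defensive, I would then verify the monotonicity that makes $s^{*}$ a best reply: since $\Pr\left(d\mid m\right)$ is increasing in $m$ and $l\ge g$, the direct gain from defecting is increasing in $m$, so cooperation is the strict best reply at $m=0$, both actions are best replies at $m=1$, and defection is the strict best reply at $m\ge 2$; the always-defect deviation earns $\approx 0$ because $k\ge 2$ makes such a deviator observed with two defections. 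Finally I would let $\epsilon\to 0$ to obtain the perfect equilibrium with full cooperation in the limit.

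The step I expect to be the crux is pinning down the threshold $\delta>\tfrac{l}{l+1}$. This is the grim-trigger deterrence threshold for the worst admissible temptation $g=l$ (equivalently $\delta>\tfrac{g}{g+1}$ for every $g\le l$), and it is what is needed to keep the indifference value $q$ strictly below the contagion bound $1/k$: the steady-state defection rate solves $\rho=kq\,\rho+O\left(\epsilon\right)$, so only $kq<1$ keeps the population within $O\left(\epsilon\right)$ of full cooperation, while a discount factor that is too small shrinks the discounted punishment $\sum_{j=1}^{k}\delta^{j}$ and forces $q\ge 1/k$. Making this precise requires solving the fixed point that determines $\Pr\left(d\mid m{=}1\right)$ jointly with $q$ (choosing the commitment distribution so that this conditional probability is small enough) and controlling the propagation of defections through the finite-memory window, so that the steady state is genuinely $O\left(\epsilon\right)$ and the one-shot-deviation checks hold uniformly along $\epsilon_{n}\to 0$. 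The startup rounds $t<k$ and the limiting arguments are routine by comparison.
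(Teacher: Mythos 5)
Your Part~1 is essentially the paper's own argument: deterrence at the all-$c$ record forces the discounted indirect cost of an observed defection to be at least $g-o\left(1\right)$; since that cost is independent of the partner's current action while the direct gain from defecting against an observed defector is only $g-\Pr\left(d\mid\text{obs }d\right)\cdot\left(g-l\right)<g$, normal agents strictly prefer not to punish, and deterrence collapses. The paper runs the same logic and closes with an always-defect deviator rather than your one-shot deviation; that difference is cosmetic.

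Part~2 has a genuine gap. Your strategy punishes a single observed defection with probability $q$ in \emph{every} one of the $k$ rounds during which it remains in the window, so your indifference condition gives $q=\bigl(g\Pr\left(c\mid m{=}1\right)+l\Pr\left(d\mid m{=}1\right)\bigr)/\bigl(\left(1+l\right)\sum_{j=1}^{k}\delta^{j}\bigr)$, and, as you yourself note, keeping the defection rate at $O\left(\epsilon\right)$ requires $kq<1$ (the propagation recursion is $\beta_{t}\approx q\sum_{j=1}^{k}\beta_{t-j}+O\left(\epsilon\right)$). But $\delta>\frac{l}{l+1}$ does \emph{not} deliver $kq<1$ for this $q$: taking $g=l$ and $\delta\downarrow\frac{l}{l+1}$, the requirement $kq<1$ becomes $\left(l+1\right)\sum_{j=1}^{k}\delta^{j}>kl$, i.e.\ $\left(\frac{l}{l+1}\right)^{k}<1-\frac{k}{l+1}$, which is false for every $k\geq2$ by strict Bernoulli. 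So your construction only works for $\delta$ close to $1$, not on the stated range. The paper's construction is designed precisely to avoid this: the normal strategy punishes for exactly \emph{one} round (defect with probability $q_{t}$ only if the partner's unique observed defection occurred in the immediately preceding round; cooperate if the single defection is older), and also conditions on the agent's \emph{own} record (cooperate if you yourself defected in the last $k-1$ rounds). Then the indirect loss of a defection is $\delta\left(l+1\right)q_{t+1}$, the indifference condition gives $q_{t}\leq\frac{l}{\delta\left(l+1\right)}=\gamma<1$ exactly when $\delta>\frac{l}{l+1}$, and the contagion recursion becomes $\beta_{t}\leq q_{t}\beta_{t-1}+O\left(\epsilon\right)<\gamma\beta_{t-1}+O\left(\epsilon\right)$, which is stable with no $kq<1$ requirement. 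Making ``cooperate against a stale single defection'' a best reply in turn forces a particular committed type (defect with high probability $\overline{\alpha}$ after having defected in the last round or twice recently, with low probability $\underline{\alpha}$ otherwise, and $\overline{\alpha}/\underline{\alpha}$ large). The one-round-punishment design, which your proposal is missing, is the step that makes the threshold $\delta>\frac{l}{l+1}$ achievable.
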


\subsection{\label{subsec:Discussion-of-the-results-repeated-games}Discussion
of the Results in the Setup of Repeated Games}

Theorem \ref{thmLrepeated-game} adapts our main results from the
stationary model (Theorems \ref{thm:only-defection-is-stable} and
\ref{thm:cooperation-defensive-PDs}) to the conventional setup of
repeated games.\footnote{Similarly, one can adapt Corollary \ref{Pro-stability-of-cooperation-when-k=00003D1}
to the setup of repeated games, to show that when the underlying game
is defensive and each agent observes only the partner's last action,
there is a threshold $\bar{g}_{\delta}$ that depends on the discount
factor $\delta$, such that cooperation can be supported as a perfect
equilibrium outcome iff $g<\bar{g}_{\delta}$, and this threshold
converges to one as the discount factor converges to one, i.e., $lim_{\delta\rightarrow1}\bar{g}_{\delta}=1$. } The adaptation weakens our main results in three aspects:
\begin{enumerate}
\item While Theorem \ref{thm:only-defection-is-stable} shows that no level
of partial cooperation is sustainable in stationary environments,
Theorem \ref{thmLrepeated-game} merely shows that full cooperation
is not sustainable. The reason for this is as follows. In stationary
environments, if the partner has been observed to defect more often
in the past, it implies that he is more likely to defect in the current
match. Such an inference is not always valid in a non-stationary environment
in which an agent may condition his behavior on his own recent history
of play. In particular, we conjecture that partial cooperation may
be sustained in offensive games by a strategy according to which normal
agents sometimes cooperate, and an agent is more likely to cooperate
if (1) the agent has recently defected more often, and (2) if the
partner has recently cooperated more often.
\item While Theorem \ref{thm:cooperation-defensive-PDs} shows that there
is essentially a unique way to support full cooperation, Theorem \ref{thmLrepeated-game}
shows only that a very similar mechanism can also be used to support
full cooperation in standard repeated games. The fact that we allow
non-stationary strategies and that observed actions are ordered induces
a larger set of strategies, and does not allow us to show a similar
uniqueness property in this setup. We conjecture that some qualitative
properties of the unique stationary equilibrium hold in any equilibrium
sustaining full cooperation: (1) normal agents always cooperate after
observing no defections, (2) usually (though not necessarily in all
rounds) the average probability that a normal agent defects conditional
on observing a single defection of the partner is relatively low (less
than $\frac{1}{k}$), and (3) the average probability that a normal
agent defects conditional on observing many defections of the partner
is relatively high.
\item Theorem \ref{thm:cooperation-defensive-PDs} shows that cooperation
is a \emph{strictly} perfect equilibrium outcome; i.e., cooperation
can be sustained regardless of the behavior of the committed agents
(as formally defined in Appendix \ref{subsec:Strictly-Perfect}).
In this setup, as there is a much larger set of non-stationary strategies
that may be used by committed agents, we are not able to show a similar
strictness property. Specifically, we conjecture that full cooperation
cannot be sustained in a perturbed environment of an underlying defensive
game in which each committed agent defects with a high probability
if he has defected at most once in the last $k$ rounds, and he defects
with a low probability if he has defected at least twice in the last
$k$ rounds. This is so because in such environments normal agents
are incentivized to cooperate when observing the partner to defect
in all recent $k$ rounds, but this implies that a deviator who always
defects will outperform the incumbents. 
\end{enumerate}
\begin{rem}[Comparison with \citealp{takahashi2010community}]
\label{rem:Takahshi-repeated} The setup in this appendix is almost
identical to the setup of \citet{takahashi2010community}. The only
key difference between the two models is that we introduce a few committed
agents into the population (in addition, \citeauthor{takahashi2010community}
assumes that an agent observes all past actions of the partner, but
one can adapt his results to a setup in which an agent observes only
the recent $k$ actions of the partner). \citet[Prop. 2]{takahashi2010community}
constructs ``belief-free'' equilibria in which (1) each agent is
indifferent between the two actions after any history, and (2) each
agent chooses actions independent of her own record of past play.
\citeauthor{takahashi2010community} shows how these equilibria can
support any level of cooperation, and, in particular, can support
full cooperation in any Prisoner's Dilemma. \\
We show that the presence of a few committed agents substantially
changes this result when $g\neq l$ . When committed agents are present,
an agent can no longer be indifferent between the two actions after
all histories of play, and can no longer play in the current match
independently of her own record of past play.\footnote{\citet{heller2015instability} presents a related non-robustness
argument in the setup of repeated games played by the \emph{same}
two players, and shows that none of the belief-free equilibria are
robust against small perturbations in the behavior of potential opponents
(i.e., none of them satisfy a mild refinement in the spirit of evolutionary
stability).} We adapt \citeauthor{takahashi2010community}'s construction and
present an equilibrium in which each agent is indifferent between
the two actions after only one class of histories: that in which the
agent has cooperated in the previous $k-1$ rounds and she observes
the partner to defect only in the last round. In all other classes
of histories, the agents have strict incentives to either cooperate
or defect. 
\end{rem}

\section{Empirical Predictions and Experimental Verification \label{sec:Empirical-Predictions}}

In this appendix we discuss a few testable empirical predictions of
our model, comment on how to evaluate these predictions in lab experiments,
and discuss related experimental literature.

An experimental setup to evaluate our predictions would include a
large group of subjects (say, at least 10) who play a large number
of rounds (say, in expectation at least 50 rounds), and are rematched
in each period to play a Prisoner\textquoteright s Dilemma game with
a new partner. The experiment would include various treatments that
differ in terms of (1) the parameters of the underlying game, e.g.,
whether the game is offensive/defensive and mild/acute, and (2) the
information each agent observes about her partner: in particular,
the number of past interactions that each agent observes, and what
she observes in each interaction (e.g., actions, conflicts, or action
profiles.)

Our theoretical predictions deal with a ``pure'' setup in which
all agents maximize their material payoffs except for a vanishingly
small number of committed agents. An experimental setup (and, arguably,
real-life interactions) differs in at least two key respects: (1)
agents, while caring about their material payoffs, may consider other
non-material aspects, such as fairness and reciprocity, and (2) agents
occasionally make mistakes and the frequency of these mistakes, while
relatively low, is not negligible. In what follows, we describe our
key predictions in the ``pure'' setup, interpret its implications
in a ``noisy'' experimental setup, and describe the relevant existing
data. 

Our first prediction (Theorems \ref{thm:only-defection-is-stable}
and \ref{thm:cooperation-defensive-PDs}) deals with observation of
the partner's actions, and it states that cooperation can be sustained
only in defensive games. In an experimental setup we interpret this
to imply that, \emph{ceteris paribus}, the frequency of cooperation
will be higher in a defensive game than in an offensive game. \citet{engelmann2009indirect},
\citet{molleman2013personal}, and \citet{swakman2015reputation}
study the rate of cooperation in the borderline case of $g=l$ and
in the closely related donor-recipient game, in which at each interaction
only one of the players (the donor) chooses whether to give up $g$
of her own payoff to yield a gain of $1+g$ for the recipient. The
typical findings in these experiments are that observation of 3\textendash 6
past actions induces a relatively high level of cooperation (50\%\textendash 75\%,
where higher rates of cooperation are typically associated with environments
in which more past actions are observed, and environments in which
subjects can also observe second-order information about the behavior
of the partner's past opponent). We are aware of only a single experiment
that studies a setup in which $g\neq l$. \citet{gong2010reputation}
study the case of $g=0.8>l=0.4$, and present results that seem to
be consistent with our prediction. They observe an average rate of
cooperation of only 30\%\textendash 50\%, even though in their setup
players observe 10 past actions of the partner, and, in addition,
are also able to observe the signal observed by the partner in each
of these past interactions (``second-order information,'' which
facilitates cooperation relative to the model analyzed in this paper).

Our second prediction (Theorems \ref{thm:stable-cooperation-observing-conflicts}
and \ref{thm:stable-cooperation-observing-action-profiles}) deals
with observation of either past conflicts or past action profiles,
and it states that cooperation can be sustained only in mild games.
In an experimental setup it implies that, \emph{ceteris paribus},
the frequency of cooperation will be higher in mild games than in
acute games. We are unaware of any existing experimental data with
observation of either action profiles or conflicts.

It is interesting to compare our first two predictions to the comparative
statics recently developed for repeated Prisoner's Dilemma games played
by the same pair of players. \citet{blonski2011equilibrium}, \citet{bo2011evolution},
and \citet{breitmoser2013cooperation} present theoretical arguments
and experimental data to suggest that when a pair of players repeatedly
play the Prisoner\textquoteright s Dilemma, then the lower the values
of $g$ and $l$ are, the easier it is to sustain cooperation.\footnote{Specifically, the above papers show that cooperation is more likely
to be sustained in the infinitely repeated Prisoner's Dilemma if the
discount factor of the players is above $\frac{g+l}{g+l+1}$. Note
that this minimal threshold for cooperation is increasing in both
parameters. \citet{embrey2015cooperation} present similar comparative
statics evidence for the finitely repeated Prisoner's Dilemma.} However, our prediction is that when agents are randomly matched
in each round, then the lower the value of $g$, and the\emph{ higher}
the value of $l$, the easier it is to sustain cooperation. 

Our final prediction is that when communities succeed in sustaining
cooperation, it will be supported by the following behavior: most
subjects defect (resp., cooperate, mix) when observing 2+ (resp.,
0, 1) defections/conflicts. In an experimental setup we interpret
this to predict that the probability that an agent defects increases
with the number of times she observes the partner to be involved in
defections/conflict. In particular, we predict a substantial increase
in a subject's propensity to defect when moving from zero to two observations
of defection. The findings of \citet{engelmann2009indirect}, \citet{molleman2013personal},
\citet{gong2010reputation}, and \citet{swakman2015reputation} suggest
that subjects are indeed more likely to defect when they observe the
partner to defect more often in the past. 

\section{Technical Definitions and Additional Results\label{sec:Additional-Notation-and} }

The appendix presents technical definitions, which we have omitted
from the model in Sections \ref{sec:Model} and \ref{sec:Solution-Concept}
for expositional reasons. We also state simple results about the implementation
of (``trembling-hand'' perfect) Nash equilibria as (perfect) Nash
equilibria in our setup.

\subsection{Dynamic Mapping Between Signal Profiles\label{subsec:Dynamic-Mapping-Between}}

In this subsection we present a definition of a dynamic mapping between
signal profiles, which is useful in the definitions of the equilibrium
refinements in Appendix \ref{sec:Evolutionary-Stability} and in various
proofs in Appendix \ref{sec:Proofs}, and we prove a simple result,
namely, that any distribution of strategies admits a consistent signal
profile. 

Let $f_{\sigma}:O_{S}\rightarrow O_{S}$ be the \emph{mapping between
signal profiles} that is induced by $\sigma$. That is, $f_{\sigma}\left(\theta\right)$
is the ``new'' signal profile that is induced by players who follow
strategy distribution $\sigma$, and who observe signals about the
partners according to the ``current'' signal profile $\theta$.
Specifically, when Alice, who follows strategy $s$, is being matched
with a random partner whose strategy is sampled according to $\sigma$,
she observes a random signal according to the ``current'' average
distribution of signals in the population $\theta_{\sigma}$. As a
result her distribution of actions is $s\left(\theta_{\sigma}\right)$,
and thus her behavior induces the signal distribution $\nu\left(s\left(\theta_{\sigma}\right)\right)$.
Thus, we define this latter expression as her ``new'' distribution
of signals $\left(f_{\sigma}\left(\theta\right)\right)_{s}$. Formally:

\begin{equation}
\forall m\in M,\,\,s\in S,\,\,\,\,\left(f_{\sigma}\left(\theta\right)\right)_{s}\left(m\right)=\nu\left(s\left(\theta_{\sigma}\right)\right)\left(m\right).\label{eq:transformation-between-signal=00003Dprofiles-3}
\end{equation}

Observe that a signal profile $\theta:S\rightarrow\Delta\left(M\right)$
is c\emph{onsistent} with distribution of strategies $\sigma$ (as
defined in eq. (\ref{eq:transformation-between-signal=00003Dprofiles-2}))
if it is a fixed point of the mapping $f_{\sigma}\left(\theta\right)$,
i.e., if $f_{\sigma}\left(\theta\right)=\theta$. A standard fixed-point
argument shows that any distribution of strategies admits a consistent
signal profile.
\begin{lem}
\label{lem:existance_consitent_outcomes-1-1}Let $S$ be a finite
set of strategies and let $\sigma\in\Delta\left(S\right)$ be a distribution.
Then, there exists a \emph{consistent signal profile} $\theta:S\rightarrow\Delta\left(M\right)$
such that $\left(S,\sigma,\theta\right)$ is a steady state. 
\end{lem}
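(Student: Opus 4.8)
The plan is to recognize the consistency requirement as a fixed-point condition and invoke Brouwer's fixed-point theorem. As observed immediately above the statement, a signal profile $\theta$ is consistent with $\sigma$ precisely when it is a fixed point of the mapping $f_{\sigma}:O_{S}\rightarrow O_{S}$ defined in (\ref{eq:transformation-between-signal=00003Dprofiles-3}), i.e. when $f_{\sigma}\left(\theta\right)=\theta$. So it suffices to exhibit a fixed point of $f_{\sigma}$, and $\left(S,\sigma,\theta\right)$ is then a steady state by Definition \ref{def:state-unperturbed}.

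First I would pin down the domain. Since $S$ is finite and the signal set $M=\left\{ 0,\dots,k\right\} $ is finite, the space of signal profiles $O_{S}=\left(\Delta\left(M\right)\right)^{\left|S\right|}$ is a finite product of probability simplices, hence a nonempty, compact, convex subset of a finite-dimensional Euclidean space --- exactly the kind of domain Brouwer's theorem requires.

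Next I would check that $f_{\sigma}$ maps $O_{S}$ into itself and is continuous. That $f_{\sigma}\left(\theta\right)\in O_{S}$ is immediate: for each $s\in S$ the component $\left(f_{\sigma}\left(\theta\right)\right)_{s}=\nu\left(s\left(\theta_{\sigma}\right)\right)$ is, by construction, the binomial distribution (\ref{eq:multinomial}) associated with the mixed action $s\left(\theta_{\sigma}\right)\in\Delta\left(A\right)$, and so lies in $\Delta\left(M\right)$. For continuity I would trace the composition defining each component: the averaging map $\theta\mapsto\theta_{\sigma}=\sum_{s\in S}\sigma\left(s\right)\cdot\theta_{s}$ is linear; the map $\theta_{\sigma}\mapsto s\left(\theta_{\sigma}\right)=\sum_{m\in M}\theta_{\sigma}\left(m\right)\cdot s_{m}$ is linear for each fixed $s$; and the map $\alpha\mapsto\nu\left(\alpha\right)$ is polynomial in the entries of $\alpha$ by the formula (\ref{eq:multinomial}). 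Hence each component of $f_{\sigma}$ is a continuous (indeed polynomial) function of $\theta$, so $f_{\sigma}$ is continuous.

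Finally, Brouwer's fixed-point theorem yields a $\theta^{*}\in O_{S}$ with $f_{\sigma}\left(\theta^{*}\right)=\theta^{*}$, which is exactly the consistency condition (\ref{eq:transformation-between-signal=00003Dprofiles-2}); thus $\left(S,\sigma,\theta^{*}\right)$ is a steady state. I do not anticipate a genuine obstacle: all the content lies in the (routine) continuity verification and in the structural observation that the consistency equation is a self-map of a compact convex set. This is precisely why the text calls it a \emph{standard} fixed-point argument. The only point deserving a moment's care is that $\nu\left(\cdot\right)$ always outputs an element of $\Delta\left(M\right)$, so that $f_{\sigma}$ genuinely maps $O_{S}$ to $O_{S}$ rather than to some larger space; once that is noted, the argument is complete.
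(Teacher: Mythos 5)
Your proposal is correct and follows exactly the argument the paper gives in Appendix C.1: identify consistency as a fixed point of the continuous self-map $f_{\sigma}$ on the compact convex set $O_{S}$ and apply Brouwer's theorem. You merely spell out the continuity and well-definedness checks in more detail than the paper does, which is fine.
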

\begin{proof}
Observe that the space $O_{S}$ is a convex and compact subset of
a Euclidean space, and that the mapping $f_{\sigma}:O_{S}\rightarrow O_{S}$
(defined in (\ref{eq:transformation-between-signal=00003Dprofiles-3})
above) is continuous. Brouwer's fixed-point theorem implies that the
mapping $\sigma$ has a fixed point, which is a consistent outcome
by definition.
\end{proof}

\subsection{Steady State in a Perturbed Environment}

In this subsection we formally adapt the definitions of a consistent
signal profile and of a steady state to perturbed environments.

Let $f_{\left(\left(1-\epsilon\right)\cdot\sigma+\epsilon\cdot\lambda\right)}:O_{S}\rightarrow O_{S}$
be the \emph{mapping between signal profiles} that is induced by the
population's distribution over strategies $\left(\left(1-\epsilon\right)\cdot\sigma+\epsilon\cdot\lambda\right)$.
That is, $f_{\left(\left(1-\epsilon\right)\cdot\sigma+\epsilon\cdot\lambda\right)}\left(\theta\right)$
is the ``new'' signal profile that is induced by a population of
normal agents who follow strategy distribution $\sigma$ and committed
agents who follow strategy distribution $\lambda$, and who observe
signals about the partners according to the ``current'' signal profile
$\theta$. Specifically, when Alice, who follows strategy $s$, is
being matched with a random partner whose strategy is sampled according
to $\left(1-\epsilon\right)\cdot\sigma+\epsilon\cdot\lambda$, she
observes a random signal according to the ``current'' average distribution
of signals in the population $\theta_{\left(\left(1-\epsilon\right)\cdot\sigma+\epsilon\cdot\lambda\right)}$.
As a result her distribution of actions is $s\left(\theta_{\left(\left(1-\epsilon\right)\cdot\sigma+\epsilon\cdot\lambda\right)}\right)$,
and consequently her behavior induces the signal distribution $\nu\left(s\left(\theta_{\left(\left(1-\epsilon\right)\cdot\sigma+\epsilon\cdot\lambda\right)}\right)\right)$.
Thus, we define this latter expression as her ``new'' distribution
of signals $\left(f_{\left(\left(1-\epsilon\right)\cdot\sigma+\epsilon\cdot\lambda\right)}\left(\theta\right)\right)_{s}$.
Formally:

\begin{equation}
\forall m\in M,\,\,s\in S,\,\,\,\,\left(f_{\left(\left(1-\epsilon\right)\cdot\sigma+\epsilon\cdot\lambda\right)}\left(\theta\right)\right)_{s}\left(m\right)=\nu\left(s\left(\theta_{\left(\left(1-\epsilon\right)\cdot\sigma+\epsilon\cdot\lambda\right)}\right)\right)\left(m\right).\label{eq:transformation-between-signal=00003Dprofiles-1-1}
\end{equation}
Given a distribution of strategies $\left(\left(1-\epsilon\right)\cdot\sigma+\epsilon\cdot\lambda\right)$,
we say that a signal profile $\theta^{*}:S^{C}\cup S^{N}\rightarrow\Delta\left(M\right)$
is \emph{consistent} if it is a fixed point of the mapping $f_{\left(\left(1-\epsilon\right)\cdot\sigma+\epsilon\cdot\lambda\right)}$,
i.e., if $f_{\left(\left(1-\epsilon\right)\cdot\sigma+\epsilon\cdot\lambda\right)}\left(\theta^{*}\right)=\theta^{*}$.

We formally adapt the definition of a steady state as follows:
\begin{defn}
\label{def:state-2}A \emph{steady state }(or \emph{state} for short)
of a perturbed environment \emph{$\left(\left(G,k\right),\left(S^{C},\lambda\right),\epsilon\right)$}
is a triple $\left(S^{N},\sigma,\theta\right)$ where $S^{N}\subseteq\mathcal{S}$
is a finite set of strategies (called, \emph{normal strategies}),
$\sigma\in\Delta\left(S^{N}\right)$ is a distribution with a full
support over $S^{N}$, and $\theta:S^{N}\cup S^{C}\rightarrow\Delta\left(M\right)$
is a consistent signal profile. 
\end{defn}

\subsection{Convergence of Strategies, Distributions, and States}

Next we formally define the standard notions of convergence of strategies,
convergence of distributions, and convergence of states that are used
throughout the paper.
\begin{defn}[Convergence of strategies, of distributions, and of states]
 Fix environment $\left(G,k\right)$. A sequence of strategies $\left(s_{n}\right)_{n}$
converges to strategy $s$ (denoted by $\left(s_{n}\right)_{n}\rightarrow_{n\rightarrow\infty}s$)
if for each signal $m\in M$ and each action $a$, the sequence of
probabilities $\left(s_{n}\right)_{m}\left(a\right)$ converges to
$s_{m}\left(a\right).$ A distribution of signals $\left(\nu_{n}\right)_{n}$
converges to $\nu$ (denoted by $\left(\nu_{n}\right)_{n}\rightarrow_{n\rightarrow\infty}\nu)$
if the sequence of probabilities $\left(\nu_{n}\right)\left(m\right)$
converges to $\nu\left(m\right)$ for each signal $m$. A sequence
of states $\left(S_{n}^{N},\sigma_{n},\theta_{n}\right)_{n}$ converges
to a state $\left(S^{*},\sigma^{*},\theta^{*}\right)$ if for each
strategy $s\in supp\left(\sigma^{*}\right),$ there exists a sequence
of sets of strategies $\left(\hat{S}_{n}^{N}\right)_{n}$, with $\hat{S}_{n}^{N}\subseteq S_{n}^{N}$
for each $n$, such that (1) $\sum_{s_{n}\in\hat{S}_{n}^{N}}\sigma_{n}\left(s_{n}\right)\rightarrow\sigma^{*}\left(s\right)$,
and for each sequence of elements of those sets (i.e., for each sequence
of strategies $\left(s_{n}\right)_{n}$ such that $s_{n}\in\hat{S}_{n}^{N}$
for each $n$), (2) $s_{n}\rightarrow_{n\rightarrow\infty}s$ , and
(3) $\theta_{n}\left(s_{n}\right)\rightarrow\theta^{*}\left(s\right)$.
\end{defn}

\subsection{Implementing Nash and ``Trembling-Hand'' Perfect Equilibria\label{subsec:Implementing-a-Trembling-Hand}}

In what follows we show that any symmetric (``trembling-hand'' perfect)
Nash equilibrium $\alpha$ of the underlying game corresponds to a
(perfect) Nash equilibrium of the environment in which all normal
agents play $\alpha$ regardless of the observed signal.

The observation on Nash equilibria is immediate:
\begin{fact}
Let $\alpha\in\Delta\left(A\right)$ be a symmetric Nash equilibrium
strategy of the underlying game $G=\left(A,\pi\right)$. Then the
steady state $\left(S^{N}=\left\{ \alpha\right\} ,\alpha\right)$
in which everyone plays $\alpha$ regardless of the observed signal
is a Nash equilibrium in the unperturbed environment $\left(G,k\right)$
for any $k\in\mathbb{N}$.
\end{fact}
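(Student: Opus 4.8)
The plan is to verify directly the two ingredients of a steady-state Nash equilibrium: that the claimed profile is consistent, and that no normal agent has a profitable deviation. I would write $\alpha$ also for the constant strategy with $\alpha_m=\alpha$ for every signal $m$, and take $\theta$ to be the signal profile with $\theta_\alpha=\nu_\alpha$, the binomial distribution from (\ref{eq:multinomial}). Consistency is immediate: because the unique incumbent strategy ignores the signal, $\alpha(\theta_\sigma)=\alpha$ for any signal distribution $\theta_\sigma$, so $\nu(\alpha(\theta_\sigma))=\nu_\alpha=\theta_\alpha$, which is exactly the requirement (\ref{eq:transformation-between-signal=00003Dprofiles-2}). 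Hence $(\{\alpha\},\theta)$ is a genuine steady state of $(G,k)$ (and in fact the only consistent profile for this strategy, though uniqueness is not needed here).

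The key structural observation, which I would isolate before any computation, is that a deviation cannot change how the incumbents behave toward the deviator. In the payoff expression (\ref{eq:long-run payoff}) with $\epsilon=0$ and the single incumbent $s'=\alpha$, the incumbent's action against the deviator is governed by $\alpha_{\theta(s)}$; but since $\alpha$ is the constant strategy, $\alpha_{\theta(s)}=\alpha$ regardless of the signal distribution $\theta(s)$ that the deviator induces. A population of signal-insensitive agents is therefore immune to the deviator's influence, and she always faces an opponent who plays the fixed mixed action $\alpha$.

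With this in hand the payoff collapses to the one-shot case. For any $s\in\mathcal{S}$ let $\beta:=s(\nu_\alpha)\in\Delta(A)$ be the average action the deviator plays (she observes signals drawn from $\theta_\alpha=\nu_\alpha$). Substituting into (\ref{eq:long-run payoff}) and using $\alpha_{\theta(s)}=\alpha$ gives
\[
\pi_s\left(\{\alpha\},\alpha,\theta\right)=\sum_{(a,a')\in A\times A}\beta(a)\,\alpha(a')\,\pi(a,a')=\pi(\beta,\alpha),
\]
whereas the incumbent earns $\pi(\alpha,\alpha)$. Since $\alpha$ is a symmetric Nash equilibrium of $G$, it is a best reply to itself, so $\pi(\beta,\alpha)\le\pi(\alpha,\alpha)$ for every $\beta\in\Delta(A)$, in particular for $\beta=s(\nu_\alpha)$. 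This is precisely the inequality demanded by Definition \ref{def:Nash-equilibrium}, so the state is a Nash equilibrium.

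I do not expect a genuine obstacle: the entire content lies in the observation that signal-insensitive incumbents neutralize any deviation, after which the claim reduces to the defining best-reply property of $\alpha$ in $G$. The only points requiring care are the notational bookkeeping in (\ref{eq:long-run payoff}) — confirming that the second factor $\alpha_{\theta(s)}$ is genuinely unaffected by the deviator's induced signal distribution — and recording that $\nu_\alpha$ is the consistent profile; both are routine.
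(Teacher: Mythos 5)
Your proof is correct and fills in exactly the argument the paper has in mind: the paper states this Fact as "immediate" without proof, and the content is precisely your observation that signal-insensitive incumbents always play $\alpha$ against any deviator, reducing the deviator's payoff to $\pi(\beta,\alpha)\le\pi(\alpha,\alpha)$ by the one-shot Nash property. No gaps; the consistency check and the reduction are both handled faithfully to the paper's definitions.
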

In what follows we state and prove the  result on perfect equilibria
(note, that the result hold also for games with more than two actions):
\begin{prop}
\label{pro:perfect-is-perfect-1-1}Let $\alpha\in\Delta\left(A\right)$
be a symmetric perfect equilibrium action of the underlying game $G=\left(A,\pi\right)$.
Then the state $\left(S=\left\{ \alpha\right\} ,\nu_{\alpha}\right)$
is a perfect equilibrium in the environment $\left(G,k\right)$ for
any $k\in\mathbb{N}$. Moreover, if the distribution $\alpha$ is
not totally mixed, then $\left(S^{N}=\left\{ \alpha\right\} ,\nu_{\alpha}\right)$
is a regular perfect equilibrium.
\end{prop}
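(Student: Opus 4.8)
The plan is to exploit the standard characterization of a (trembling-hand) perfect equilibrium of the stage game: since $\left(\alpha,\alpha\right)$ is a symmetric perfect equilibrium of $G$, there exists a totally mixed action $\beta\in\Delta\left(A\right)$ to which $\alpha$ is a best reply, i.e. $\operatorname{supp}\left(\alpha\right)\subseteq\arg\max_{a}\pi\left(a,\beta\right)$; and of course $\alpha$ is also a best reply to itself, since it is a Nash equilibrium. I will build the approximating perturbed environments from such a tremble $\beta$, letting the committed agents play $\beta$ regardless of the signal, and check that ``everyone plays $\alpha$ regardless of the signal'' is an \emph{exact} Nash equilibrium for every $\epsilon$.

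For the first claim I would take the single commitment strategy $s^{\beta}\equiv\beta$, which is totally mixed. The central observation is that, because every incumbent (normal or committed) ignores the observed signal, a deviating agent has no \emph{indirect} influence: the action her partner plays against her does not depend on the signal she herself generates. Writing out the long-run payoff (\ref{eq:long-run payoff}) of an arbitrary strategy $s$ therefore collapses to a sum of independent per-signal problems,
\[
\pi_{s}=\sum_{m\in M}w\left(m\right)\,\pi\!\left(s_{m},\xi_{m}\right),
\]
where $w\left(m\right)$ is the probability of observing signal $m$ and $\xi_{m}\in\Delta\left(A\right)$ is the partner's action distribution conditional on $m$. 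Since a normal partner plays $\alpha$ and a committed partner plays $\beta$, Bayes' rule makes $\xi_{m}$ a convex combination of $\alpha$ and $\beta$; as $\beta$ is totally mixed, $w\left(m\right)>0$ for every $m$, so $\xi_{m}$ is well defined throughout. By linearity of $\pi$, the fact that $\alpha$ is a best reply to both $\alpha$ and $\beta$ implies that $\alpha$ is a best reply to $\xi_{m}$ for every $m$. Hence playing $\alpha$ after every signal maximizes each term, no deviation is profitable, and the state $\left(\left\{ \alpha\right\} ,\nu_{\alpha}\right)$ (with consistent profile assigning $\nu_{\alpha}$ to $\alpha$ and $\nu_{\beta}$ to $s^{\beta}$) is a Nash equilibrium for every $\epsilon\in\left(0,1\right)$. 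Taking any $\epsilon_{n}\to0$ yields a constant sequence of Nash equilibria converging to $\left(\left\{ \alpha\right\} ,\nu_{\alpha}\right)$, establishing that it is a perfect equilibrium.

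For the ``moreover'' part I would upgrade the construction to a regular one. When $\alpha$ is not totally mixed it differs from the totally mixed tremble $\beta$, and both lie in the convex set $T=\left\{ \beta'\in\Delta\left(A\right):\alpha\text{ is a best reply to }\beta'\right\}$. Consequently the whole segment $\left[\alpha,\beta\right]$ lies in $T$, and every point $t\alpha+\left(1-t\right)\beta$ with $t\in\left[0,1\right)$ is totally mixed. I would pick two distinct such points $\beta_{1}\neq\beta_{2}$ and set $S^{C}=\left\{ s^{\beta_{1}},s^{\beta_{2}}\right\} $; this set is regular (two distinct constant action distributions cannot both coincide with any given $\alpha'$, as required by Def. \ref{def:regularity-1}) and contains a totally mixed strategy. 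The identical per-signal argument applies, now with $\xi_{m}$ a convex combination of $\alpha,\beta_{1},\beta_{2}$, all lying in $T$, so $\alpha$ remains a best reply after every signal and the state is again an exact Nash equilibrium for every $\epsilon$. This delivers $\left(\left\{ \alpha\right\} ,\nu_{\alpha}\right)$ as a regular perfect equilibrium.

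The step I expect to require the most care is the reduction to independent per-signal problems and the verification that a deviator cannot gain through the signal she broadcasts: this is exactly where the signal-independence of the incumbents is used, and it is what makes the linearity/convexity argument ($\alpha$ best-replies to any convex combination of trembles) go through. The second delicate point, specific to the regular case, is the existence of \emph{two} distinct totally mixed trembles, which rests precisely on $\alpha$ being non-totally-mixed, so that $\alpha$ is a point of $T$ genuinely distinct from the interior tremble $\beta$; when $\alpha$ is itself totally mixed, $T$ may reduce to $\left\{ \alpha\right\} $ and no regular commitment structure of this form need exist, which is why regularity is claimed only in the non-totally-mixed case.
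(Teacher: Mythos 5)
Your proof is correct and follows essentially the same route as the paper's: both invoke the trembling-hand characterization to obtain a totally mixed action to which $\alpha$ best-replies, use constant (signal-independent) commitment strategies so that the partner's conditional play after any signal is a convex combination of actions in the best-reply set of $\alpha$, and conclude that playing $\alpha$ after every signal is an exact Nash equilibrium for every $\epsilon$. Your only departures are cosmetic — you make explicit the per-signal decomposition and the absence of indirect effects that the paper treats as immediate, and you take two totally mixed points on the segment $\left[\alpha,\beta\right]$ as the regular commitment set where the paper uses $\left\{ \alpha_{1},\alpha\right\}$.
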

\begin{proof}
If $\alpha$ is a totally mixed strategy, then it is immediate that
the state $\left(\left\{ \alpha\right\} ,\nu_{\alpha}\right)$ is
a Nash equilibrium of the perturbed environment$\left(\left(G,k\right),\left\{ \alpha\right\} ,\epsilon\right)$
for any $\epsilon>0$, which implies that the state $\left(\left\{ \alpha\right\} ,\nu_{\alpha}\right)$
is a perfect equilibrium. Assume now that $\alpha$ is not totally
mixed. The fact that $\alpha\in\Delta\left(A\right)$ is a symmetric
perfect equilibrium of the underlying game implies (see \citealp[Theorem 7]{selten1975reexamination})
that there is a sequence of totally mixed strategies $\left(\alpha_{n}\right)\rightarrow_{n\rightarrow\infty}\alpha$,
such that $\alpha$ is a best reply to each $\alpha_{n}$. The fact
that $\alpha$ is a best reply both to itself and to $\alpha_{1}$
(the first element in the sequence $\left(\alpha_{n}\right)$) implies
that the state $\left(\left\{ \alpha\right\} ,\nu_{\alpha}\right)$
is a Nash equilibrium of the regular perturbed environment $\left(\left(G,k\right),\left(\left\{ \alpha_{1},\alpha\right\} ,\left(0.5,0.5\right)\right),\epsilon\right)$
for any $\epsilon>0$, which implies that $\left(\left\{ \alpha\right\} ,\nu_{\alpha}\right)$
is a regular perfect equilibrium. 
\end{proof}

\subsection{Stability of Cooperation when Observing a Single Action\label{subsec:Stability-of-Cooperation-single-action}}

In what follows we characterize which distributions of commitments
support cooperation as a perfect equilibrium outcome in a defensive
Prisoner's Dilemma when $k=1$.

Given a distribution of commitments $\left(S^{C},\lambda\right)$,
we define $\beta_{\left(S^{C},\lambda\right)}\in\left(0,1\right)$
as follows: 
\begin{equation}
\beta_{\left(S^{C},\lambda\right)}=\frac{\mathbf{E}_{\lambda}\left(\left(s_{0}\left(d\right)\right)^{2}\right)}{\mathbf{E}_{\lambda}\left(\left(s_{0}\left(d\right)\right)\right)}=\frac{\sum_{s\in S^{C}}\lambda\left(s\right)\cdot\left(s_{0}\left(d\right)\right)^{2}}{\sum_{s\in S^{C}}\lambda\left(s\right)\cdot s_{0}\left(d\right)}.\label{eq:beta-C-lambda-1-1}
\end{equation}
The value of $\beta_{\left(S^{C},\lambda\right)}$ is the ratio between
the mean of the square of the probability of defection of a random
committed agent who observes $m=0$ and the mean of the same probability
without squaring it. In particular, when the set of commitments is
a singleton, $\beta_{\left(S^{C},\lambda\right)}$ is equal to the
probability that a committed agent defects when she observes $m=0$
(i.e., $\beta_{\left(S^{C},\lambda\right)}=s_{0}\left(d\right)$).

The following result shows that if the game is defensive and agents
observe a single action, then cooperation is a perfect equilibrium
action with respect to the distribution of commitments $\left(S^{C},\lambda\right)$
iff $g\leq\beta_{\left(S^{C},\lambda\right)}$. 
\begin{prop}
\label{prop:observing-single-action-1}\label{prop:Let--be-single-action-complex-result}Let
$E=\left(G_{PD},1\right)$ be an environment, where $G_{PD}$ is a
defensive Prisoner's Dilemma ($g<l$). Let $\left(S^{C},\lambda\right)$
be a distribution of commitments. There exists a regular perfect equilibrium
$\left(S^{*},\sigma^{*},\theta^{*}\equiv0\right)$ with respect to
$\left(S^{C},\lambda\right)$ iff $g\leq\beta_{\left(S^{C},\lambda\right)}$. 

\emph{The proof of Prop. \ref{prop:observing-single-action-1} is
given in Appendix \ref{subsec:Stability-of-Cooperation-single-action}.
Observe that Prop. \ref{prop:observing-single-action-1} immediately
implies that cooperation is a perfect equilibrium outcome in a defensive
Prisoner's Dilemma with $k=1$ iff $g<1$ (i.e., Proposition \ref{prop:observing-single-action-1}
implies Proposition \ref{Pro-stability-of-cooperation-when-k=00003D1}).}
\end{prop}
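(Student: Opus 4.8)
The plan is to characterize any cooperative perfect equilibrium (those with $\theta^{*}\equiv0$) through the limiting behavior of the single free parameter $q$, the average probability with which normal agents defect after observing the partner's lone action to be a defection ($m=1$), and then to test the unique threatening deviation, namely always defecting. Throughout I write $\beta:=\beta_{\left(S^{C},\lambda\right)}$ for the ratio in \eqref{eq:beta-C-lambda-1-1}.

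First I would argue, exactly as in part 1 of Theorem \ref{thm:cooperation-defensive-PDs}, that in any such equilibrium all normal agents cooperate on $m=0$, and that the value $q\in(0,1)$ is forced: $q=1$ makes the population defection rate $\Pr(d)$ fail to vanish as $\epsilon\rightarrow0$ (the $k=1$ analogue of contagion, since here $\Pr(m=1)=\Pr(d)$), while $q=0$ removes all deterrence so that always defecting earns $\approx1+g>1$. With $q\in(0,1)$ pinned by indifference on $m=1$, I would compute to first order in $\epsilon$ the relevant quantities: from consistency, $\Pr(d)=\Pr(m=1)=\frac{\epsilon\cdot\mathbf{E}_{\lambda}\left(s_{0}\left(d\right)\right)}{1-q}+O\left(\epsilon^{2}\right)$, and then the posterior $\Pr(d\mid m=1)$. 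The decisive step is the Bayesian computation of $\Pr(d\mid m=1)$: conditioning on $m=1$ reweights a committed partner by her observed defection probability $\approx s_{0}\left(d\right)$, while that same partner defects against the agent with the independent probability $\approx s_{0}\left(d\right)$; the numerator is therefore $\epsilon\cdot\mathbf{E}_{\lambda}\left(\left(s_{0}\left(d\right)\right)^{2}\right)$, and dividing gives $\Pr(d\mid m=1)=(1-q)\,\beta+O\left(\epsilon\right)$. This is exactly where the ratio defining $\beta$ enters.

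Substituting into the $k=1$ indifference condition $g\cdot\Pr(c\mid m=1)+l\cdot\Pr(d\mid m=1)=q\cdot\left(l+1\right)$ and solving yields, in the limit, the unique value $q^{*}=\frac{g+\left(l-g\right)\beta}{\left(l+1\right)+\left(l-g\right)\beta}\in\left(0,1\right)$. Since an agent who always defects is observed at $m=1$ by every future partner, she induces a partner-defection probability of $q^{*}+O\left(\epsilon\right)$ and hence earns $\left(1-q^{*}\right)\left(1+g\right)$ in the limit, against an incumbent payoff of $1$. A direct calculation gives
\[
\left(1-q^{*}\right)\left(1+g\right)-1=\frac{\left(l-g\right)\left(g-\beta\right)}{\left(l+1\right)+\left(l-g\right)\beta},
\]
whose sign (given $g<l$) is that of $g-\beta$. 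This proves necessity: if $g>\beta$ the always-defect deviation strictly profits for all small $\epsilon$, so no cooperative Nash equilibrium survives along any sequence, and cooperation is not a perfect equilibrium action. For sufficiency when $g\le\beta$, I would exhibit the explicit state $\left(\left\{ s^{1},s^{2}\right\} ,\left(q,1-q\right),\theta^{*}\equiv0\right)$, where $s^{1}$ defects iff $m=1$ and $s^{2}\equiv c$, choosing $q$ to solve the finite-$\epsilon$ indifference equation; defecting on $m=0$ is then strictly suboptimal because its direct gain $g$ lies below $q\left(l+1\right)=g+\left(l-g\right)\Pr(d\mid m=1)$, and the sign computation above shows the always-defect deviation is unprofitable.

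The main obstacle is the boundary case $g=\beta$, where the leading-order comparison between the always-defect deviator and the incumbents is a tie and the conclusion must be read off from the $O\left(\epsilon\right)$ terms. The delicate part is therefore the bookkeeping: I must verify that along a suitable sequence $\epsilon_{n}\rightarrow0$ the finite-$\epsilon$ indifference value of $q$ and the finite-$\epsilon$ deviator payoff keep the always-defect deviation weakly unprofitable (and that no other deviation becomes profitable), so that a genuine sequence of Nash equilibria converges to the cooperative state. Controlling these second-order terms uniformly, rather than merely passing to the limit, is the crux of turning the limiting indifference computation into an honest perfect-equilibrium construction.
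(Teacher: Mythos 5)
Your proposal follows essentially the same route as the paper's proof in Appendix \ref{subsec:Proof-of-Proposition}: reduce to the candidate states built from $s^{1}$ and always-cooperate, derive $\Pr\left(d\mid m=1\right)=\left(1-q\right)\beta_{\left(S^{C},\lambda\right)}+O\left(\epsilon\right)$ by the same Bayesian reweighting, solve the indifference condition to get $q^{*}=\frac{g+\left(l-g\right)\beta}{\left(l+1\right)+\left(l-g\right)\beta}$, and test the always-defect deviation, whose payoff gap $\frac{\left(l-g\right)\left(g-\beta\right)}{\left(l+1\right)+\left(l-g\right)\beta}$ reproduces the paper's chain of equivalences $\left(1+g\right)\left(1-q\right)\leq1\Leftrightarrow g\leq\beta$. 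The construction for sufficiency and the residual bookkeeping you flag (the boundary case and the finite-$\epsilon$ verification) are exactly the steps the paper itself defers to the proof of part 2 of Theorem \ref{thm:cooperation-defensive-PDs}.
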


\section{Stronger Equilibrium Refinements\label{sec:Evolutionary-Stability}}

In the main text we dealt with the notion of perfect equilibrium.
In this appendix we present three stronger refinements of this solution
concept: strict perfection, evolutionary stability, and robustness.

\subsection{Strictly Perfect Equilibrium Action\label{subsec:Strictly-Perfect}}

The notion of perfect equilibrium might be considered too weak because
it may crucially depend on a specific set of commitment strategies.
In what follows we present the refinement of strict perfection that
requires the equilibrium outcome to be sustained regardless of which
commitment strategies are present in the population.

In most of our results we focus on pure perfect equilibria in which
there exists an action $a^{*}$ that is played with probability one
in the limit in which the frequency of committed agents converges
to zero. In order to simplify the notation, we define the refinement
of strict perfection only with respect to pure equilibrium outcomes.
We say that an action $a\in A$ is strictly perfect if it is the limit
behavior of Nash equilibria with respect to \emph{all} distributions
of commitment strategies. Formally:\footnote{\citet{okada1981stability} deals with normal-form games and presents
the related notion of a strict perfect equilibrium as the limit of
Nash equilibria for any ``trembling-hand'' perturbation. In our
setup different strategies might be equivalent in the sense that they
induce the same observable behavior, as the frequency of the commitment
agents converges to zero. Our notion focuses on the observed behavior
(i.e., everyone playing action $a^{*}$), but allows for the choice
of strategy that induces the pure action $a^{*}$ to depend on the
distribution of commitments. This approach is in the spirit of other
set-wise solution concepts in the literature, such as evolutionarily
stable sets (\citealp{thomas1985evolutionarily}) and hyperstable
sets (\citealp{kohlberg1986strategic}).}
\begin{defn}
\label{def-strict-perfect-outcome}Action $a^{*}\in A$ is a \emph{strictly
perfect} equilibrium action in the environment $E=\left(G,k\right)$
if, for any distribution of commitment strategies \emph{$\left(S^{C},\lambda\right)$},
there exist a steady state $\left(S^{*},\sigma^{*},\theta^{*}\equiv\nu_{a^{*}}\right)$
and converging sequences $\left(S_{n}^{N},\sigma_{n},\theta_{n}\right)_{n}\rightarrow_{n\rightarrow\infty}\left(S^{*},\sigma^{*},\theta^{*}\right)$
and $\left(\epsilon_{n}>0\right)_{n}\rightarrow_{n\rightarrow\infty}0$,
such that for each $n$, the state $\left(S_{n}^{N},\sigma_{n},\theta_{n}\right)$
is a Nash equilibrium of the perturbed environment $\left(\left(G,k\right),\left(S^{C},\lambda\right),\epsilon_{n}\right)$. 
\end{defn}

\paragraph{Equilibrium actions that satisfy strict perfection}

The formal proofs of Proposition \ref{pro:defection-is-evol-stable}
show that defection always satisfies the refinement of strict perfection.
The formal proofs of Theorems \ref{thm:cooperation-defensive-PDs}
and \ref{thm:stable-cooperation-observing-conflicts} show that cooperation
satisfies the refinement of strict perfection when agents observe
at least two actions in defensive games, or observe at least two conflicts
in mild games.

\paragraph{Equilibrium actions that do not satisfy strict perfection}

The proof of Proposition \ref{prop:Let--be-single-action-complex-result}
shows that when agents observe a single action, cooperation is a perfect
equilibrium action only with respect to some distributions of commitment
strategies (namely, those in which the value of $\beta_{\left(S^{C},\lambda\right)}$
is sufficiently large), and, thus, it is not strictly perfect. 

Cooperation is also not a strictly perfect equilibrium in Theorems
\ref{thm:stable-cooperation-observing-action-profiles} and \ref{thmtertiary-observation}
(dealing with observation of action profiles and observation of actions
against cooperation, respectively). Specifically, cooperation is not
a perfect equilibrium action with respect to distributions of commitments
in which the committed agents defect with high probability. The reason
is that committed agents who defect with high probability induce normal
partners to defect against them with probability one. This implies
that when observing a partner to be involved in either side of a unilateral
defection (either as the sole defector or as the sole cooperator),
the partner is most likely to be normal. As a result the agents' incentives
to defect are the same when observing mutual cooperation as when observing
unilateral defection, and this does not allow cooperation to be supported
in a perfect equilibrium, as such cooperation relies on agents who
have better incentives to defect when observing a unilateral defection.

\subsection{Evolutionary Stability }

The notion of perfect equilibrium requires that no agent be able to
achieve a better payoff than the incumbents by unilateral deviation.
In what follows we present the refinement of evolutionary stability,
which requires stability also against small groups of agents who deviate
together. 

\subsubsection{Definitions}

In a seminal paper, \citet{smith1973lhe} define a symmetric Nash
equilibrium strategy $\alpha^{*}$ to be evolutionarily stable if
the incumbents achieve a strictly higher payoff when being matched
with any other best-reply strategy $\beta$ (i.e., $\pi\left(\beta,\alpha^{*}\right)=\pi\left(\alpha^{*},\alpha^{*}\right)\,\Rightarrow\,$$\pi\left(\alpha^{*},\beta\right)>\pi\left(\beta,\beta\right)$.
The motivation is that if $\beta$ is a best reply to $\alpha^{*}$,
then a single deviator who plays $\beta$ will be as successful as
the incumbents. This may induce a few other agents to mimic her behavior,
until a small positive mass of agents follow $\beta$. The above inequality
implies that at this stage the followers of $\beta$ will be strictly
outperformed, and thus will disappear from the population. 

Our setup with environments is similar to the standard setup of a
repeated game in that it rarely admits evolutionarily stable strategies.
Typically, not all the actions will be played by normal agents in
equilibrium, and as a result some signals will never be observed.
Deviators who differ in their behavior only after such zero probability
signals will get the same payoff as the incumbents both against the
incumbents and against other deviators. This violates the above inequality. 

Following \citeauthor{selten1983evolutionary}'s \citeyearpar{selten1983evolutionary}
notion of ``limit ESS,'' (see also \citealp{heller2014stability})
we solve this issue by requiring evolutionary stability in a converging
sequence of perturbed environments, in which all signals are observed
on the equilibrium path, instead of simply requiring evolutionary
stability in the unperturbed environment.

This is formalized as follows. Given a steady state $\left(S,\sigma,\theta\right)$
in a perturbed environment $\left(\left(G,k\right),\left(S^{C},\lambda\right),\epsilon\right)$,
we define $\pi_{\hat{s}}\left(\hat{s}\right)$ as the (long-run average)
payoff of strategy $\hat{s}$ against itself, and $\pi_{\left(S,\sigma\right)}\left(\hat{s}\right)$
as the mean (long-run average) payoff of the incumbents against strategy
$\hat{s}$. Specifically, if $\hat{s}\in S\cup S^{C}$, then 
\[
\pi_{\hat{s}}\left(\hat{s}|S,\sigma,\theta\right)=\sum_{\left(a,a'\right)\in A^{2}}\hat{\theta}_{\hat{s}}\left(\hat{s}\right)\left(a\right)\cdot\hat{\theta}_{\hat{s}}\left(\hat{s}\right)\left(a'\right)\cdot\pi\left(a,a'\right),
\]
\[
\pi_{\left(S,\sigma\right)}\left(\hat{s}|S,\sigma,\theta\right)=\sum_{s\in S\cup S^{C}}\sum_{\left(a,a'\right)\in A^{2}}\left(\left(1-\epsilon\right)\cdot\sigma\left(s\right)+\epsilon\cdot\lambda\left(s\right)\right)\cdot\theta_{s}\left(\hat{s}\right)\left(a\right)\cdot\hat{\theta}_{\hat{s}}\left(s\right)\left(a'\right)\cdot\pi\left(a,a'\right),
\]
and if $\hat{s}\notin S\cup S^{C}$, then we define $\pi_{\hat{s}}\left(\hat{s}\right)$
and $\pi_{\left(S,\sigma\right)}\left(\hat{s}\right)$ as the respective
payoffs in the post-deviation steady state $\left(S\cup\left\{ \hat{s}\right\} ,\hat{\sigma},\hat{\theta}\right)$:
\[
\pi_{\hat{s}}\left(\hat{s}|S,\sigma,\theta\right)=\sum_{\left(a,a'\right)\in A^{2}}\hat{\theta}_{\hat{s}}\left(\hat{s}\right)\left(a\right)\cdot\hat{\theta}_{\hat{s}}\left(\hat{s}\right)\left(a'\right)\cdot\pi\left(a,a'\right),
\]
\[
\pi_{\left(S,\sigma\right)}\left(\hat{s}|S,\sigma,\theta\right)=\sum_{s\in S\cup S^{C}}\sum_{\left(a,a'\right)\in A^{2}}\left(\left(1-\epsilon\right)\cdot\hat{\sigma}\left(s\right)+\epsilon\cdot\lambda\left(s\right)\right)\cdot\hat{\theta}_{s}\left(\hat{s}\right)\left(a\right)\cdot\hat{\theta}_{\hat{s}}\left(s\right)\left(a'\right)\cdot\pi\left(a,a'\right).
\]

\begin{defn}
A steady state $\left(S^{*},\sigma^{*},\theta^{*}\right)$ of a perturbed
environment $\left(\left(G,k\right),\left(S^{C},\lambda\right),\epsilon\right)$
is \emph{evolutionarily stable} if (1) $\left(S^{*},\sigma^{*},\theta^{*}\right)$
is a Nash equilibrium, and (2) for any best-reply strategy $\hat{s}$
(i.e., $\pi_{\hat{s}}\left(S^{*},\sigma^{*},\theta^{*}\right)=\pi\left(S^{*},\sigma^{*},\theta^{*}\right)$),
such that $\sigma^{*}\left(\hat{s}\right)<1$ (i.e., $\hat{s}$ is
not the only normal strategy) the following inequality holds: $\pi_{\left(S,\sigma\right)}\left(\hat{s}|S,\sigma,\theta\right)>\pi_{\hat{s}}\left(\hat{s}|S,\sigma,\theta\right)$.
\end{defn}

\begin{defn}
A steady state $\left(S^{*},\sigma^{*},\theta^{*}\right)$ of the
environment $\left(G,k\right)$ is\emph{ a perfect evolutionarily
stable state} if there exist a distribution of commitments $\left(S^{C},\lambda\right)$
and converging sequences $\left(S_{n}^{N},\sigma_{n},\theta_{n}\right)_{n}\rightarrow_{n\rightarrow\infty}\left(S^{*},\sigma^{*},\theta^{*}\right)$
and $\left(\epsilon_{n}>0\right)_{n}\rightarrow_{n\rightarrow\infty}0$,
such that for each $n$, the state $\left(S_{n}^{N},\sigma_{n},\theta_{n}\right)$
is an evolutionarily stable state in the perturbed environment $\left(\left(G,k\right),\left(S^{C},\lambda\right),\epsilon_{n}\right)$.
If the outcome assigns probability one to one of the actions, i.e.,
$\theta^{*}\equiv a$, then we say that this action is a perfect evolutionarily
stable outcome. 
\end{defn}
Finally, we define a strictly perfect evolutionarily stable outcome
as a pure action that is an outcome of a perfect evolutionarily stable
state for any distribution of commitments (similar to the notion of
strict limit ESS in \citealp{heller2014Three}).
\begin{defn}
Action $a^{*}\in A$ is a \emph{strictly perfect} \emph{evolutionarily
stable outcome} in the environment $E=\left(\left(A,\pi\right),k\right)$
if, for any distribution of commitment strategies \emph{$\left(S^{C},\lambda\right)$},
there exist a steady state $\left(S^{*},\sigma^{*},\theta^{*}\equiv a^{*}\right)$
and converging sequences $\left(S_{n}^{N},\sigma_{n},\theta_{n}\right)_{n}\rightarrow_{n\rightarrow\infty}\left(S^{*},\sigma^{*},\theta^{*}\right)$
and $\left(\epsilon_{n}>0\right)_{n}\rightarrow_{n\rightarrow\infty}0$,
such that for each $n$, the state $\left(S_{n}^{N},\sigma_{n},\theta_{n}\right)$
is an evolutionarily stable state in the perturbed environment $\left(\left(G,k\right),\left(S^{C},\lambda\right),\epsilon_{n}\right)$.
\end{defn}

\subsubsection{Adaptation of Results}

All of our results hold with respect to the refinement of evolutionary
stability. In particular, the fact that always defecting is a strict
equilibrium (i.e., the unique best reply to itself) in any slightly
perturbed environment implies that defection is a strictly perfect
evolutionarily stable outcome. 

One can adapt the results about sustaining cooperation as an equilibrium
action (Theorems \ref{thm:cooperation-defensive-PDs}\textendash \ref{thmtertiary-observation}).
Specifically, minor modifications to the proofs can show that cooperation
is a strictly perfect evolutionarily stable outcome in defensive games
with observation actions and in mild games with observation of conflicts
(when $k\geq2)$, and that cooperation is a perfect evolutionarily
stable outcome in mild games with observation of action profiles,
and in any game with observation of actions against defectors.

A sketch of the argument why the results apply also to the refinement
of evolutionary stability is as follows. There are two kinds of steady
states that sustain cooperation in the proofs in this paper:
\begin{enumerate}
\item Steady state $\psi'_{n}=\left(\left\{ s^{q_{n}}\right\} ,\theta_{n}\right)$
that has a single normal strategy in its support. The arguments in
the proofs show that each such strategy is the unique best reply to
itself in the $n^{\textrm{th}}$ perturbed environment (i.e., $\pi_{s'}\left(\psi'_{n}\right)<\pi\left(\psi'_{n}\right)$
for each $s'\neq s^{q_{n}}$), which shows that $\psi'_{n}$ is an
evolutionarily stable state in the $n^{\textrm{th}}$ perturbed environment.
\item Steady state $\psi_{n}=\left(\left\{ s^{1},s^{2}\right\} ,\left(q_{n},1-q_{n}\right),\theta_{n}\right)$
that consists of two normal strategies in its support. The arguments
in the proofs show that these two strategies are the only best replies
to this steady state (i.e., $\pi_{s'}\left(\psi_{n}\right)<\pi\left(\psi_{n}\theta_{n}\right)$
for each $s'\notin\left\{ s^{1},s^{2}\right\} $ ). Moreover, the
arguments in the proof (see, in particular, Remark \ref{rem:evolutionarily-stability}
at the end of the proof of Theorem \ref{thm:cooperation-defensive-PDs})
imply that each of these two normal strategies obtains a relatively
low payoff when being matched against itself, i.e.,: $\pi\left(s^{1}|\psi_{n}\right)>\pi_{s^{1}}\left(s^{1}|S,\sigma,\theta\right)$
and $\pi\left(s^{2}|\psi_{n}\right)>\pi_{s^{2}}\left(s^{2}|\psi_{n}\right)$,
which implies that $\psi_{n}$ is evolutionarily stable. 
\end{enumerate}

\subsection{Robustness\label{subsec:Robustness}}

The outcome of a perfect equilibrium may be unstable in the sense
that small perturbations of the distribution of observed signals may
induce a change of behavior that moves the population away from the
consistent signal profile. We address this issue by introducing a
robustness refinement (in the spirit of the notion of Lyapunov stability
in dynamic environments) that requires that if we slightly perturb
the distribution of observed signals, then the agents converge back
to playing the equilibrium outcome.

In order to simplify the notation, we define the refinement of robustness
only with respect to pure equilibrium outcomes. We say that a pure
perfect equilibrium with outcome $a^{*}$ is robust if there exists
a bounded sequence of parameters $\left(\kappa_{n}\right)_{n}$ such
that for each perturbed environment with $\epsilon_{n}$ committed
agents: (1) the normal agents play action $a^{*}$with a probability
greater than $1-\kappa_{n}\cdot\epsilon_{n}$ in the steady state,
and (2) if one perturbs the initial distribution of signals to any
other (possibly inconsistent) signal profile in which the normal agents
are observed to play action $a^{*}$with a probability of at least
$1-\kappa_{n}\cdot\epsilon_{n}$, then agents continue to play action
$a^{*}$ with a probability of at least $1-\kappa_{n}\cdot\epsilon_{n}$
in the new signal profile that is induced by the agents' behavior
and the perturbed signal profile. 

Let $\Delta^{bn}\left(M\right)\subseteq\Delta\left(M\right)$ be the
set of binomial distributions of signals that are induced by distributions
of actions, i.e., 
\[
\Delta^{bn}\left(M\right)=\left\{ \tilde{\nu}\in\Delta\left(M|\exists\alpha\in\Delta\left(A\right)\,s.t.\,\tilde{\nu}=\nu\left(\alpha\right)\right)\right\} ,
\]
Let $\alpha\left(\theta_{s}\right)\in\Delta\left(A\right)$ is the
distribution of actions that induce signals distributed according
to $\theta_{s}\in\Delta^{bn}\left(M\right)$, i.e., $\nu\left(\alpha\left(\theta_{s}\right)\right)=\theta_{s}$.
Given a distribution of normal strategies $\left(S^{N},\sigma\right)$
and a (possibly inconsistent) signal profile $\theta$, let $\alpha_{\sigma}\left(\theta\right)\in\Delta\left(A\right)$
be the ($\sigma$-weighted) population average of the distributions
of actions that induce signals distributed according to the signal
profile $\theta\in\Delta^{nm}\left(M\right)$ for the normal agents;
i.e., for each action $a\in A$,

\[
\alpha_{\sigma}\left(\theta\right)\left(a\right)=\sum_{s\in S^{N}}\sigma\left(s\right)\cdot\alpha\left(\theta_{s}\right)\left(a\right).
\]
That is, $\left(\alpha\left(\theta_{s}\right)\right)_{s\in S^{N}}$
is the profile of distributions of actions that generate the profile
of signal distributions for the normal agents $\theta=\left\{ \theta_{s}\right\} _{s\in S^{N}}$,
and $\alpha_{\sigma}\left(\theta\right)$ is the ($\sigma$-weighted)
average of the distributions of actions in this profile. 

The formal definition of robust perfection is as follows.
\begin{defn}
\label{def-strict-perfect-outcome-1}Let $\left(S^{*},\sigma^{*},\theta^{*}\equiv a^{*}\right)$
be a perfect equilibrium with respect to the distribution of commitments
$\left(S^{C},\lambda\right)$ and the converging sequences $\left(S_{n}^{N},\sigma_{n},\theta_{n}\right)_{n}\rightarrow_{n\rightarrow\infty}\left(S^{*},\sigma^{*},\theta^{*}\right)$
and $\left(\epsilon_{n}>0\right)_{n}\rightarrow_{n\rightarrow\infty}0$.
The equilibrium $\left(S^{*},\sigma^{*},\theta^{*}\equiv a^{*}\right)$
is \emph{robust} if there exists $\kappa>0$ and a bounded sequence
$0<\left(\kappa_{n}\right)_{n}<\kappa$, such that for each $n$,
(1) $\alpha_{\sigma_{n}}\left(\theta_{n}\right)\left(a^{*}\right)>1-\kappa_{n}\cdot\epsilon_{n}$,
and (2) for each signal profile $\theta\in O_{\left(S_{n}^{N}\cup S^{C}\right)}$,
\[
\alpha_{\sigma_{n}}\left(\theta\right)\left(a^{*}\right)\geq1-\kappa_{n}\cdot\epsilon_{n}\,\,\Rightarrow\,\,\alpha_{\sigma}\left(f_{\left(\left(1-\epsilon\right)\cdot\sigma_{n}+\epsilon\cdot\lambda\right)}\theta\right)\left(a^{*}\right)>1-\kappa_{n}\cdot\epsilon_{n}.
\]
The proof of Part 2 of Theorem \ref{thm:cooperation-defensive-PDs}
contains a detailed argument as to why the cooperative equilibrium
of Theorem \ref{thm:cooperation-defensive-PDs} is robust. The argument
as to why all other cooperative equilibria in Theorems \ref{thm:stable-cooperation-observing-conflicts}\textendash \ref{thmtertiary-observation}
are robust is analogous. (It is immediate that the defective perfect
equilibrium of Proposition \ref{pro:defection-is-evol-stable} satisfies
robustness because the behavior of the normal agents is independent
of what these agents observe.) 
\end{defn}

\section{Proofs\label{sec:Proofs}}

\subsection{Proof of Proposition \ref{pro:defection-is-evol-stable} (Defection
is Perfect)\label{subsec:Proof-of-Theorem-strict-is-strict-stable}}

We will prove a stronger result, namely, that defection is a strictly
perfect equilibrium action (as defined in Appendix \ref{subsec:Strictly-Perfect}),
i.e., that it is a perfect equilibrium action with respect to all
distributions of commitment strategies. Let $\zeta=\left(\mathcal{S^{C}},\lambda\right)$
be a distribution of commitments. Let $s_{d}\equiv d$ be the strategy
that always defects. Let $\left(\left\{ s_{d}\right\} ,\theta_{n}\right)$
be a steady state of the perturbed environment $\left(\left(G,k\right),\left(S^{C},\lambda\right),\epsilon_{n}\right)$.
The fact that an agent who follows $s_{d}\equiv d$ always defects
implies that $\left(\theta_{n}\right)_{s_{d}}\left(k\right)=1$ (i.e.,
the agent is always observed to defect in all $k$ interactions).
Consider a deviating agent (Alice) who follows any strategy $s\neq s_{d}$.
We show that Alice is strictly outperformed in any post-deviation
steady state.

The facts that $s\neq s_{d}$ and that all signals are observed with
positive probability in any perturbed environment imply that Alice
cooperates with an average probability of $\alpha>0$. We now compare
the payoff of Alice to the payoff of an incumbent (Bob) who follows
$s_{d}$. Alice obtains a direct loss of at least $\alpha\cdot\textrm{min}\left(g,l\right)$
due to cooperating with probability $\alpha$. The maximal indirect
benefit that she might achieve due to these cooperations (by inducing
committed agents to cooperate against her with higher probability
relative to their cooperation probability against Bob) is $\epsilon_{n}\cdot k\cdot\alpha\cdot\left(l+1\right)$
because there are $\epsilon_{n}$ committed agents, each of whom observes
Alice cooperate at least once in the $k$ sampled actions with a probability
of at most $k\cdot\alpha$, and each committed partner can yield Alice
a benefit of at most $l+1$ by cooperating when the partner observes
$m\geq1$. If $\epsilon_{n}$ is sufficiently small ($\epsilon_{n}<\frac{1}{k\cdot\left(l+1\right)}$),
then the direct loss is larger than the indirect maximal benefit ($\alpha>\epsilon_{n}\cdot k\cdot\alpha\cdot\left(l+1\right)$).
This implies that $\left(\left\{ s_{d}\right\} ,\theta_{n}\right)$
is a (strict) Nash equilibrium in any environment with $\epsilon_{n}<\frac{1}{k\cdot\left(l+1\right)}$,
which proves defection is a strictly perfect equilibrium action.

\subsection{Proof of Theorem \ref{thm:only-defection-is-stable} (Defection is
the Unique Equilibrium in Offensive PDs)\label{subsec:Proof-of-Theorem-defection-unique-stable-outcome}}

Let $\left(S^{*},\sigma^{*},\theta^{*}\right)$ be a regular perfect
equilibrium. That is, there exists a regular distribution of commitments
$\left(\mathcal{S^{C}},\lambda\right)$, a converging sequence $\left(\epsilon_{n}\right)_{n}\rightarrow0$,
and a converging sequence of steady states $\left(S_{n}^{N},\sigma_{n},\theta_{n}\right)\rightarrow\left(S^{*},\sigma^{*},\theta^{*}\right)$,
such that for each $n$ the state $\left(S_{n}^{N},\sigma_{n},\theta_{n}\right)$
is a Nash equilibrium of $\left(\left(G,k\right),\left(S^{C},\lambda\right),\epsilon_{n}\right)$.
We assume to the contrary that $S^{*}\neq\left\{ d\right\} $. 

Recall that any signal $m\in M=\left\{ 0,...,k\right\} $ is observed
with positive probability in any perturbed environment. Given a state
$\left(S_{n}^{N},\sigma_{n},\theta_{n}\right)$, an environment $\left(\left(G,k\right),\left(S^{C},\lambda\right),\epsilon_{n}\right)$,
a signal $m\in M$, and a strategy $s\in S_{n}^{N}$, let $q\left(m,s\right)$
denote the probability that a randomly drawn partner of a player defects,
conditional on the player following strategy $s$ and observing signal
$m$ about the partner.

We say that a strategy is ``defector-favoring'' if the strategy
is to defect against partners who are likely to cooperate, and to
cooperate against partners who are likely to defect. Specifically,
a strategy is defector-favoring if there is some threshold such that
the strategy is to cooperate (defect) when the partner's conditional
probability of defecting is above (below) this threshold. Formally:
\begin{defn}
Strategy $s\in S_{n}^{N}$ is \emph{defector-favoring}, given state
$\left(S_{n}^{N},\sigma_{n},\theta_{n}\right)$ and environment $\left(\left(G,k\right),\left(S^{C},\lambda\right),\epsilon_{n}\right)$,
if there is some $\bar{q}\in\left[0,1\right]$ such that, for each
$m\in M$, ~$q\left(m,s\right)>\bar{q}\,\Rightarrow\,s_{m}\left(d\right)=0,\,\,\,\textrm{and}\,\,\,q\left(m,s\right)<\bar{q}\,\Rightarrow\,s_{m}\left(d\right)=1.$
\end{defn}
The rest of the proof consists of the following four steps.

First, we show that all normal strategies are defector-favoring. Assume
to the contrary that there is a strategy $s\in S_{n}^{N}$ that is
not defector-favoring. Let $s'$ be a defector-favoring strategy that
has the same average defection probability as $s$ in the post-deviation
steady state. The fact that both strategies prescribe defection with
the same average probability implies that they induce the same behavior
from the partners (since these partners observe identical distributions
of signals when facing $s$ and when facing $s'$), and hence $q\left(m,s\right)=q\left(m,s'\right)$.
Agents who follow strategy $s'$ defect more often against partners
who are more likely to cooperate relative to strategy $s$. Since
the underlying game is offensive this implies that strategy $s'$
strictly outperforms strategy $s$, which contradicts that $\left(S_{n}^{N},\sigma_{n},\theta_{n}\right)$
is a Nash equilibrium.

Second, we show that all the normal strategies lead agents to defect
with the same average probability in $\left(S_{n}^{N},\sigma_{n},\theta_{n}\right)$.
Assume to the contrary that there are strategies $s,s'\in S_{n}^{N}$
such that agents following the former strategy have a higher average
probability of defection, i.e., $\alpha\left(\theta_{s}\right)\left(d\right)>\alpha\left(\theta_{s'}\right)\left(d\right)$.
Let $\beta=\alpha\left(\theta_{s}\right)\left(d\right)-\alpha\left(\theta_{s'}\right)\left(d\right)$.
Note that agents who follow strategy $s$ have a strictly higher payoff
than agents who follow $s'$ when being matched with normal partners.
This is because strategy $s$ yields: (1) a strictly higher direct
payoff of at least $\beta\cdot l$ due to playing more often the dominant
action $d$, and (2) a weakly higher payoff against normal agents,
because the fact that agents who follow it defect more often and all
normal agents follow defector-favoring strategies implies that normal
partners defect with a weakly smaller probability when being matched
with agents who follow strategy $s$ (relative to $s'$). We also
need to consider what happens when normal agents are matched with
committed agents. The maximal indirect gain that followers of strategy
$s'$ have relative to followers of strategy $s$, due to inducing
a higher probability of cooperation from committed partners, is at
most $\epsilon_{n}\cdot\left(l+1\right)\cdot k\cdot\beta$. This implies
that if $\epsilon_{n}<\frac{l}{\left(l+1\right)\cdot k}$, then followers
of strategy $s$ have a strictly higher payoff than followers of $s'$,
which contradicts that $\left(S_{n}^{N},\sigma_{n},\theta_{n}\right)$
is a Nash equilibrium. 

Third, we argue that for any normal agent it is the case that the
probability that the partner defects conditional on the agent observing
signal $m=k$ is weakly larger than the probability that the partner
defects conditional on the agent observing any signal $m<k$. To see
why, note that the regularity of the set of commitments implies that
not all commitment strategies have the same defection probabilities,
and thus the signal about the partner yields some information about
the partner's probability of defecting. The previous step shows that
all normal agents defect with the same probability, which implies
that they induce the same signal distribution, and thus they induce
the same behavior from all partners. Combining this fact with the
fact that not all commitment strategies have the same defection probability
implies (for a sufficiently small $\epsilon_{n})$ that if a player
observes a signal that includes only defections, then the partner
is more likely to have a higher average defection probability against
normal agents (i.e., $q\left(m,s\right)<q\left(k,s\right)$ for any
normal strategy $s$ and any $m<k$). 

Thus, any normal agent (who follows a defector-favoring strategy due
to the first step) defects with a weakly lower probability after observing
signal $m=k$. This implies that if $\epsilon_{n}$ is sufficiently
small, then a deviator who always defects outperforms the incumbents.
The deviator achieves a direct higher payoff by defecting more often,
as well as a weakly higher indirect gain by inducing the incumbents
to cooperate more often. 

\subsection{Proof of Theorem \ref{thm:cooperation-defensive-PDs} (Cooperation
Is Perfect in Defensive PDs) \label{subsec:Proof-of-Theorem-defensive}}

\paragraph{Part 1:}

Let $\left(S^{*},\sigma^{*},\theta^{*}\equiv0\right)$ be a perfect
equilibrium. This implies that there exist a distribution of commitments
$\left(S^{C},\lambda\right)$, a converging sequence of strictly positive
commitment levels $\epsilon_{n}\rightarrow_{n\rightarrow\infty}0$,
and a converging sequence of steady states $\left(S_{n}^{N},\sigma_{n},\theta_{n}\right)\rightarrow_{n\rightarrow\infty}\left(S^{*},\sigma^{*},\theta^{*}\right)$,
such that for each $n$ the state $\left(S_{n}^{N},\sigma_{n},\theta_{n}\right)$
is a Nash equilibrium of the perturbed environment $\left(\left(G,k\right),\left(S^{C},\lambda\right),\epsilon_{n}\right)$.
The fact that the equilibrium induces full cooperation (in the limit
when $\epsilon_{n}\rightarrow_{n\rightarrow\infty}0$) implies that
all normal agents must cooperate when they observe no defections,
i.e., $s_{0}\left(c\right)=1$ for each $s\in S^{*}$.

Next we show that $s_{1}\left(d\right)>0$ for some $s\in S^{*}$.
Assume to the contrary that $s_{1}\left(d\right)=0$ for every $s\in S^{*}$.
This implies that for any $\delta>0$, if $n$ is sufficiently large
then $\sum_{s\in S_{n}^{N}}\sigma_{n}\left(s\right)\cdot s_{1}\left(d\right)<\delta$.
Consider a deviator (Alice) who follows a strategy $s'$ that defects
with a small probability $\alpha$, satisfying$\epsilon_{n},\delta<<\alpha<<1$,
when observing no defections (i.e., $s'_{0}\left(d\right)=\alpha$).
It turns out that Alice will outperform the incumbents. To see this
note that since she occasionally defects when observing $m=0$ she
obtains a direct gain of at least $\alpha\cdot g$$\cdot\Pr\left(m=0\right)$,
where $\Pr\left(m=0\right)$ is the probability of observing $m=0$
given the steady state $\left(S_{n}^{N},\sigma_{n},\theta_{n}\right)$.
The probability that a partner observes her defecting twice or more
is $\sum_{i=2}^{k}\left(\begin{array}{c}
k\\
i
\end{array}\right)\alpha^{i}\cdot\left(1-\alpha\right)^{k-i}$. This implies that her indirect loss from these defections is at
most $\left(\sum_{i=2}^{k}\left(\begin{array}{c}
k\\
i
\end{array}\right)\alpha^{i}\cdot\left(1-\alpha\right)^{k-i}+\delta+\epsilon_{n}\right)\cdot\left(1+l\right)$ and, thus, for sufficiently small values $\epsilon_{n},\delta<<\alpha<<1$,
Alice strictly outperforms the incumbents.

We now show that $s_{m}\left(d\right)=1$ for all $s\in S^{*}$ and
all $m\geq2$. The fact that $\theta^{*}\equiv0$ implies that for
a sufficiently large $n$, all normal agents cooperate with an average
probability very close to one and, thus, the average probability of
defection by an agent who follows a strategy $s\in S\cup S^{C}$ is
very close to $s_{0}\left(d\right)$. Hence the distribution of signals
induced by such an agent is very close to $\nu_{s_{0}\left(d\right)}$.
Recall that we assume that the distribution of commitments contains
at least one strategy $s$ with $s_{0}\left(d\right)>0$. This implies
that the posterior probability that the partner is going to defect
is strictly increasing in the signal $m$ that the agent observes
about the partner. Note that the direct gain from defecting is strictly
increasing in the probability that the partner defects as well (due
to the game being defensive), while the indirect influence of defection
(on the behavior of future partners who may observe the current defection)
is independent of the partner's play. From the previous paragraph
we know that defection is a best reply conditional on an agent observing
$m=1$. This implies that defection must be the unique best reply
when an agent observes at least two defections (i.e., when $m\geq2$). 

It remains to show that there is a normal incumbent strategy to cooperate
with positive probability after observing a single defection, i.e.,
$s_{1}\left(d\right)<1$ for some $s\in S^{*}$. Assume to the contrary
that $s_{1}\left(d\right)=1$ for every $s\in S^{*}$. Let $r_{n}$
denote the average probability that a normal agent defects after observing
$m\geq1$. Since $\left(S_{n}^{N},\sigma_{n},\theta_{n}\right)\rightarrow_{n\rightarrow\infty}\left(S^{*},\sigma^{*},\theta^{*}\right),$
the assumption that $s_{1}\left(d\right)=1$ for all $s\in S^{*}$
implies that $r_{n}>0.6$ for a sufficiently large $n$. Let $Pr\left(m\geq1|S_{n}^{N}\right)$
denote the probability of observing $m\geq1$ conditional on being
matched with a normal partner. Note that the assumption that $\hat{s}_{0}\left(d\right)>0$
for some committed strategy $\hat{s}$ and the assumption that $s_{1}\left(d\right)>0$
for some normal strategy together imply that $Pr\left(m\geq1|S_{n}^{N}\right)>0$.
Note that $\theta^{*}\equiv c$ implies that $\lim_{n\rightarrow\infty}\left(Pr\left(m=1|S_{n}^{N}\right)\right)=0$.
Hence $Pr\left(m=1|S_{n}^{N}\right)$ is $O\left(\epsilon_{n}\right)$.
We can calculate $Pr\left(m\geq1|S_{n}^{N}\right)$ as follows:
\[
Pr\left(m\geq1|S_{n}^{N}\right)=k\cdot\left(\left(1-\epsilon_{n}\right)\cdot r_{n}\cdot Pr\left(m\geq1|S_{n}^{N}\right)+\epsilon_{n}\cdot\lambda\left(\hat{s}\right)\cdot\left(\hat{s}_{0}\left(d\right)+O\left(\epsilon_{n}\right)\right)\right)-O\left(\epsilon_{n}^{2}\right)-O\left(\left(Pr\left(m\geq1|S_{n}^{N}\right)\right)^{2}\right).
\]
The reason for this equation is as follows. The observed signal induced
by a normal agent (Bob) describes his actions in $k$ interactions.
In each of these interactions Bob's partner was normal with a probability
of $1-\epsilon_{n}$, and was committed with a probability of $\epsilon_{n}$.
If Bob's partner in an interaction was normal then she defected with
a probability of $r_{n}$ when she observed $m\geq1$ (which happened
with a probability of $Pr\left(m\geq1|S_{n}^{N}\right)$). If Bob's
partner in an interaction was committed then she followed strategy
$\hat{s}$ with a probability of $\lambda\left(\hat{s}\right)$ and
defected with a probability of $\hat{s}_{0}\left(d\right)+O\left(\epsilon_{n}\right)$
(as argued above, the average defection probability of an agent following
strategy $s$ should be close to $s_{0}\left(d\right)$). Finally,
the terms $-O\left(\epsilon_{n}^{2}\right)-O\left(\left(Pr\left(m\geq1|S_{n}^{N}\right)\right)^{2}\right)$
are subtracted to avoid ``double-counting'' cases in which Bob has
defected more than once. Rearranging and simplifying the above equation
by using the fact that $\left(Pr\left(m\geq1|S_{n}^{N}\right)\right)^{2}$
is $O\left(\epsilon_{n}^{2}\right)$ yields 
\[
\left(1-k\cdot\left(1-\epsilon_{n}\right)\cdot r_{n}\right)\cdot Pr\left(m\geq1|S_{n}^{N}\right)=k\cdot\left(\epsilon_{n}\cdot\lambda\left(\hat{s}\right)\cdot\hat{s}_{0}\left(d\right)\right).
\]
Then use $r_{n}>0.6$ to infer that the LHS is negative. This contradicts
the fact that the RHS is positive. 

\paragraph{Part 2:}

We prove a stronger result, namely, that cooperation is a strictly
perfect equilibrium action (as defined in Appendix \ref{subsec:Strictly-Perfect}),
i.e., that it is a perfect equilibrium action with respect to all
distributions of commitment strategies. Recall that $s^{1}$ ($s^{2}$)
is the strategy that induces an agent to defect iff the agent observes
$m\geq1$ ($m\geq2$). Let $0<q<\frac{1}{k\cdot\left(l+1\right)}$
be a probability that will be defined later. Let \emph{$s^{q}$} be
the strategy that induces an agent to defect with a probability of
$q$ iff the agent observes $m=1$, to defect for sure if she observes
$m\geq2$, and to cooperate for sure if she observes $m=0$. Let $\left(S^{C},\lambda\right)$
be an arbitrary distribution of commitments. We will show that there
exist a converging sequence of commitment levels $\epsilon_{n}\rightarrow0$
and converging sequences of steady states 
\[
\psi{}_{n}\equiv\left(\left\{ s^{1},s^{2}\right\} ,\sigma_{n}=\left(q_{n},1-q_{n}\right),\theta_{n}\right)\rightarrow_{n\rightarrow\infty}\psi^{*}\equiv\left(\left\{ s^{1},s^{2}\right\} ,\left(q,1-q\right),\theta\equiv0\right),
\]
and 
\[
\psi'_{n}\equiv\left(\left\{ s^{q_{n}}\right\} ,\theta_{n}'\right)\rightarrow_{n\rightarrow\infty}\psi'^{*}\equiv\left(\left\{ \emph{\ensuremath{s^{q}}}\right\} ,\theta'\equiv0\right),
\]
 such that either (1) for each $n$ the steady state $\psi{}_{n}$
is a Nash equilibrium of $\left(\left(G,k\right),\left(S^{C},\lambda\right),\epsilon_{n}\right)$,
or (2) for each $n$ the steady state $\psi'_{n}$ is a Nash equilibrium
of $\left(\left(G,k\right),\left(S^{C},\lambda\right),\epsilon_{n}\right)$. 

Fix an $n\geq1$ such that $\epsilon_{n}$ is sufficiently small.
(Exactly what counts as sufficiently small will become clear below.)
In what follows, we calculate a number of probabilities while relying
on the fact that $\epsilon_{n}<<1$. Thus we neglect terms of $O\left(\epsilon_{n}\right)$
(resp., $O\left(\epsilon_{n}^{2}\right)$) when the leading term is
$O\left(1\right)$ (resp., $O\left(\epsilon_{n}\right)$). The calculations
give the same results for $\psi_{n}$ as for $\psi'_{n}$. Since we
are looking for consistent signal profiles $\theta_{n}$ and $\theta'_{n}$
such that $\theta_{n}\rightarrow_{n\rightarrow\infty}\theta\equiv0$
and $\theta'_{n}\rightarrow_{n\rightarrow\infty}\theta'\equiv0$,
we assume that $\left(\theta_{n}\right)_{s_{i}}\left(0\right)=1-O\left(\epsilon_{n}\right)$
for each $s_{i},s_{j}\in\left\{ s^{1},s^{2}\right\} $ in $\psi_{n}$
and assume that $\left(\theta'_{n}\right)_{\emph{\ensuremath{s^{q}}}}\left(0\right)=1-O\left(\epsilon_{n}\right)$
in $\psi'_{n}$. 

We begin by confirming that indeed there exist consistent signal profiles
$\theta_{n}$ and $\theta'_{n}$ in which the normal agents almost
always cooperate (the argument also implies that the steady states
$\psi{}_{n}$ and $\psi'{}_{n}$ satisfy the robustness refinement
defined in Appendix \ref{subsec:Robustness}). Consider a perturbed
signal profile $\theta\in O_{\left(S_{n}^{N}\cup S^{C}\right)}$.
Recall (Appendix \ref{subsec:Robustness}) that $\alpha_{\sigma_{n}}\left(\theta\right)\left(d\right)$
is the ($\sigma_{n}$-weighted) average of the distributions of actions
that induce signals distributed according to the signal profile $\theta$
for the normal agents, i.e., 

\[
\alpha_{\sigma_{n}}\left(\theta\right)\left(d\right)=q_{n}\cdot\alpha\left(\theta{}_{s^{1}}\right)\left(d\right)+\left(1-q_{n}\right)\cdot\alpha\left(\theta_{s^{2}}\right)\left(d\right)\,\,\,\,\,\,(\alpha_{\sigma_{n}}\left(\theta\right)\left(d\right)=\alpha\left(\theta_{s^{q}}\right)\left(d\right)).
\]

The (possibly inconsistent) ``old'' perturbed signal profile $\theta$
and the strategy distribution of the incumbents jointly induce a ``new''
signal profile $f_{\left(1-\epsilon_{n}\right)\cdot\sigma_{n}+\epsilon_{n}\cdot\lambda}\left(\theta\right)$
(where the dynamic mapping between states, $f$, is as defined in
Appendix \ref{subsec:Dynamic-Mapping-Between}) . The average defection
probability of a normal agent in this ``new'' signal profile is
bounded by the following inequality: 
\begin{equation}
\alpha_{\sigma_{n}}\left(f_{\left(1-\epsilon_{n}\right)\cdot\sigma_{n}+\epsilon_{n}\cdot\lambda}\left(\theta\right)\right)\left(d\right)<\left(1-\epsilon_{n}\right)\cdot\left(q_{n}\cdot k\cdot\alpha_{\sigma_{n}}\left(\theta\right)\left(d\right)+\left(\begin{array}{c}
k\\
2
\end{array}\right)\cdot\left(\alpha_{\sigma_{n}}\left(\theta\right)\left(d\right)\right)^{2}\right)+\epsilon_{n}.\label{eq:eta-bound}
\end{equation}
This is so because a normal agent, when being matched with a normal
partner (which happens with a probability of $\left(1-\epsilon_{n}\right)$)
defects with an average probability of $q_{n}$ when she observes
a single defection (which happens with a probability strictly less
than $k\cdot\alpha_{\sigma_{n}}\left(\theta\right)$ ), and defects
for sure when she observes at least two defections (which happens
with a probability strictly less than $\left(\begin{array}{c}
k\\
2
\end{array}\right)\cdot\left(\alpha_{\sigma_{n}}\left(\theta\right)\right)^{2}$). Consider the parabolic equation, which is based on substituting
\[
x=\alpha_{\sigma_{n}}\left(\theta\right)\left(d\right)=\alpha_{\sigma_{n}}\left(f_{\left(1-\epsilon_{n}\right)\cdot\sigma_{n}+\epsilon_{n}\cdot\lambda}\left(\theta\right)\right)\left(d\right)
\]
 in (\ref{eq:eta-bound}), and changing the inequality into an equality:
\[
x=\left(1-\epsilon_{n}\right)\cdot\left(q_{n}\cdot k\cdot x+\left(\begin{array}{c}
k\\
2
\end{array}\right)\cdot x^{2}\right)+\epsilon_{n}\,\Leftrightarrow\,0=\left(\begin{array}{c}
k\\
2
\end{array}\right)\cdot x^{2}-\left(1-\left(1-\epsilon_{n}\right)\cdot q_{n}\cdot k\right)\cdot x+\epsilon_{n}.
\]
Recall that a parabolic equation $A\cdot x^{2}-B\cdot x+C=0$ with
$A,B,C>0$ and $C<<A,B$ has two positive solutions, the smaller of
which is 
\begin{align*}
x_{1} & =\frac{B-\sqrt{B^{2}-4\cdot A\cdot C}}{2\cdot A}\approx\frac{B-\sqrt{B^{2}-2\cdot B\cdot\frac{2\cdot A\cdot C}{B}+\left(\frac{2\cdot A\cdot C}{B}\right)^{2}}}{2\cdot A}=\\
 & \frac{B-\left(B-\frac{2\cdot A\cdot C}{B}\right)}{2\cdot A}=\frac{\frac{2\cdot A\cdot C}{B}}{2\cdot A}=\frac{C}{B}=\frac{\epsilon_{n}}{1-\left(1-\epsilon_{n}\right)\cdot q_{n}\cdot k}=\kappa_{n}\cdot\epsilon_{n},
\end{align*}
where the penultimate equality is derived by substituting $C=\epsilon_{n}$
and $B=1-\left(1-\epsilon_{n}\right)\cdot q_{n}\cdot k$, and the
last equality is derived by defining $\kappa_{n}=\frac{1}{1-\left(1-\epsilon_{n}\right)\cdot q_{n}\cdot k}$.
Let $\kappa=sup_{n}\kappa_{n}<\infty$. The upper bound $\kappa$
is finite due to the fact that $q_{n}\rightarrow q$, $k\cdot q<\frac{1}{l+1}$
and $\epsilon_{n}\rightarrow\epsilon$. The definition of $x_{1}=\kappa_{n}\cdot\epsilon_{n}$
implies that 
\[
\alpha_{\sigma_{n}}\left(\theta\right)\left(d\right)\leq\kappa_{n}\cdot\epsilon_{n}\,\,\Rightarrow\,\,\alpha_{\sigma_{n}}\left(f_{\left(1-\epsilon_{n}\right)\cdot\sigma_{n}+\epsilon_{n}\cdot\lambda}\left(\theta\right)\right)\left(d\right)<\kappa_{n}\cdot\epsilon_{n},
\]
which immediately implies the following inequality (which implies
the robustness property of the steady state as defined in Appendix
\ref{subsec:Robustness}):
\[
\alpha_{\sigma_{n}}\left(\theta\right)\left(c\right)\geq1-\kappa_{n}\cdot\epsilon_{n}\,\,\Rightarrow\,\,\alpha_{\sigma_{n}}\left(f_{\left(1-\epsilon_{n}\right)\cdot\sigma_{n}+\epsilon_{n}\cdot\lambda}\left(\theta\right)\right)\left(c\right)>1-\kappa_{n}\cdot\epsilon_{n}.
\]

Let $O_{\left(\left\{ s^{1},s^{2}\right\} \cup S_{C},x_{1}\right)}$
($O_{\left(s_{q_{n}}\cup S_{C},x_{1}\right)}$) be the set of signal
profiles $\theta$ defined over $\left\{ s^{1},s^{2}\right\} \cup S_{C}$
($s_{q_{n}}\cup S_{C}$) and satisfying $\alpha_{\sigma_{n}}\left(\theta\right)\left(d\right)\leq x_{1}$.
Observe that $O_{\left(\left\{ s^{1},s^{2}\right\} \cup S_{C},x_{1}\right)}(O_{\left(s_{q_{n}}\cup S_{C},x_{1}\right)})$
is a convex and compact subset of a Euclidean space, and that the
mapping $f_{\left(1-\epsilon_{n}\right)\cdot\sigma_{n}+\epsilon_{n}\cdot\lambda}\left(\theta\right)$
is continuous. Brouwer's fixed-point theorem implies that the mapping
$f_{\left(1-\epsilon_{n}\right)\cdot\sigma_{n}+\epsilon_{n}\cdot\lambda}\left(\theta\right)$
has a fixed point $\theta_{n}$ ($\theta'_{n}$) satisfying $\alpha_{\sigma_{n}}\left(\theta_{n}\right)\left(d\right)<x_{1}=O\left(\epsilon_{n}\right)$
($\alpha_{\sigma_{n}}\left(\theta'_{n}\right)\left(d\right)\leq x_{1}=O\left(\epsilon_{n}\right)$),
which is a consistent signal profile in which the normal agents almost
always cooperate. 

For each incumbent strategy $s$, let $Pr\left(m=1|s\right)$ ($Pr\left(m\geq2|s\right)$)
denote the probability of observing exactly one defection (at least
two defections) conditional on the partner following strategy $s$.
Let $Pr\left(m=1\right)$ and $Pr\left(m\geq2\right)$ be the corresponding
unconditional probabilities. 

The assumption that $\theta_{n}\rightarrow_{n\rightarrow\infty}\theta\equiv0$
and $\theta'_{n}\rightarrow_{n\rightarrow\infty}\theta'\equiv0$ implies
that agents are very likely to observe the signal $m=0$ (i.e., zero
defections) when being matched with a random partner. Formally: 
\[
Pr\left(m=0\right)=\left(1-O\left(\epsilon_{n}\right)\right)^{k}=1-O\left(\epsilon_{n}\right).
\]
The conditional probabilities of observing $m=0$, $m=1$, and $m\geq2$,
for all $s\in S_{n}^{N}\cup S^{C}$, are
\[
Pr\left(m=0|s\right)=\left(s_{0}\left(c\right)\right)^{k}+O\left(\epsilon_{n}\right),
\]
\[
Pr\left(m=1|s\right)=k\cdot s_{0}\left(d\right)\cdot\left(s_{0}\left(c\right)\right)^{k-1}+O\left(\epsilon_{n}\right),
\]
\[
Pr\left(m\geq2|s\right)=1-Pr\left(m=0|s\right)-Pr\left(m=1|s\right).
\]

Let $S_{n}^{N}=\left\{ s^{1},s^{2}\right\} $ in $\psi_{n}$ and $S_{n}^{N}=\left\{ s^{q_{n}}\right\} $
in $\psi'_{n}$. Given signal $m$, let $Pr\left(m|S_{n}^{N}\right)$
denote the probability of observing signal $m$, conditional on the
partner following a normal strategy. Specifically, in the heterogeneous
state $\psi_{n}$ (with two normal strategies), this conditional probability
is given by 
\[
Pr\left(m|S_{n}^{N}\right)=q\cdot Pr\left(m|s^{1}\right)+\left(1-q\right)\cdot Pr\left(m|s^{2}\right).
\]
Furthermore, it follows (from the expressions for $Pr\left(m=0|s\right)$,
$Pr\left(m=1|s\right)$, and $Pr\left(m\geq2|s\right)$) that 
\[
Pr\left(m=0|S_{n}^{N}\right)=1-O\left(\epsilon_{n}\right),\,\,\,\,\,\,Pr\left(m=1|S_{n}^{N}\right)=O\left(\epsilon_{n}\right)\,\,\,\,\,Pr\left(m\geq2|S_{n}^{N}\right)=O\left(\epsilon_{n}^{2}\right).
\]
Next we calculate the probability that a normal agent (Alice) generates
a signal that contains a single defection. This happens with probability
one if exactly one of the $k$ interactions sampled from Alice's past
was such that Alice observed her partner in that interaction to have
defected at least twice (which implies that her partner is most likely
to have been a committed agent). This happens with probability $q_{n}$
if exactly one of the $k$ interactions sampled from Alice's past
was such that Alice observed her partner (who might have been either
a committed or a normal agent) to have defected exactly once: 
\begin{eqnarray*}
Pr\left(m=1|S_{n}^{N}\right) & = & k\cdot\sum_{s\in S^{C}}\epsilon_{n}\cdot\lambda\left(s\right)\cdot\left(Pr\left(m\geq2|s\right)+q_{n}\cdot Pr\left(m=1|s\right)\right).\\
 &  & +k\cdot\left(1-\epsilon_{n}\right)\cdot\left[q_{n}\cdot Pr\left(m=1|S_{n}^{N}\right)+Pr\left(m\geq2|S_{n}^{N}\right)\right]\\
 &  & +O\left(\epsilon_{n}^{2}\right).
\end{eqnarray*}
The final term $O\left(\epsilon_{n}^{2}\right)$ comes from the very
small probability of the partner observing a normal agent to defect
twice. Since $Pr\left(m=1|S_{n}^{N}\right)=O\left(\epsilon_{n}\right)$
and $Pr\left(m\geq2|S_{n}^{N}\right)=O\left(\epsilon_{n}^{2}\right)$,
this can be simplified (neglecting $O\left(\epsilon_{n}^{2}\right)$)
and rearranged to obtain 
\begin{equation}
Pr\left(m=1|S_{n}^{N}\right)=\frac{k\cdot\epsilon_{n}\cdot\sum_{s\in\mathcal{\mathit{S^{C}}}}\lambda\left(s\right)\cdot\left(Pr\left(m\geq2|s\right)+q_{n}\cdot Pr\left(m=1|s\right)\right)}{1-k\cdot q},\label{eq:1-kq}
\end{equation}
which is well defined and $O\left(\epsilon_{n}\right)$ as long as
$q_{n}<\nicefrac{1}{k}$. We can now calculate the unconditional probabilities:
\[
Pr\left(m=1\right)=\epsilon_{n}\cdot\sum_{s\in S^{C}}\lambda\left(s\right)\cdot Pr\left(m=1|s\right)+Pr\left(m=1|S_{n}^{N}\right)+O\left(\epsilon_{n}^{2}\right),
\]
\begin{eqnarray*}
Pr\left(m\geq2\right) & = & \epsilon_{n}\cdot\sum_{s\in S^{C}}\lambda\left(s\right)\cdot Pr\left(m\geq2|s\right)+\left(1-\epsilon_{n}\right)\cdot Pr\left(m\geq2|S_{n}^{N}\right)\\
 & = & \epsilon_{n}\cdot\sum_{s\in S^{C}}\lambda\left(s\right)\cdot Pr\left(m\geq2|s\right)+O\left(\epsilon_{n}^{2}\right).
\end{eqnarray*}
By using Bayes' rule we can calculate the conditional probability
that the partner uses strategy $s\in S^{C}$ as a function of the
observed signal:
\begin{eqnarray*}
Pr\left(s|m=0\right) & = & \frac{\epsilon_{n}\cdot\lambda\left(s\right)\cdot Pr\left(m=0|s\right)}{Pr\left(m=0\right)},\\
Pr\left(s|m=1\right) & = & \frac{\epsilon_{n}\cdot\lambda\left(s\right)\cdot Pr\left(m=1|s\right)}{Pr\left(m=1\right)},\\
Pr\left(s|m\geq2\right) & = & \frac{\epsilon_{n}\cdot\lambda\left(s\right)\cdot Pr\left(m\geq2|s\right)}{Pr\left(m\geq2\right)}.
\end{eqnarray*}
Note that 
\begin{eqnarray*}
\sum_{s\in S^{C}}Pr\left(s|m=0\right) & = & \frac{\epsilon_{n}\cdot\sum_{s\in S^{C}}\lambda\left(s\right)\cdot\left(s_{0}\left(c\right)\right)^{k}}{1-O\left(\epsilon_{n}\right)}=O\left(\epsilon_{n}\right).
\end{eqnarray*}
From Eq. (\ref{eq:1-kq}) we have 
\begin{eqnarray*}
\sum_{s\in S_{n}^{N}}\sigma\left(s\right)\cdot Pr\left(m=1|s\right) & = & Pr\left(m=1|S_{n}^{N}\right)=\frac{k\cdot\epsilon_{n}\cdot\sum_{s\in\mathcal{\mathit{S^{C}}}}\lambda\left(s\right)\cdot\left(Pr\left(m\geq2|s\right)+q\cdot Pr\left(m=1|s\right)\right)}{1-k\cdot q_{n}}.
\end{eqnarray*}
We use this to obtain, by Bayes' rule, 
\begin{eqnarray*}
\sum_{s\in S^{C}}Pr\left(s|m=1\right) & = & \frac{\epsilon_{n}\cdot\sum_{s\in S^{C}}\lambda\left(s\right)\cdot Pr\left(m=1|s\right)}{\epsilon_{n}\cdot\sum_{s\in S^{C}}\lambda\left(s\right)\cdot Pr\left(m=1|s\right)+\frac{k\cdot\epsilon_{n}\cdot\sum_{s\in\mathcal{\mathit{S^{C}}}}\lambda\left(s\right)\cdot\left(Pr\left(m\geq2|s\right)+q_{n}\cdot Pr\left(m=1|s\right)\right)}{1-k\cdot q_{n}}+O\left(\epsilon_{n}^{2}\right)}\\
 & = & \frac{\sum_{s\in S^{C}}\lambda\left(s\right)\cdot Pr\left(m=1|s\right)}{\sum_{s\in S^{C}}\lambda\left(s\right)\cdot Pr\left(m=1|s\right)+\frac{k\cdot\sum_{s\in\mathcal{\mathit{S^{C}}}}\lambda\left(s\right)\cdot\left(Pr\left(m\geq2|s\right)+q_{n}\cdot Pr\left(m=1|s\right)\right)}{1-k\cdot q_{n}}+O\left(\epsilon_{n}^{2}\right).}
\end{eqnarray*}
Note that the terms $\sum_{s\in S^{C}}\lambda\left(s\right)\cdot Pr\left(m=1|s\right)$
and $\sum_{s\in S^{C}}\lambda\left(s\right)\cdot\left(Pr\left(m\geq2|s\right)\right)$
do not vanish as $\epsilon_{n}\rightarrow0$. Moreover, we will see
below (Eqs. (\ref{eq:mu-calculation}) and (\ref{eq:mu-q-equation}))
that this implies that $q_{n}$ also does not vanish as $\epsilon_{n}\rightarrow0$.
Together, these observations imply that there are numbers $a,b\in\left(0,1\right)$
such that, for all $n$, it is the case that 
\begin{equation}
0<a<\sum_{s\in S^{C}}Pr\left(s|m=1\right)<b<1.\label{eq:a<Pr(Sc|m=00003D1)<b}
\end{equation}
Furthermore

\[
\sum_{s\in S^{C}}Pr\left(s|m\geq2\right)=\frac{1}{1+\frac{\sum_{s\in S_{n}^{N}}\sigma\left(s\right)\cdot Pr\left(m\geq2|s\right)}{\epsilon_{n}\cdot\sum_{s\in S^{C}}\lambda\left(s\right)\cdot Pr\left(m\geq2|s\right)}}=\frac{1}{1+\frac{O\left(\epsilon_{n}^{2}\right)}{O\left(\epsilon_{n}\right)}}=\frac{1}{1+O\left(\epsilon_{n}\right)}.
\]
Hence for a sufficiently large $n$, the more defections there are
in the observed signal, the higher is the conditional probability
that the partner is committed: 
\[
\sum_{s\in S^{C}}Pr\left(s|m=0\right)<\sum_{s\in S^{C}}Pr\left(s|m=1\right)<\sum_{s\in S^{C}}Pr\left(s|m\geq2\right).
\]

Let $Pr\left(S_{n}^{N}|m=1\right)=\sum_{s\in S_{n}^{N}}Pr\left(s|m=1\right)$
denote the conditional probability that the partner follows a normal
strategy conditional on the agent observing signal $m=1$. Eq. (\ref{eq:a<Pr(Sc|m=00003D1)<b})
implies that there are numbers $a',b'\in\left(0,1\right)$ such that,
for all $n$, it is the case that $0<a'<Pr\left(S_{n}^{N}|m=1\right)<b'<1$
(because $Pr\left(S_{n}^{N}|m=1\right)+\sum_{s\in S^{C}}Pr\left(s|m=1\right)=1$).

Let $\mu_{n}$ be the probability that a random partner defects conditional
on a player observing signal $m=1$ about the partner, and conditional
on the partner observing the signal $m=0$: 
\begin{equation}
\mu_{n}=\sum_{s\in S^{C}}Pr\left(s|m=1\right)\cdot s_{0}\left(d\right)+O\left(\epsilon_{n}\right).\label{eq:mu-calculation}
\end{equation}

Eq. (\ref{eq:mu-calculation}) defines $\mu_{n}$ as a strictly decreasing
function of $q_{n}.$ To see this, note that the term $s_{0}\left(d\right)$
does not depend on $q_{n}$, and in $Pr\left(s|m=1\right)=\frac{\epsilon_{n}\cdot\lambda\left(s\right)\cdot Pr\left(m=1|s\right)}{Pr\left(m=1\right)}$
the numerator does not depend on $q_{n}$, whereas the term $Pr\left(m=1\right)$
is increasing in $q_{n}$.

Next we calculate the value of $q_{n}$ that balances the payoff of
both actions after a player observes a single defection (neglecting
terms of $O\left(\epsilon_{n}^{2}\right)$). The LHS of the following
equation represents the player's direct gain from defecting when she
observes a single defection, while the RHS represents the player's
indirect loss induced by partners who defect as a result of observing
these defections:
\begin{equation}
\textrm{Pr}\left(m=1\right)\cdot\left(\mu_{n}\cdot l+\left(1-\mu_{n}\right)\cdot g\right)=\textrm{Pr}\left(m=1\right)\cdot\left(k\cdot q\cdot\left(l+1\right)+O\left(\epsilon_{n}\right)\right)\Rightarrow q_{n}=\frac{\mu_{n}\cdot l+\left(1-\mu_{n}\right)\cdot g}{k\cdot\left(l+1\right)}+O\left(\epsilon_{n}\right).\label{eq:mu-q-equation}
\end{equation}
Note that Eq. (\ref{eq:mu-q-equation}) defines $q_{n}$ as a strictly
increasing function of $\mu_{n}$. This implies that there are unique
values of $q_{n}$ and $\mu_{n}$ satisfying $\frac{g}{k\cdot\left(l+1\right)}<q_{n}<\frac{l}{k\cdot\left(l+1\right)}<\frac{1}{k}$
and $0<\mu_{n}<1$, which jointly solve Eqs. (\ref{eq:mu-calculation})
and (\ref{eq:mu-q-equation}). This pair of parameters balances the
payoff of both actions when a player observes a signal $m=1$. Note
that sequences of $\left(q_{n}\right)_{n}\rightarrow q$ and $\left(\mu_{n}\right)_{n}\rightarrow\mu$
converge to the values that solve the above equations when ignoring
the terms that are $O\left(\epsilon_{n}\right)$. 

Observe that defection is the unique best reply when a player observes
at least two defections. The direct gain from defecting is larger
than the LHS of Eq. (\ref{eq:mu-q-equation}), and the indirect loss
is still given by the RHS of Eq. (\ref{eq:mu-q-equation}). The reason
that the direct gain is larger is that normal partners almost never
defect twice or more (the probability is $O\left(\epsilon_{n}^{2}\right)$),
and thus the partner is most likely committed and will defect  with
a probability that is higher than $\mu_{n}$ (since $\mu_{n}$ also
gives weight to normal strategies that are most likely to cooperate).
More generally, note that given that the normal agents almost always
cooperate, the average probability of defection of each agent who
follows strategy $s$ is $s_{0}\left(d\right)+O\left(\epsilon_{n}\right)$.
This implies that for a sufficient small $\epsilon_{n}$, the higher
$m$ is, the higher the partner's value $s_{0}\left(d\right)$ is
likely to be. Hence the higher $m$ is, the higher the probability
is that the partner will defect against a normal agent. Thus the direct
gain from defection is increasing in the signal $m$ that the normal
agent observes about her partner. (A formal detailed proof of this
statement is available upon request.)

Next, consider a deviator (Alice) who defects with a probability of
$\alpha>0$ after she observes $m=0$. In what follows we calculate
Alice's expected payoff as a function of $\alpha$ in any post-deviation
stable state, neglecting terms of $O\left(\epsilon_{n}\right)$ throughout
the calculation. Note that Alice's partner observes signal $m=1$
with a probability of $k\cdot\alpha\cdot\left(1-\alpha\right)^{k-1}$,
and observes signal $m\ge2$ with a probability of $1-\left(1-\alpha\right)^{k}-k\cdot\alpha\cdot\left(1-\alpha\right)^{k-1}$.
This implies that the mean probability that a normal partner defects
against a mutant is 
\[
h\left(\alpha\right):=\left(k\cdot\alpha\cdot\left(1-\alpha\right)^{k-1}\right)\cdot q+1-\left(1-\alpha\right)^{k}-k\cdot\alpha\cdot\left(1-\alpha\right)^{k-1}=1-\left(1-\alpha\right)^{k-1}\left(1-\alpha+k\cdot\alpha\cdot\left(1-q\right)\right).
\]
Thus the expected payoff of the mutant is
\begin{eqnarray*}
\pi\left(\alpha\right): & = & \left(1-h\left(\alpha\right)\right)\cdot\alpha\cdot\left(1+g\right)+\left(1-h\left(\alpha\right)\right)\cdot\left(1-\alpha\right)-h\left(\alpha\right)\cdot\left(1-\alpha\right)\cdot l\\
 & = & 1+\alpha\cdot g-h\left(\alpha\right)\cdot\left(1+\left(1-\alpha\right)\cdot l+\alpha\cdot g\right).
\end{eqnarray*}
Direct numeric calculation of $\frac{\partial\pi\left(\alpha\right)}{\partial\alpha}$
reveals that $\pi\left(\alpha\right)$ is strictly decreasing in $\alpha$
for each $q>\frac{g}{k\cdot\left(l+1\right)}$. Thus any deviator
with $\alpha>0$ earns strictly less than the incumbents (who have
$\alpha=0$). 

We have now shown that the best reply is $c$ after observing $m=0$
and $d$ after observing $m\geq2$. After observing $m=1$ both $c$
and $d$ are best replies provided that $q$ has the required value.
That is, we know what the aggregate probability of defection after
a player observes $m=1$ has to be in equilibrium. However, we do
not know whether mixing will occur at the individual level. We now
turn to this question. 

Let $\chi$ be the probability that a random partner defects conditional
on both the agent and the partner observing a single defection (in
the limit as $\epsilon_{n}\rightarrow0$): 
\[
\chi=\lim_{n\rightarrow\infty}\left(\sum_{s\in S^{C}}Pr\left(s|m=1\right)\cdot s^{1}\left(d\right)+Pr\left(S_{n}^{N}|m=1\right)\cdot q\right).
\]
We conclude by showing that if $\chi>\mu$ ($\chi<\mu$), then $\psi^{*}$
($\psi'^{*}$) is a perfect equilibrium. This is so because if $\chi>\mu$
($\chi<\mu$), then conditional on a normal agent observing a single
defection, the partner is more (less) likely to defect the higher
the probability with which the agent defects when she observes a single
defection (because then it is more likely that the partner observes
a single defection rather than only cooperation). This implies that
when a player observes a single defection, the higher the agent's
own defection probability is, the more profitable defection is (recall
that the higher the probability is of the defection of the partner,
the higher the direct gain from defection, whereas the indirect loss
is independent of the partner's behavior). That is, an agent's payoff
is a strictly convex (concave) function of the agent's defection probability
conditional on him observing a single defection. This implies that
a deviator who mixes on the individual level (i.e., defects with probabilities
different from $q$) is outperformed when $\chi>\mu$ ($\chi<\mu$)). 

Note that the normal agents are more likely to defect against a partner
who is more likely to defect when she observes a single defection.
This implies that when focusing only on normal partners, the induced
level of $\chi$ is larger than the induced level of $\mu$. It is
only the committed agents who may induce the opposite inequality (namely,
$\chi<\mu$). Thus, if in the limit as $\epsilon\rightarrow0$ the
equality $\chi=\mu$ holds, then it must be that for any positive
small share of committed agents $\epsilon_{n}$, it is the case that
$\chi_{n}<\mu_{n}$, which implies by the argument above that the
state $\psi'_{n}$ is a Nash equilibrium.
\begin{rem}
\label{rem:evolutionarily-stability}The above argument shows that
when $\chi<\mu$, each state $\psi'_{n}$ is a \emph{strictly} perfect
equilibrium (any deviator who follows a strategy different from $s^{q_{n}}$
obtains a strictly lower payoff). In the opposite case of $\chi>\mu$,
one can show that an agent who follows strategy $s_{i}$ achieves
a higher payoff than an agent who follows $s_{-i}$, conditional on
the partner following $s_{i}$. This implies that the mixed equilibrium
between the strategies of $s^{1}$ and $x_{2}$ is Hawk-Dove-like,
and that the state $\psi_{n}$ is evolutionarily stable (see Appendix
\ref{sec:Evolutionary-Stability}). This shows that cooperation is
robust also to joint deviation of a small group of agents, and that
it satisfies the refinement of evolutionary stability defined in Appendix
\ref{sec:Evolutionary-Stability} (namely, cooperation is a strictly
perfect evolutionarily stable action). 
\end{rem}

\subsection{Proof of Proposition \ref{prop:Let--be-single-action-complex-result}
(Observing a Single Action)\label{subsec:Proof-of-Proposition}}

Arguments and pieces of notation that are analogous to the ones used
in the proof of Theorem \ref{thm:cooperation-defensive-PDs} are presented
in brief or skipped. Let $s^{c}\equiv c$ be the strategy that always
cooperates. The same arguments as in Theorem \ref{thm:cooperation-defensive-PDs}
show that the only possible candidates for perfect equilibria that
support full cooperation are steady states of the form $\psi=\left(\left\{ s^{1},s^{c}\right\} ,\left(q,1-q\right),\theta\equiv0\right)$
or $\psi'=\left(\left\{ \emph{\ensuremath{s^{q}}}\right\} ,\theta'\equiv0\right)$. 

Consider a perturbed environment $\left(\left(G_{PD},k\right),\left(S^{C},\lambda\right),\epsilon\right)$
where $\epsilon>0$ is sufficiently small. In what follows: (1) for
the case of $g\leq\beta_{C,\lambda}$ we characterize a Nash equilibrium
of this perturbed environment that is within a distance of $O\left(\epsilon\right)$
from either $\psi$ or $\psi'$, and (2) we show that no such Nash
equilibrium exists for the case of $g>\beta_{C,\lambda}$. 

Consider a steady state that is within a distance of $O\left(\epsilon\right)$
from either $\psi$ or $\psi'$. The fact that the behavior in the
steady state is close to always cooperating (i.e., to $\theta\equiv0$)
implies that the probability of observing $m=1$ conditional on the
partner following a commitment strategy $s\in S^{C}$ is: 
\[
Pr\left(m=1|s\right)=s_{0}\left(d\right)+O\left(\epsilon\right).
\]
 Similarly, the probability of observing $m=1$ conditional on the
partner being normal is

\[
Pr\left(m=1|S_{n}^{N}\right)=q\cdot\left(\epsilon\cdot\sum_{s\in S^{C}}\lambda\left(s\right)\cdot s_{0}\left(d\right)+\left(1-\epsilon\right)\cdot Pr\left(m=1|S_{n}^{N}\right)\right)+O\left(\epsilon^{2}\right)\Rightarrow
\]

\[
Pr\left(m=1|S_{n}^{N}\right)=\frac{\epsilon\cdot q\cdot\sum_{s\in S^{C}}\lambda\left(s\right)\cdot s_{0}\left(d\right)}{1-q}+O\left(\epsilon^{2}\right).
\]
 By using Bayes' rule we can calculate the probability that the partner
uses strategy $s\in S^{C}$ conditional on observing $m=1$:
\[
Pr\left(s|m=1\right)=\frac{\epsilon_{n}\cdot\lambda\left(s\right)\cdot Pr\left(m=1|s\right)}{Pr\left(m=1\right)}=\frac{\epsilon\cdot\lambda\left(s\right)\cdot s_{0}\left(d\right)}{\epsilon\cdot\left(\sum_{s\in S^{C}}\lambda\left(s\right)\cdot s_{0}\left(d\right)+\frac{q\cdot\sum_{s\in S^{C}}\lambda\left(s\right)\cdot s_{0}\left(d\right)}{1-q}\right)}+O\left(\epsilon\right)\Rightarrow
\]

\[
Pr\left(s|m=1\right)=\frac{\left(1-q\right)\cdot\lambda\left(s\right)\cdot s_{0}\left(d\right)}{\sum_{s\in S^{C}}\lambda\left(s\right)\cdot s_{0}\left(d\right)}+O\left(\epsilon\right).
\]
Let $\mu$ be the probability that a random partner defects conditional
on an agent observing signal $m=1$ about the partner, and conditional
on the partner observing the signal $m=0$ about the agent. (Note
that only committed partners defect with positive probability when
observing $m=0$.) 
\begin{equation}
\mu=\sum_{s\in S^{C}}Pr\left(s|m=1\right)\cdot s_{0}\left(d\right)+O\left(\epsilon\right)=\left(1-q\right)\cdot\frac{\sum_{s\in S^{C}}\lambda\left(s\right)\cdot\left(s_{0}\left(d\right)\right)^{2}}{\sum_{s\in S^{C}}\lambda\left(s\right)\cdot s_{0}\left(d\right)}+O\left(\epsilon\right)=\left(1-q\right)\cdot\beta_{\left(S^{C},\lambda\right)}+O\left(\epsilon\right).\label{eq:mu-calculation-1}
\end{equation}
Next we calculate the value of $q$ that balances the payoff of both
actions after a player observes a single defection. The LHS of the
following equation represents the player's direct gain from defecting
when she observes a single defection, while the RHS represents the
player's indirect loss induced by future partners who defect as a
result of observing these defections:
\begin{eqnarray}
\textrm{Pr}\left(m=1\right)\cdot\left(\mu\cdot l+\left(1-\mu\right)\cdot g\right) & +O\left(\epsilon\right)= & \textrm{Pr}\left(m=1\right)\cdot\left(q\cdot\left(l+1\right)+O\left(\epsilon\right)\right)\Rightarrow\label{eq:mu-q-equation-1}\\
 &  & q=\frac{\mu\cdot l+\left(1-\mu\right)\cdot g}{l+1}+O\left(\epsilon\right)=\frac{g+\mu\cdot\left(l-g\right)}{l+1}+O\left(\epsilon\right).\label{eq:q-as-function-of-mu}
\end{eqnarray}
Substituting (\ref{eq:mu-calculation-1}) in (\ref{eq:q-as-function-of-mu})
yields
\[
q=\frac{g+\left(1-q\right)\cdot\left(l-g\right)\cdot\beta_{\left(S^{C},\lambda\right)}}{l+1}+O\left(\epsilon\right)\Rightarrow q\cdot\left(l+1\right)=g+\left(1-q\right)\cdot\left(l-g\right)\cdot\beta_{\left(S^{C},\lambda\right)}+O\left(\epsilon\right)
\]
\[
\Rightarrow q=\frac{g+\left(l-g\right)\cdot\beta_{\left(S^{C},\lambda\right)}}{l+1+\left(l-g\right)\cdot\beta_{\left(S^{C},\lambda\right)}}+O\left(\epsilon\right).
\]

Consider a deviator (Alice) who always defects. Normal partners of
Alice cooperate with a probability of $1-q$. This implies that Alice
gets an expected payoff of $\left(1+g\right)\cdot\left(1-q\right)$,
while the normal agents each get a payoff of $1+O\left(\epsilon\right)$.
Alice is outperformed iff (neglecting terms of $O\left(\epsilon\right)$):
\[
\left(1+g\right)\cdot\left(1-q\right)\leq1\Leftrightarrow q\geq\frac{g}{1+g}\Leftrightarrow\frac{g+\left(l-g\right)\cdot\beta_{\left(S^{C},\lambda\right)}}{l+1+\left(l-g\right)\cdot\beta_{\left(S^{C},\lambda\right)}}\geq\frac{g}{1+g}
\]
\[
\Leftrightarrow\left(1+g\right)\cdot\left(g+\left(l-g\right)\cdot\beta_{\left(S^{C},\lambda\right)}\right)\geq g\cdot\left(l+1+\left(l-g\right)\cdot\beta_{\left(S^{C},\lambda\right)}\right)
\]
\[
\Leftrightarrow g^{2}+\left(l-g\right)\cdot\beta_{\left(S^{C},\lambda\right)}+\geq g\cdot l\,\Leftrightarrow\,g\cdot\left(l-g\right)\leq\left(l-g\right)\cdot\beta_{\left(S^{C},\lambda\right)}\,\Leftrightarrow\,g\leq\beta_{\left(S^{C},\lambda\right)}.
\]
Thus, the steady state can be a Nash equilibrium only if $g\leq\beta_{\left(S^{C},\lambda\right)}$.
It is relatively straightforward to show that if $g\leq\beta_{\left(S^{C},\lambda\right)}$,
then a deviator who defects with probability $\alpha$ when observing
$m=0$ is outperformed. The remaining steps of the proof are as in
the proof of part 2 of Theorem \ref{thm:cooperation-defensive-PDs},
and are omitted for brevity.

\subsection{Proof of Theorem \ref{thm:stable-cooperation-observing-conflicts}
(Observing Conflicts) \label{subsec:Proof-of-Theorem-stable-cooperation-action-profile}}

The proof of part 1(a) is analogous to Theorem \ref{thm:cooperation-defensive-PDs}
and is omitted for brevity. We now prove a stronger version of part
1(b), namely, that cooperation is a strictly perfect equilibrium action
in any mild game (i.e., it is a perfect equilibrium action with respect
to all distributions of commitment strategies, as defined in Appendix
\ref{subsec:Strictly-Perfect}). Arguments and notations that are
analogous to the proof of Theorem \ref{thm:cooperation-defensive-PDs}
are presented in brief. Let $s^{1}$ ($s^{2}$) be the strategy that
instructs a player to defect if and only if she receives a signal
containing one or more (two or more) conflicts. Consider the following
candidate for a perfect equilibrium $\left(\left\{ s^{1},s^{2}\right\} ,\left(q,1-q\right),\theta^{*}=0\right)$.
Here, the probability $q$ will be determined such that both actions
are best replies when an agent observes a single conflict. 

Let $\left(\mathcal{S^{C}},\lambda\right)$ be a distribution of commitments.
We show that there exists a converging sequence of levels $\epsilon_{n}\rightarrow0$,
and converging sequences of steady states $\left(\left\{ s^{1},s^{2}\right\} ,\left(q_{n},1-q_{n}\right),\theta_{n}\right)\rightarrow\left(\left\{ s^{1},s^{2}\right\} ,\left(q,1-q\right),\theta\equiv0\right)$
and $\left(\left\{ s^{q_{n}}\right\} ,\theta_{n}'\right)\rightarrow\left(\left\{ \emph{\ensuremath{s^{q}}}\right\} ,\theta'\equiv0\right)$
such that either (1) each steady state $\psi{}_{n}\equiv\left(\left\{ s^{1},s^{2}\right\} ,\sigma_{n}\equiv\left(q_{n},1-q_{n}\right),\theta_{n}\right)$
is a Nash equilibrium of $\left(\left(G,k\right),\left(S^{C},\lambda\right),\epsilon_{n}\right)$,
or (2) each steady state $\psi'_{n}\equiv\left(\left\{ s^{q_{n}}\right\} ,\theta_{n}'\right)$
is a Nash equilibrium of $\left(\left(G,k\right),\left(S^{C},\lambda\right),\epsilon_{n}\right)$. 

Fix $n\geq1$. Assume that $\epsilon_{n}$ is sufficiently small.
We calculate the probability $Pr\left(m=1|S_{n}^{N}\right)$ that
a normal agent (Alice) induces a signal $m=1$. Since we focus on
the steady states in which the incumbents defect very rarely (i.e.,
$\theta_{n}$ and $\theta'_{n}$ converge to $\theta^{*}\equiv0$),
we can assume that $Pr\left(m=1|S_{n}^{N}\right)$ is $O\left(\epsilon_{n}\right)$.
(The proof of the existence  of consistent signal profiles in which
the normal agents almost always cooperate in mild PDs is analogous
to the argument presented in the proof of Theorem \ref{thm:cooperation-defensive-PDs},
and is omitted for brevity). Alice may be involved in a conflict if
one of her $k$ partners is committed, which happens with a probability
of $O\left(\epsilon_{n}\right)$. If all of the $k$ partners are
normal, then at each interaction both Alice and her partner defect
with a probability of $Pr\left(m=1|S_{n}^{N}\right)$, which implies
that the probability of a conflict is $2\cdot Pr\left(m=1|S_{n}^{N}\right)-\left(Pr\left(m=1|S_{n}^{N}\right)^{2}\right)$.
Therefore:

\[
Pr\left(m=1|S_{n}^{N}\right)=k\cdot\left(O\left(\epsilon_{n}\right)+2\cdot q_{n}\cdot Pr\left(m=1|S_{n}^{N}\right)-O\left(\left(Pr\left(m=1|S_{n}^{N}\right)\right)^{2}\right)\right).
\]
Solving this equation, while neglecting terms that are $O\left(\epsilon_{n}^{2}\right)$
(including $Pr\left(m=1|S_{n}^{N}\right)^{2}$), yields
\begin{equation}
Pr\left(m=1|S_{n}^{N}\right)=\frac{k\cdot O\left(\epsilon_{n}\right)}{1-2\cdot k\cdot q_{n}},\label{eq:1-kq-1}
\end{equation}
which is well defined and $O\left(\epsilon_{n}\right)$ as long as
$q_{n}<\frac{1}{2\cdot k}$. Note that as $q_{n}$ approaches $\frac{1}{2\cdot k}$,
the value of $Pr\left(m=1|S_{n}^{N}\right)$ ``explodes'' (becomes
arbitrarily larger than terms that are $O\left(\epsilon_{n}\right)$).

By Bayes' rule we can calculate the conditional probability $Pr\left(s|m=1\right)$
of being matched with each strategy $s\in S^{C}$ (same calculations
as detailed in the proof of Theorem \ref{thm:cooperation-defensive-PDs}).
Note that these conditional probabilities are decreasing in $Pr\left(m=1|S_{n}^{N}\right)$,
and thus decreasing in $q_{n}$. Let $\mu_{n}$ be the probability
that a random partner defects conditional on a player observing signal
$m=1$ about the partner, and conditional on the partner observing
the signal $m=0$: 
\begin{equation}
\mu_{n}=\sum_{s\in S^{C}}Pr\left(s|m=1\right)\cdot s_{0}\left(d\right)+O\left(\epsilon_{n}\right).\label{eq:mu-calculation-2-1}
\end{equation}

Note that $\mu_{n}$ is decreasing in $q_{n}$. Moreover, as $q_{n}\nearrow\frac{1}{2\cdot k}$,
we have $\mu_{n}\left(q_{n}\right)\searrow0$, because $Pr\left(m=1|S_{n}^{N}\right)$
``explodes'' as we approach the threshold of $k\cdot q=0.5$. 

Next, we calculate the value of $q_{n}$ that balances the payoffs
of both actions when a player observes a single conflict (neglecting
terms of $O\left(\epsilon_{n}\right)$). The LHS of the following
equation represents a player's direct gain from defecting when observing
a single conflict, while the RHS represents the player's indirect
loss from defecting in this case, which is induced by normal partners
who defect as a result of observing these defections. Note that the
cost is paid only if the partner cooperated, because otherwise a future
partner would observe a conflict regardless of the agent's own action.
\begin{equation}
\textrm{Pr}\left(m=1\right)\cdot\left(\mu_{n}\cdot l+\left(1-\mu_{n}\right)\cdot g\right)=\textrm{Pr}\left(m=1\right)\cdot\left(1-\mu_{n}\right)\cdot k\cdot q\cdot\left(l+1\right)+O\left(\epsilon_{n}\right)\,\,\,\Leftrightarrow\,\,\,q_{n}=\frac{\mu_{n}\cdot l+\left(1-\mu_{n}\right)\cdot g}{\left(1-\mu\right)\cdot k\cdot\left(l+1\right)}+O\left(\epsilon_{n}\right).\label{eq:mu-q-equation-conflict}
\end{equation}
In connection with Eq. (\ref{eq:mu-q-equation-conflict}) it was noted
that $q\left(\mu\right)$ is increasing in $\mu_{n}$, and since the
game is mild we have $q_{n}\left(0\right)=\frac{g}{k\cdot\left(l+1\right)}<\frac{1}{2\cdot k}$.
This implies that there is a unique pair of values of $q_{n}\in\left(\frac{g}{k\cdot\left(l+1\right)},\frac{1}{2\cdot k}\right)$
and $\mu_{n}\in\left(0,1\right)$ that jointly solve Eqs. (\ref{eq:mu-calculation-2-1})
and (\ref{eq:mu-q-equation-conflict}). This pair of values balances
the payoff of both actions when a player observes a signal $m=1$.
Note that sequences of $\left(q_{n}\right)_{n}\rightarrow q$ and
$\left(\mu_{n}\right)_{n}\rightarrow\mu$ converge to the values that
solve the above equations when one ignores the terms that are $O\left(\epsilon_{n}\right)$.
The remaining arguments of part 1 are analogous to those in the final
part of the proof of Theorem \ref{thm:cooperation-defensive-PDs},
and are omitted for brevity.

Next, we deal with Part (2), namely, the case of an acute Prisoner's
Dilemma\emph{ }($g>0.5\cdot\left(l+1\right)$). Assume (in order to
obtain a contradiction) that the environment admits a perfect equilibrium
$\left(S^{*},\sigma^{*},\theta^{*}\equiv c\right)$. That is, there
exists a converging sequence of strictly positive commitment levels
$\epsilon_{n}\rightarrow_{n\rightarrow\infty}0$, and a converging
sequence of steady states $\left(S_{n}^{N},\sigma_{n},\theta_{n}\right)\rightarrow_{n\rightarrow\infty}\left(S^{*},\sigma^{*},\theta^{*}\right)$,
such that each state $\left(S_{n}^{N},\sigma_{n},\theta_{n}\right)$
is a Nash equilibrium of the perturbed environment $\left(\left(G,k\right),\left(S^{C},\lambda\right),\epsilon_{n}\right)$.
By the arguments of part 1 (and the arguments of part 1(a) of Theorem
\ref{thm:cooperation-defensive-PDs}), the average probability $q_{n}$
by which a normal agent defects when observing $m=1$ in the steady
state $\left(S_{n}^{N},\sigma_{n},\theta_{n}\right)$ (for a sufficiently
small $\epsilon_{n}$) should be at least equal to the minimal solution
of Eq. (\ref{eq:mu-q-equation-conflict}): $q_{n}\left(\mu_{n}=0\right)=\frac{g}{k\cdot\left(l+1\right)}+O\left(\epsilon_{n}\right)$.
However, if the game is acute, then this minimal solution is larger
than $\frac{1}{2\cdot k}$, and Eq. (\ref{eq:1-kq-1}) cannot be satisfied
by $Pr\left(m=1|S_{n}^{N}\right)<<1$, which yields a contradiction.

\subsection{Proof of Theorem \ref{thm:stable-cooperation-observing-action-profiles}
(Observing Action Profiles)}

Recall that a signal $m\in M$ consists of information about the
number of times in which each of the possible four action profiles
have been played in the sampled $k$ interactions. Let $u\left(m\right)$
be the number of sampled interactions in which the partner has been
the sole defector, and let $d\left(m\right)$ denote the number of
sampled interactions in which at least of one of the players has defected.
Let $s^{1}$ and $s^{2}$ be defined as follows: 
\[
s^{1}\left(m\right)=\begin{cases}
d & u\left(m\right)=1\,\textrm{or}\,d\left(m\right)\geq2\\
c & \textrm{otherwise}
\end{cases}\,\,\,\,\,s^{2}\left(m\right)=\begin{cases}
d & d\left(m\right)\geq2\\
c & \textrm{otherwise.}
\end{cases}
\]
That is, both strategies induce agents to defect if the partner has
been involved in at least two interactions in which the outcome has
not been mutual cooperation. In addition, agents who follow $s^{1}$
defect also when observing the partner to be the sole defector in
a single interaction. 

Assume first that $G_{PD}$ is mild (i.e., $g\leq\frac{l+1}{2}$).
Fix a small probability of $0<\alpha<<\frac{1}{k}$. Let $s^{\alpha}\equiv\alpha$
be the strategy to defect with a probability of $\alpha$ regardless
of the signal. In what follows, we show that there exist a converging
sequence of commitment levels $\epsilon_{n}\rightarrow0$ and converging
sequences of steady states $\psi{}_{n}\equiv\left(\left\{ s^{1},s^{2}\right\} ,\left(q_{n},1-q_{n}\right),\theta_{n}\right)\rightarrow\left(\left\{ s^{1},s^{2}\right\} ,\left(q,1-q\right),\theta^{*}\equiv\overrightarrow{\left(c,c\right)}\right)$,
such that each steady state $\psi{}_{n}$ is a Nash equilibrium of
$\left(\left(G,k\right),\left(\left\{ s^{\alpha}\right\} ,1_{s^{\alpha}}\right),\epsilon_{n}\right)$. 
\begin{rem}
To simplify the notations below we focus on the non-regular distribution
of commitments $\left(\left\{ s^{\alpha}\right\} \right)$. Note,
however, that our arguments can be adapted in a straightforward way
to deal with the regular distribution of commitments $\left(\left\{ s^{\alpha-\delta},s^{\alpha+\delta}\right\} ,\left(\frac{1}{2},\frac{1}{2}\right)\right)$
for any $0<\delta<<\alpha$, in which the each committed agent defects
with a probability very close to $\alpha$. The same is also true
for the proof of Theorem \ref{thmtertiary-observation}below.
\end{rem}
Fix a sufficiently small $\epsilon_{n}<<1$. Let $\mu_{n}$ be the
probability that the partner defects conditional on (1) the agent
observing a single unilateral defection and $k-1$ mutual cooperations,
i.e., $\hat{m}=\left\{ \left(d,c\right),\overrightarrow{\left(c,c\right)}\right\} $
$\left(u\left(m\right)=d\left(m\right)=1\right)$, and (2) the partner
observing $k$ mutual cooperations. The parameter $q_{n}$ is defined
such that it balances the direct gain of defection (LHS of the equation)
and its indirect loss (RHS) for a normal agent who almost always cooperates:
\begin{equation}
\Pr\left(\hat{m}\right)\cdot\mu_{n}\cdot l+\left(1-\mu\right)\cdot g=\Pr\left(\hat{m}\right)\cdot\left(1-\mu_{n}\right)\cdot k\cdot q_{n}\cdot\left(l+1\right)+O\left(\epsilon_{n}\right)\,\,\,\Leftrightarrow\,\,\,q_{n}=\frac{\mu_{n}\cdot l+\left(1-\mu_{n}\right)\cdot g}{\left(1-\mu_{n}\right)\cdot k\cdot\left(l+1\right)}+O\left(\epsilon_{n}\right).\label{eq:q-mu-action-profiles}
\end{equation}
The equation is the same as in the case of observation of conflicts;
see Eq. (\ref{eq:mu-q-equation-conflict}) above. In particular, note
that the indirect cost of defection when the current partner cooperates
is only $O\left(\epsilon_{n}\right)$, because it influences only
the behavior of normal future partners if they observe an additional
interaction different from $\left(c,c\right)$ in the $k$ sampled
interactions, which happens only with a probability of $O\left(\epsilon_{n}\right)$.
Next, note that $\mu_{n}=\alpha\cdot Pr\left(s^{\alpha}|\left(d,c\right),\overrightarrow{\left(c,c\right)}\right)+O\left(\epsilon_{n}\right)$
because the only agents who follow $s^{\alpha}$ defect with positive
probability when observing $k$ mutual cooperations. Substituting
this in (\ref{eq:q-mu-action-profiles}) yields

\[
q_{n}=\frac{g+\alpha\cdot Pr\left(s^{\alpha}|\left(d,c\right),\overrightarrow{\left(c,c\right)}\right)\cdot\left(l-g\right)}{\left(1-\alpha\cdot Pr\left(s^{\alpha}|\left(d,c\right),\overrightarrow{\left(c,c\right)}\right)\right)\cdot k\cdot\left(l+1\right)}+O\left(\epsilon_{n}\right)=\frac{g}{k\cdot\left(l+1\right)}+O\left(\alpha\right)+O\left(\epsilon_{n}\right).
\]
 The mildness of the game ($g<\frac{l+1}{2}$) implies that $k\cdot q_{n}<0.5$. 

Let $p_{n}$ be the average probability with which the normal agents
defect when being matched with committed agents. When $\alpha<<\frac{1}{k}$,
the \emph{$s^{2}$}-agents rarely ($O\left(\alpha^{2}\right)$) defect
against the committed agents, because it is rare to observe these
committed agents defecting more than once. The \emph{$s^{1}$}-agents
defect against the committed agents with a probability of $k\cdot q_{n}\cdot\alpha+O\left(\alpha^{2}\right)+O\left(\epsilon_{n}\right)$
because each rare defection of the committed agents is observed with
a probability of $k\cdot q$ by $s^{1}$-agents. Since $\alpha,p_{n}<<1$,
bilateral defections are very rare ($O\left(\alpha^{2}\right)$).
This implies that $p_{n}=\alpha\cdot k\cdot q_{n}+O\left(\alpha^{2}\right)+O\left(\epsilon_{n}\right)<\frac{\alpha}{2}.$

Let $r_{n}$ be the probability that an \emph{$s^{1}$}-agent defects
against a fellow \emph{$s^{1}$}-agent. In each observed interaction,
the \emph{$s^{1}$} partner interacts with a committed (resp., \emph{$s^{1}$},
\emph{$s^{2}$}) opponent with a probability of $\epsilon_{n}$ (resp.,
$q_{n}$, 1-$q_{n}$) and the partner unilaterally defects with a
probability of $\alpha\cdot k\cdot q_{n}+O\left(\epsilon_{n}\right)+O\left(\alpha^{2}\right)$
(resp., $r_{n}+O\left(r_{n}^{2}\right)$, $O\left(\epsilon_{n}\cdot\alpha^{2}\right)$).
This implies that $r_{n}$ solves the following equation:
\[
r_{n}=k\cdot\left(\alpha\cdot q\cdot\delta_{n}+q\cdot r_{n}\right)+O\left(\epsilon_{n}^{2}\right)\,\,\Rightarrow\,\,r_{n}=\frac{\alpha\cdot k\cdot q_{n}}{1-k\cdot q_{n}}\cdot\epsilon_{n}+O\left(\epsilon_{n}^{2}+\alpha^{2}\cdot\epsilon_{n}\right)<0.5\cdot\alpha\cdot\epsilon_{n},
\]
where the latter inequality is because $k\cdot q_{n}<0.5$. The above
calculations show that the total frequency with which committed agents
unilaterally defect ($\alpha\cdot\epsilon_{n}$) is higher than the
total frequency with which normal agents unilaterally defect ($q_{n}+p_{n}\cdot\delta_{n}<\alpha\cdot\epsilon_{n}$).
This implies that the probability that an agent is committed, conditional
on his being the sole defector in an interaction, is higher than 50\%,
and that it is higher than this probability conditional on her being
the sole cooperator. Next, note that mutual defections between a committed
agent and an \emph{$s^{1}$-}agent have a frequency of $O\left(\epsilon_{n}\right)$,
while mutual defections between two committed agents (or two normal
agents) are very rare ($O\left(\epsilon_{n}^{2}\right)$), which implies
that the probability that the partner follows a committed strategy
conditional on the player observing mutual defection is 50\%$+O\left(\epsilon_{n}\right)$.
This implies that 
\[
Pr\left(s^{\alpha}|\left(d,c\right),\overrightarrow{\left(c,c\right)}\right)>\max\left(Pr\left(s^{\alpha}|\left(d,d\right),\overrightarrow{\left(c,c\right)}\right),Pr\left(s^{\alpha}|\left(c,d\right),\overrightarrow{\left(c,c\right)}\right)\right),
\]
and thus while both actions are best replies after the player observes
the signal $\left(\left(d,c\right),\overrightarrow{\left(c,c\right)}\right)$,
only cooperation is a best reply after the player observes $\left(\left(d,d\right),\overrightarrow{\left(c,c\right)}\right)$
and $\left(\left(c,d\right),\overrightarrow{\left(c,c\right)}\right)$.
Next note that conditional on a player observing a signal with at
most $k-2$ mutual cooperations, the partner is most likely to be
committed (because normal agents have two outcomes different from
mutual cooperation with a probability of only $O\left(\epsilon_{n}^{2}\right)$).
This implies that the normal agents play the unique best reply after
any signal other than $\left(\left(d,c\right),\overrightarrow{\left(c,c\right)}\right)$,
and thus any deviator who behaves differently in these cases will
be outperformed. 

Let $\chi_{n}$ be the probability that a random partner defects conditional
on both the agent and the partner observing signal $\left(\left(d,c\right),\overrightarrow{\left(c,c\right)}\right)$.
The definitions of strategies $s^{\alpha},s^{1}$, and $s^{2}$ immediately
imply that $\chi_{n}>\mu_{n}$, and analogous arguments to those presented
at the end of the proof of Theorem \ref{thm:cooperation-defensive-PDs}
show that deviators who defect with a probability strictly between
zero and one after observing $\left(\left(d,c\right),\overrightarrow{\left(c,c\right)}\right)$
are outperformed (because an agent's payoff is a strictly convex function
of the agent's defection probability when observing signal $\left(\left(d,c\right),\overrightarrow{\left(c,c\right)}\right)$).

Next assume that the $G_{PD}$ is acute. We have to show that cooperation
is not a perfect equilibrium action. Assume to the contrary that $\left(S^{*},\sigma^{*},\theta^{*}\equiv0\right)$
is a perfect equilibrium with respect to distribution of commitments
$\left(\mathcal{S^{C}},\lambda\right)$. Let $\psi_{n}=\left(S_{n}^{N},\sigma_{n},\theta_{n}\right)\rightarrow\left(S^{*},\sigma^{*},0\right)$
be a converging sequence of Nash equilibria in the converging sequence
of perturbed environments $\left(\left(G_{PD},k\right),\left(\mathcal{S^{C}},\lambda\right),\epsilon_{n}\right)$.
Analogous arguments to the proof of part 1(a) of Theorem \ref{thm:cooperation-defensive-PDs}
show that any perfect equilibrium that implements full cooperation
$\left(S^{*},\sigma^{*},\theta^{*}\equiv0\right)$ must satisfy (1)
$s_{\overrightarrow{\left(c,c\right)}}=c$ for each $s\in S^{*}$,
(2) if $d\left(m\right)\geq2$ then $s_{m}=d$ for each $s\in S^{*}$,
and (3) there are $s,s'\in S^{*}$ such that $s_{\left\{ \left(d,c\right),\overrightarrow{\left(c,c\right)}\right\} }\left(d\right)>0$
and $s_{\left\{ \left(d,c\right),\overrightarrow{\left(c,c\right)}\right\} }\left(d\right)<1$. 

Let $0<q_{n}<1$ be the average probability according to which a normal
agent defects when she observes $\left\{ \left(d,c\right),\overrightarrow{\left(c,c\right)}\right\} $.
By analogous arguments to those presented above (see Eq. (\ref{eq:q-mu-action-profiles})),
$q_{n}$ is an increasing function of $\mu_{n}$, and $q_{n}\left(\mu_{n}=0\right)=\frac{g}{k\cdot\left(l+1\right)}$.
The acuteness of the game implies that $k\cdot q_{n}>\frac{g}{\left(l+1\right)}>\frac{1}{2}$. 

Let $s_{\beta}\in S^{C}$ be a committed strategy that induces an
agent who follows it (called an $s_{\beta}$\emph{-agent}) to defect
with a probability of $\beta>0$ when he observes $\left(\overrightarrow{\left(c,c\right)}\right)$.
In what follows, we show that the presence of strategy $s_{\beta}$
induces the normal agents to unilaterally defect more often than $s_{\beta}$-agents.
Let $p_{n}$ be the average probability that normal agents defect
against $s^{\alpha}$-agents in state $\psi_{n}$. This probability
$p_{n}$ must solve the following inequality:
\begin{eqnarray}
1-p_{n} & \geq & \left(\left(1-\beta\right)\cdot\left(1-p_{n}\right)\right)^{k}+k\cdot\left(\left(1-\beta\right)\cdot\left(1-p_{n}\right)\right)^{k-1}\cdot\left(1-\left(1-\beta\right)\cdot\left(1-p_{n}\right)\right)\label{eq:1-p}\\
 &  & +\left(1-q_{n}\right)\cdot k\cdot\left(\left(1-\beta\right)\cdot\left(1-p_{n}\right)\right)^{k-1}\cdot\beta\cdot\left(1-p_{n}\right)+O\left(\epsilon_{n}\right).\nonumber 
\end{eqnarray}
The LHS of Eq. (\ref{eq:1-p}) is the average probability that normal
agents cooperate against $s_{\beta}$-agents (recall that normal agents
always defect when they observe at most $k-2$ mutual cooperations).
The normal agents cooperate with probability one (resp., at most one,
$q_{n}$) if they observe $\left(\overrightarrow{\left(c,c\right)}\right)$
(resp., $\left(\left(d,d\right),\overrightarrow{\left(c,c\right)}\right)$
or $\left(\left(c,d\right),\overrightarrow{\left(c,c\right)}\right)$
, $\left(\left(d,c\right),\overrightarrow{\left(c,c\right)}\right)$),
which happens with a probability of $\left(\left(1-\beta\right)\cdot\left(1-p_{n}\right)\right)^{k}$
(resp.,\\
 $k\cdot\left(\left(1-\beta\right)\cdot\left(1-p_{n}\right)\right)^{k-1}\cdot\left(1-\left(1-\beta\right)\cdot\left(1-p_{n}\right)\right)$,
$k\cdot\left(\left(1-\alpha\right)\cdot\left(1-p_{n}\right)\right)^{k-1}\cdot\alpha\cdot\left(1-p\right)$). 

Direct numerical analysis of Eq. (\ref{eq:1-p}) shows that the minimal
$p_{n}$ that solves this inequality (given that $q_{n}>\frac{1}{2\cdot k}$)
is greater than $\frac{\beta}{2-\beta}$ for any $\beta\in\left(0,1\right)$.
The total frequency of interactions in which the $s_{\beta}$-agents
unilaterally defect is $\beta\cdot\left(1-p_{n}\right)\cdot\epsilon_{n}\cdot\lambda\left(s_{\beta}\right)+O\left(\epsilon_{n}^{2}\right)$.
The total frequency of interactions in which normal agents unilaterally
defect against the $s_{\beta}$-agents is $p_{n}\cdot\left(1-\beta\right)\cdot\epsilon_{n}\cdot\lambda\left(s_{\beta}\right)+O\left(\epsilon_{n}^{2}\right)$.
Eq. (\ref{eq:q-mu-action-profiles}) shows that these unilateral defections
against $s_{\beta}$-agents induce the normal agents to unilaterally
defect among themselves with a total frequency of $\frac{p_{n}\cdot\left(1-\beta\right)\cdot\epsilon_{n}\cdot\lambda\left(s_{\beta}\right)}{1-k\cdot q_{n}}+O\left(\epsilon_{n}^{2}\right)>p_{n}\cdot\left(1-\beta\right)\cdot\epsilon_{n}\cdot\lambda\left(s_{\beta}\right)$.
Finally, note that $p_{n}>\frac{\beta}{2-\beta}$$\Leftrightarrow$
$2\cdot p_{n}\cdot\left(1-\beta\right)>\beta\cdot\left(1-p_{n}\right)$
implies that normal agents unilaterally defect (as the indirect result
of the presence of the $s_{\beta}$-agents) more often than $s_{\beta}$-agents. 

Next, observe that bilateral defections are most likely to occur in
interactions between normal and committed agents. This is because
the probability that both normal agents defect against each other
is only $O\left(\epsilon_{n}^{2}\right)$. Thus, when a player observes
bilateral defection the partner is more likely to be a committed agent
than when the player observes a unilateral defection by the partner.
This implies that all the normal agents defect with probability one
when they observe $\left(\left(d,d\right),\overrightarrow{\left(c,c\right)}\right)$
because in this case defection is the unique best reply. 

Let $w_{n}$ be the (average) probability that normal agents defect
when they observe $\left(\left(c,d\right),\overrightarrow{\left(c,c\right)}\right)$.
If $w_{n}<0.5$, then cooperation is the unique best reply for a normal
agent who faces a partner who is likely to defect (e.g., when the
normal agent observes fewer than $k-1$ mutual cooperations), and
so we get a contradiction. This is because defecting against a defector
yields a direct gain of $l$ and an indirect loss of at least $0.5\cdot k\cdot\left(l+1\right)\geq l+1>l$
(because this bilateral defection will be observed on average $k$
times, and in at least half of these cases it will induce the partner
to defect, whereas if the agent were cooperating, then he would have
induced the partner to cooperate).

Thus, $w_{n}\geq0.5$$\Rightarrow k\cdot w_{n}>1$. However, in this
case, an analogous argument to the one at the end of the proof of
Theorem \ref{thm:stable-cooperation-observing-conflicts} implies
that an arbitrarily small group of mutants who defect with small probability
will cause the incumbents to unilaterally defect with high probability,
and thus no focal post-entry population exists, which contradicts
the assumption that cooperation is neutrally stable.

\subsection{Proof of Theorem \ref{thmtertiary-observation} (Observing Actions
against Cooperation)}

The construction of the distribution of commitments $\left(\left\{ s^{\alpha}\right\} \right)$
(or the regular distribution of commitments $\left(\left\{ s^{\alpha-\delta},s^{\alpha+\delta}\right\} ,\left(\frac{1}{2},\frac{1}{2}\right)\right)$
for $0<\delta<<\alpha$) and of the perfect equilibrium $\left(\left\{ s^{1},s^{2}\right\} ,\left(q,1-q\right),\theta\equiv\overrightarrow{\left(c,c\right)}\right)$
and most of the arguments are the same as in the proof of Theorem
\ref{thm:stable-cooperation-observing-action-profiles}, and are omitted
for brevity. Fix $\epsilon_{n}$ sufficiently small. By the same arguments
as in the proof of Theorem of \ref{thm:stable-cooperation-observing-conflicts},
the value of $q_{n}$ that balances the payoffs of $s^{1}$ and $s^{2}$
satisfy $k\cdot q_{n}<1$ for any underlying Prisoner's Dilemma. 

Recall that $p_{n}$, the average probability with which the normal
agents defect when being matched with committed agents, satisfies
$p_{n}=\alpha\cdot k\cdot q_{n}+O\left(\alpha^{2}\right)+O\left(\epsilon_{n}\right)<\alpha.$
This implies that the probability that an agent is committed, conditional
on her being the sole defector in an interaction, is higher than 50\%,
conditional on her being the sole cooperator. Next, observe that $\alpha<<1$
implies that the probability $Pr\left(\left(d,d\right)\right)=O\left(p_{n}\cdot\alpha^{2}\right)\cdot O\left(\epsilon_{n}\right)<<Pr\left(\left(c,d\right)\right)=O\left(p_{n}\cdot\epsilon_{n}\cdot\alpha\right)$,
which implies that conditional on an agent observing the signal $\left\{ \left(*,d\right),\left(\overrightarrow{\left(c,c\right)}\right)\right\} $,
it is most likely that the partner has cooperated rather than defected
in the interaction in which $\left(*,d\right)$ has been observed.
This implies that $\Pr\left(s^{\alpha}|\left\{ \left(*,d\right),\left(\overrightarrow{\left(c,c\right)}\right)\right\} \right)<\Pr\left(s^{\alpha}|\left\{ \left(d,c\right),\left(\overrightarrow{\left(c,c\right)}\right)\right\} \right)$,
and given the value of $q_{n}$ for which both actions are best replies
conditional on observing signal $\left\{ \left(d,c\right),\left(\overrightarrow{\left(c,c\right)}\right)\right\} $,
cooperation is the unique best reply when observing either $\left\{ \left(*,d\right),\left(\overrightarrow{\left(c,c\right)}\right)\right\} $
or $\left\{ \left(\overrightarrow{\left(c,c\right)}\right)\right\} $,
while defection is the unique best reply when observing at most $k-2$
mutual cooperations. This implies that $\left(\left\{ s^{1},s^{2}\right\} ,\left(q,1-q\right),\theta\equiv0\right)$
is a perfect equilibrium (where $q$ is the limit of $q_{n}$ when
$\epsilon_{n}$ converges to zero.

\subsection{Proof of Theorem \ref{thmLrepeated-game} (Repeated Game)\label{subsec:Technical-Details-in-prof-repeated}}

\textbf{Part 1}: Assume that $g>l$ (i.e., an offensive game). Assume
to the contrary that there exist a sequence of Nash equilibria of
perturbed environments that converge to a perfect equilibrium that
induces full cooperation. The fact that the perfect equilibrium induces
full cooperation implies that in any sufficiently close Nash equilibrium
(i.e., for a sufficiently large $n)$:
\begin{enumerate}
\item normal agents cooperate with high probability when observing $k$
acts of cooperation;
\item when an agent is matched with a normal partner, the agent most of
the time observes $k$ acts of cooperation;
\item when a normal agent observes $k$ acts of cooperation, the partner
is most likely normal and he is going to cooperate with a probability
close to one;
\item when an agent observes $k$ acts of defection, the partner has a positive
(and non-negligible) probability of being a committed agent and defecting
in the current match. 
\end{enumerate}
In order for these facts to be be consistent with equilibrium it must
be the case that cooperation is a best reply against a partner who
is most likely to cooperate in the current match; i.e., the direct
gain from defecting, which is very close to $g$, has to be lower
than the future indirect loss, which is independent of the partner's
action. The inequality $g>l$ then implies that cooperation is the
unique best reply against a partner who is going to cooperate with
an expected probability that is not close to 1 (because the direct
gain from defecting is a mixed average of $l$ and $g$, which is
less than $g$). This, in turn, implies that all normal agents cooperate
with a probability of one when they observe $k$ acts of defection
(because, given such a signal, the partner has a positive and non-negligible
probability of being a committed agent and defecting in the current
match). Hence, a deviator who always defects outperforms the incumbent,
since she induces normal agents to cooperate against her, and obtains
the high payoff of $1+g$ in most rounds of the repeated game.

\textbf{Part 2:} Assume that $g\leq l$, $k\geq2$, and $\delta>\frac{l}{l+1}$.
Let $\gamma=\frac{l}{\delta\cdot\left(l+1\right)}\in\left(0,1\right)$.
Let $0<\underline{\alpha}<\overline{\alpha}<1$ be two probabilities
satisfying the condition that the ratio $\nicefrac{\overline{\alpha}}{\underline{\alpha}}$
is sufficiently large (as further specified below). Consider a homogeneous
group of committed agents who follow the following strategy $s^{C}$:
\begin{enumerate}
\item defect with probability $\overline{\alpha}$ if they either (1) defected
in the last round, or (2) defected at least twice in the last $k-1$
rounds; and
\item defect with probability $\underline{\alpha}$ otherwise.
\end{enumerate}
Consider the perturbed environment $\left(\left(G,k,\delta\right),\left(\left\{ s^{C}\right\} \right),\epsilon_{n}\right)$,
for a sufficiently small $\epsilon_{\text{n}}>0$. Consider a homogeneous
population of normal agents who play according to the following strategy
$s^{N}$: 
\begin{enumerate}
\item cooperate if the agent defected in any of the last $\min\left(t,k-1\right)$
rounds;
\item otherwise (i.e., the agent cooperated in all of the last $\min\left(t,k-1\right)$
rounds):
\begin{enumerate}
\item cooperate if the partner has never defected in the last $\min\left(t,k-1\right)$
rounds;
\item defect if the partner defected at least twice in the last $\min\left(t,k-1\right)$
rounds;
\item cooperate if the partner defected only once in the last $\min\left(t,k-1\right)$
rounds and did not defect in the last round; and
\item defect with probability $q_{t}$ if the partner defected only in the
last round, where $t$ is the current round, and the sequence $\left(q_{t}\right)_{t\geq1}$
is defined recursively below.
\end{enumerate}
\end{enumerate}
Let $q_{1}=\gamma$. The value of each $q_{t}$ for $t\geq2$ is determined
such that a normal agent is indifferent between defecting and cooperating
in round $t-1$ conditional on the events that (1) the agent did not
defect in any of the previous $k-1$ rounds, and (2) the agent observes
the signal $\left(c,...,c,d\right)$ (i.e., the partner defected in
the last round and cooperated in all of the previous observed interactions).
Here we are relying on the one-deviation principle; in the next period
the agent will have a track record $\left(c,c,c,...c,d\right)$, which
means that the agent should cooperate.

The gain from defecting in round $t-1$ is equal to $l\cdot\mu_{t-1}+g\cdot\left(1-\mu_{t-1}\right)$,
where $\mu_{t-1}$ is the probability that a random partner defects
conditional on the union of the two events above. Such a defection
induces an expected loss of $\delta\cdot\left(l+1\right)\cdot q_{t}+O\left(\epsilon\right)$
for the agent in the next round (with a probability of $\left(1-\epsilon\right)$
the partner in the next round is normal, and in this case he will
defect with probability $q_{t}$ instead of cooperating, which will
induce a loss of $\delta\left(l+1\right)$ for the agent. In the round
after that the agent will have a track record $\left(c,c,c,...c,d,c\right)$
which means that the agent should cooperate again. The partner, if
normal, will cooperate for sure with the agent. Thus, an agent is
indifferent between the two actions in round $t-1$ when observing
$\left(c,...,c,d\right)$ iff
\[
l\cdot\mu_{t-1}+g\cdot\left(1-\mu_{t-1}\right)=\delta\cdot\left(l+1\right)\cdot q_{t}+O\left(\epsilon\right)\Leftrightarrow q_{t}=\frac{l\cdot\mu_{t-1}+g\cdot\left(1-\mu_{t-1}\right)}{\delta\cdot\left(l+1\right)}+O\left(\epsilon\right).
\]

Observe that the $q_{t}$'s have a uniform bound strictly below one,
i.e., $\forall t\in\mathbb{N}$, $0<q_{t}<\gamma=\frac{l}{\delta\cdot\left(l+1\right)}<1.$
Let $\left(\beta_{t}\right)_{t\in\mathbb{N}}$ be the average probability
with which normal agents defect in round $t$. Observe that $\beta_{1}=0$,
and that $\beta_{t}$ can be bounded as follows for any $t\in\mathbb{N}$:
\[
\beta_{t}\leq q_{t}\cdot\beta_{t-1}+O\left(\epsilon\right)<\gamma\cdot\beta_{t-1}+O\left(\epsilon\right).
\]
This implies that $\beta_{t}$ is bounded from above by a converging
geometric sequence, and, thus, $\beta_{t}<\frac{O\left(\epsilon\right)}{1-\gamma}$
for each $t$. This implies that the population state $\left(s_{N},1_{s_{N}}\right)$
induces full cooperation in the limit $\epsilon\rightarrow0$. 

Let $Pr\left(S^{C}|\left(c,...,c,d\right),t\right)$ be the probability
that the partner is committed conditional on the agent observing signal
$\left(c,...,c,d\right)$ in round $t$. Let $Pr\left(\left(c,...,c,d\right),t|S^{C}\right)$
($Pr\left(\left(c,...,c,d\right),t|s_{N}\right)$) be the probability
that an agent observes the signal $\left(c,...,c,d\right)$ in round
$t$ conditional on the partner being committed (normal). Observe
that $Pr\left(\left(c,...,c,d\right),t|S^{C}\right)>\underline{\alpha}^{k}$
(because a committed agent plays each pure action with a probability
of at least $\underline{\alpha}$ in each round), and that $Pr\left(\left(c,...,c,d\right),t|S^{C}\right)<sup_{t}\beta_{t}<\frac{O\left(\epsilon\right)}{1-\gamma}$
(because the average probability in which a normal agent defects is
at most $sup_{t}\beta_{t}$. By using Bayes' rule we can give a uniform
minimal bound to $Pr\left(S^{C}|\left(c,...,c,d\right),t\right)$
as follows: 
\[
\frac{\epsilon\cdot\underline{\alpha}^{k}}{\epsilon\cdot\underline{\alpha}^{k}+\frac{O\left(\epsilon\right)}{1-\gamma}}<Pr\left(S^{C}|\left(c,...,c,d\right),t\right)<1.
\]

We assume that the ratio $\nicefrac{\overline{\alpha}}{\underline{\alpha}}$
is sufficiently large such that $\overline{\alpha}\cdot Pr\left(S^{C}|\left(c,...,c,d\right),t\right)>\underline{\alpha}$
in each round $t$. Recall the definition from above and then observe
that $\mu_{t-1}=Pr\left(S^{C}|\left(c,...,c,d\right),t\right)\cdot\bar{\alpha}\in\left(\underline{\alpha},\bar{\alpha}\right)$
for each round $t$. Recall that the probabilities $\left(q_{t}\right)_{t\in\mathbb{N}}$
have been defined such that each normal agent is indifferent between
the two actions when (1) she observes the signal $\left(c,...,c,d\right)$,
and (2) she did not defect in any of the previous $k-1$ rounds. Next
we show that the normal agents have strict preferences in all other
cases. Specifically, the fact that $g\leq l$ (resp., $g<l$) implies
that each normal agent:
\begin{enumerate}
\item strictly prefers to cooperate if she defected exactly once in the
last $k-1$ rounds. This is so because if the agent defects in the
current round it induces any normal opponent in the next round to
defect for sure (instead of cooperating). This implies that defection
in the current round induces an indirect loss of at least $\delta\cdot\left(l+1\right)>l>g$,
which is larger than the agent's direct gain from defection.\footnote{Private histories in which a normal agent has defected more than once
in the last $k-1$ rounds never happen on the equilibrium path. If
one wishes to turn the above-described equilibrium into a sequential
equilibrium (where agents best reply also off the equilibrium path),
then one needs to make a stronger assumption on $\delta$, namely,
that $\delta^{k-1}>\frac{l}{l+1}$. This is so because after the off-equilibrium
history in which an agent has defected in all the last $k-1$ rounds,
an additional defection in the current round $t$ induces a future
normal partner to defect instead of cooperating only in round $t+k-1$
(because normal partners will defect in rounds $t+1$,~...~,$t+k-2,$
regardless of the agent's behavior in round $t$).} 
\item weakly (resp., strictly) prefers to cooperate if she observes the
signal $\left(c,...,c\right)$; in this case, the partner is most
likely a normal agent who is going to cooperate, and the direct gain
from defecting ($g)$ is outweighed by the larger indirect loss in
the next round ($\delta\cdot\left(l+1\right)\cdot q_{t}+O\left(\epsilon\right)>g$).
\item weakly (resp., strictly) prefers to defect if (1) the partner defected
at least twice in the last $k$ rounds, and (2) the agent did not
defect in any of the last $k-1$ rounds; in this case the partner
is most likely to be a committed agent and to defect with a high probability
of $\overline{\alpha}>\mu_{t-1}$ in each round $t$ and, thus, defection
is the agent's unique best reply.
\item weakly (resp., strictly) prefers to cooperate if the partner defected
only once in the last $k$ rounds, and this defection did not happen
in the last round; in this case the probability that the partner is
going to defect in the current match is at most $\underline{\alpha}<\mu_{t-1}$
for each round $t$ and, thus, cooperation is the agent's unique best
reply. 
\end{enumerate}
This implies that the population state $\left(\left\{ s^{N}\right\} \right)$
is indeed a Nash equilibrium of the perturbed environment for a sufficiently
small $\epsilon$. 

\section{Cheap Talk and Equilibrium Selection (Online Publication)\label{subsec:Cheap-Talk}}

Appendix \ref{sec:Evolutionary-Stability} shows that both perfect
equilibrium outcomes, namely, cooperation and defection, satisfy the
refinement of evolutionary stability. In this section we discuss how
the stability analysis changes if one introduces pre-play ``cheap-talk''
communication in our setup.

For concreteness, we focus on observation of actions. As in the standard
setup of normal-form games (without observation of past actions),
the introduction of cheap talk induces different equilibrium selection
results, depending on whether or not deviators have unused signals
to use as secret handshakes (see, e.g., \citealp{robson1990efficiency,Schlag_1993_cheapWP,kim1995evolutionary}).
If one assumes that the set of cheap-talk signals is finite, and all
signals are costless, then cheap talk has little effect on the set
of perfect equilibrium outcomes (as any perfect equilibrium of the
game without cheap talk can be implemented as an equilibrium with
cheap talk in which the incumbents send all signals with positive
probability). 

In what follows we focus on a different case, in which there are slightly
costly signals that, due to their positive cost, are not used unless
they yield a benefit. In this setup our results should be adapted
as follows.
\begin{enumerate}
\item Offensive games: No stable state exists. Both defection and cooperation
are only ``quasi-stable''; the population state occasionally changes
between theses two states, based on the occurrence of rare random
experimentations. The argument is adapted from \citet{wiseman2001cooperation}.
\item Defensive games (and $k\geq2$): The introduction of cheap talk destabilizes
all non-efficient equilibria, leaving cooperation as the unique stable
outcome. The argument is adapted from \citet{robson1990efficiency}.
\end{enumerate}
In what follows we only briefly sketch the arguments for these results,
since a formal presentation would be very lengthy, and the contribution
is somewhat limited given that similar arguments have already been
presented in the literature.

Following \citet{wiseman2001cooperation}, we modify the environment
by endowing agents with the ability to send a slightly costly signal
$\phi$ (called the \emph{secret handshake}). An agent has to pay
a small cost $c$ either to send $\phi$ to her partner or to observe
whether the partner has sent $\phi$ to her. In addition, we still
assume that each agent observes $k\geq2$ past actions of the partner.
Let $\xi$ be the initial small frequency of a group of experimenting
agents (called\emph{ mutants}) who deviate jointly. We assume that
$O\left(\epsilon\right)\cdot O\left(\xi\right)<c<O\left(\xi\right)$,
i.e., that the small cost of the secret handshake is smaller than
the initial share of mutants, but larger than the product of the two
small shares of the mutants ($O\left(\xi\right)$) and the committed
agents ($O\left(\epsilon\right)$). To simplify the analysis we also
assume that the committed agents do not use the secret handshake

Consider a population that starts at the defection equilibrium, in
which all normal agents defect regardless of the observed actions
and do not use signal $\phi$. Consider a small group of $\xi$ mutants
(``cooperative handshakers'') who send the signal $\phi$, and cooperate
iff the partner has sent $\phi$ as well. These mutants outperform
the incumbents: they achieve $\xi$ additional points by cooperating
among themselves, which outweighs the cost of $2\cdot c$ for using
the secret handshake. Thus, assuming a payoff-monotonic selection
dynamics, the mutants take over the population and destabilize the
defective equilibrium. If the underlying game is offensive, then there
is no other candidate to be a stable population state. Thus, cooperation
can be sustained only until new mutants arrive (``defective handshakers'')
who use the secret handshake and always defect. These mutants outperform
the cooperative handshakers, and would take over the population. Finally,
a third group of mutants who always defect without using the secret
handshake can take the population back to the starting point. 

If the underlying game is defensive, then there is a sequence of mutants
who can take the population into the cooperative equilibrium characterized
in the main text. Specifically, the second group of mutants (the ones
after the cooperative handshakers) include agents who send only $\phi$,
but instead of incurring the small cost $c$ of observing the partner's
secret handshake, they base their behavior on the partner's observed
actions, namely, they play some combination of the strategies $s^{1}$
and $s^{2}$. This second group of mutants would take over the population
because the cost they save by not checking the secret handshake outweighs
the small loss of $O\left(\epsilon\right)$ incurred from not defecting
against committed partners. Finally, a third group of mutants who
do not send the secret handshake, and follow strategies $s^{1}$ and
$s^{2}$, can take over the population (by saving the cost of sending
$\phi$), and induce the perfect cooperative equilibrium of the main
text. This equilibrium remains stable also with the option of using
the secret handshake because (1) mutants who defect when observing
$m=0$ are outperformed due to similar arguments to those in the main
model, and (2) mutants who send the secret handshake, and always cooperate
when observing $\phi$ (also when $m>2$), are outperformed, as the
cost of the secret handshake $c$ outweighs the gain of $O\left(\xi\right)\cdot O\left(\epsilon\right)$.

\section{{\large{}Example: Equilibrium with Partial Cooperation (Online Publication)}\label{sec:Example-non-regular-perfect-partial-cooperation}}

The following example demonstrates the existence of a \emph{non-regular}
perfect equilibrium of an offensive Prisoner's Dilemma, in which players
cooperate with positive probability.
\begin{example}[Non-regular Perfect Equilibrium with Partial Cooperation]
\label{exa:non-regular-perfect-equilbirium-partial-ccoperation}Consider
the environment $\left(G_{O},1\right)$ where $G_{O}$ is an offensive
Prisoner's Dilemma game with $g=2.3,\,l=1.7$ (see Table \ref{tab:Prisoner-Dilemma-1-1}),
and each agent observes a single action sampled from the partner's
behavior. Let $s^{*}$ be the strategy that defects with probability
10\% after observing cooperation (i.e., $m=0)$ and defects with probability
$81.7\%$ (numerical values in this example are rounded to 0.1\%)
after observing a defection (i.e., $m=1$). Let $q^{*}$ denote the
average probability of defection in a homogeneous population of agents
who follow strategy $s^{*}$. The value of $q^{*}$ is calculated
as follows:
\begin{equation}
q^{*}=\left(1-q^{*}\right)\cdot10\%+q^{*}\cdot81.7\%\,\,\Rightarrow\,\,q^{*}=35.3\%.\label{eq:q-alpha-beta}
\end{equation}
Eq. (\ref{eq:q-alpha-beta}) holds because an agent defects in either
of the following exhaustive cases: (1) she observes cooperation (which
happens with a probability of $1-q^{*}$) and then she defects with
probability 10\%, or (2) she observes defection (which happens with
a probability of $q^{*}$) and then she defects with probability 81.7\%.
This implies that the unique consistent signal $\theta^{*}$ of a
homogeneous population in which all agents follow $s^{*}$ satisfies
$\theta^{*}\left(1\right)=35.3\%$ (i.e., agents defect in $35.3\%$
of the observed interactions).\\
Next, observe that an agent who follows strategy $s^{*}$ defects
with probability 
\[
p\left(q\right)=q\cdot81.7\%+\left(1-q\right)\cdot10\%
\]
 when being matched with a partner who defects with an average probability
of $q$. This implies that the payoff of a deviator (Alice) who defects
with an average probability of $q$ is
\[
\pi_{q}\left(\left(\left\{ s^{*}\right\} ,1_{s^{*}},\theta^{*}\right)\right)=q\cdot\left(1-p\left(q\right)\right)\cdot\left(1+g\right)+\left(1-q\right)\cdot p\left(q\right)\cdot\left(-l\right)+\left(1-q\right)\cdot\left(1-p\left(q\right)\right)\cdot1.
\]
This is because with a probability of $q\cdot\left(1-p\left(q\right)\right)$
only Alice defects, with a probability of $\left(1-q\right)\cdot p\left(q\right)$
only Alice cooperates, and with a probability of $\left(1-q\right)\cdot\left(1-p\left(q\right)\right)$
both players cooperate. By calculating the FOC one can show that $q=q^{*}=35.3\%$
is the probability of defection that uniquely maximizes the payoff
of a deviator. This implies that $\left(\left\{ s^{*}\right\} ,1_{s^{*}},35.3\%\right)$
is a Nash equilibrium of the (non-regular) perturbed environments
$\left(G,k,\left\{ s^{*}\right\} ,1_{s^{*}},\epsilon\right)$ for
any $\epsilon\in\left(0,q\right)$, which implies that $\left(\left\{ s^{*}\right\} ,1_{s^{*}},\theta^{*}\right)$
is a (non-regular) perfect equilibrium. \\
The above perfect equilibrium relies on a very particular set of commitment
strategies in which all committed agents happen to play the same strategy
as the normal agents. This cannot hold in a regular set of commitment
strategies, in which different commitment strategies defect with different
average probabilities. Given this regularity, it must be the case
that the conditional probability that the partner is going to defect
is higher after he observes a defection ($m=1$) than after he observes
a cooperation ($m=0$). This implies that a deviator (Alice) who defects
with a probability of $35.3\%$ regardless of the signal will strictly
outperform the incumbents. This is because the incumbents behave the
same against Alice (as she has the same average probability of defection
as the incumbents), while Alice defects with higher probability against
partners who are more likely to cooperate (i.e., after she observes
$m=0$), which implies that due to the offensiveness of the game (i.e.,
$g>l$), Alice achieves a strictly higher payoff than the incumbents.
\end{example}

\end{document}